\DeclareMathOperator*{\plim}{plim}
\DeclareMathOperator*{\argmin}{argmin}
\newtheorem{theorem}{Theorem}
\newtheorem{lemma}{Lemma}
\newtheorem{proposition}{Proposition}
\newtheorem{remark}{Remark}
\def\E{\mathbb{E}}
\begin{document}

\title{Weak Identification in Discrete Choice Models}
\author{
	David T. Frazier\footnote{
		Department of Econometrics and Business Statistics, Monash University, and the Australian Center for Excellence in Mathematics and Statistics (ACEMS)
		(\texttt{david.frazier@monash.edu}).}
	\hspace{0.3cm}
	Eric Renault\footnote{Department of Economics, University of Warwick.}
	\hspace{0.3cm}
	Lina Zhang\footnote{Department of Econometrics and Business Statistics, Monash University.}
	\hspace{0.3cm}
	Xueyan Zhao\footnote{Department of Econometrics and Business Statistics, Monash University.}
	\vspace{0.7cm}
}

\maketitle

\begin{abstract}
We study the impact of weak identification in discrete choice models, and provide insights into the determinants of identification strength in these models. Using these insights, we propose a novel test that can consistently detect weak identification in commonly applied discrete choice models, such as probit, logit, and many of their extensions. Furthermore, we demonstrate that when the null hypothesis of weak identification is rejected, Wald-based inference can be carried out using standard formulas and critical values. A Monte Carlo study compares our proposed testing approach against commonly applied weak identification tests. The results simultaneously demonstrate the good performance of our approach and the fundamental failure of using conventional weak identification tests for linear models in the discrete {choice} model context. Furthermore, we compare our approach against those commonly applied in the literature in two empirical examples: married women labor force participation, and US food aid and civil conflicts.
\end{abstract}
\hspace{6cm}

\noindent\textbf{Keywords:}{ Discrete Choice Models; Weak Instruments; Weak identification; Identification Testing}

\hspace{20cm}

\section{Introduction}
A prevalent aspect of econometric research concerns  
estimating the causal impact of 
some policy relevant treatment variable $y_{2}$ on an outcome variable of interest $y_{1}$. The outcome $y_{1}$ is often qualitative in nature, and the treatment $y_{2}$ is often endogenous when using observational data in empirical studies. For {example}, there is a growing body of research that studies the causal effect
of {certain} economic conditions on the incidence of civil conflict in developing countries. {In this context,} economic
conditions may be summarized by a state variable such as ``economic
growth" (see e.g. \citealp{miguel2004economic}) or 
by a policy tool {such as} US Food Aid (see \citealp{nunn2014us}).
In such settings, the most common modelling strategy is to characterize the
qualitative outcome variable $y_{1}$ {as a known function of a latent quantitative variable
$y_{1}^{\ast }$, and with $y_{1}^{\ast }$ driven by a regression equation:}
\begin{equation}
	y_{1i}^{\ast }=\alpha y_{2i}+x_{i}^{\prime }\beta +u_{i},\;i=1,\dots,n,
	\label{struct1}
\end{equation}
{where $y_{2i}$ denotes the (scalar) variable whose causal impact is of interest}, $x_{i}$\ denotes a vector of $k_{x}$ exogenous variables and, for the sake of
expositional simplicity, {we consider $i=1,2,...,n$ as indicating independent and identically distributed (i.i.d.) cross-sectional realizations of the respective random variables.}\footnote{In the introduction, we use the terminology  ``exogenous'' to refer to the explanatory variables $x_i$ and to the instrumental variables $z_i$. In Section \ref{sec:cfa} we define, following \cite{newey1999nonparametric}, a precise concept of control variables that is 
	related, but not equivalent, to the common concept of exogeneity.} {We restrict our attention to settings where the relationship between the unobservable $y_{1i}^{\ast}$ and the observable $y_{1i}$ is given by a threshold crossing mechanism and conventionally specified as follows:}\footnote{At the cost of more involved notations, the methodology developed in this paper can easily be extended to a wide variety of multinomial models, such as, for instance, ordered probit models. To some extent, the binary case considered here is the most extreme case of information loss with respect to the observability of the latent variable.} 
\begin{equation*}
	y_{1i}=1[y_{1i}^{\ast }>0].
\end{equation*}

{The causal analysis of interest is conducted through} statistical inference on the true unknown value of the causal parameter $\alpha $ that must be carefully defined in order to account {for the (possible) presence of simultaneity. {However, more often than not, the treatment variable $y_{2}$ is not exogenous, which means that the structural model (\ref{struct1}) {can} not be interpreted as a model for the
conditional expectation of $y_{1}^{\ast }$\ given $y_{2}$ and $x$. For this reason, identification of the structural parameters in \eqref{struct1}, {and in particular the causal effect $\alpha$,} requires a set of valid (i.e., exogenous) instrumental variables (hereafter, IVs) denoted throughout by $z$.}

Critically, identification of the causal effect relies on the relevance of the underlying instruments to the treatment variable, i.e., the ``strength" of the IVs. {The consequences and detection of weak IVs has been extensively studied in} linear models, but it is currently unclear how the instrument strength in binary models affects identification of $\alpha$ and therefore any causal interpretation we may obtain in a given analysis.  


To illustrate this point, consider the concrete example given by \citet{nunn2014us} for estimating the impact of US food aid on the incidence of civil conflicts. Let $y_{2i}$ denote the amount of US food aid to
	country $i$, and assume we are interested in analyzing if $y_{2i}$ 
	has a causal impact on the probability of civil conflict, with the incidence of conflict denoted by a binary variable $y_{1i}$. In this setting, one must be concerned about the existence of reverse
	causality (``Do countries receive US aid precisely because they are not doing
	well?") or common cause (``Could US strategic objectives be a common cause for both
	conflict and food aid receipts?'') regarding these two variables, {which leads \cite{nunn2014us} to use lagged US wheat production as an IV to identify the causal impact of US food aid.} Whilst \citet{nunn2014us} consider various versions of linear probability and hazard models involving different definitions for the binary outcome of civil conflict, {across the various specifications the only measures of identification strength used by \citet{nunn2014us} to assess the validity of their conclusions are those explicitly designed for \textit{linear models}, such as the Kleibergen-Paap F-statistics from the first stage regression (\citealp{kleibergen2006generalized}), which are not statistically valid in either binary or hazard models.} 

The goal of this paper is to understand, characterize, and quantify the concept of identification strength as it pertains to discrete choice models. We make three primary contributions. First, we give a novel characterization of identification strength in endogenous discrete choice models which demonstrates that identification can be significantly impacted by factors other than the linear correlation between the instruments {and the endogenous variables.} Our second contribution is to use this characterization of identification strength to propose a consistent test for the null hypothesis that ``identification is so weak that point estimators are inconsistent," while under the alternative consistent estimation is warranted. Our final contribution is to demonstrate that, under the alternative, we can carry out Wald-based inference in the standard manner.

We now discuss these contributions in more detail, and place them into the broader literature on weak identification.

\subsection*{Testing Identification Strength: Existing Literature and Contributions}


Since the analysis of \citet{staiger1997instrumental}, practitioners have used the well-regarded ``rule-of-thumb" to measure instrument strength in the case where $y_{1i}^\ast$ is observed. The
magnitude of the $F$-statistic from the reduced form regression equation is arguably the most common measure for determining instrument strength in the linear regression model. Subsequent to the development of the rule-of-thumb, several influential refinements of this measure, and indeed the very concept of weak instruments in the linear model, have been put forward. \citet{stock2005testing} provide a quantitative definition of weak instruments in the
linear model, and use this definition
to propose a formal test for instrument weakness. While the approach of \citet{stock2005testing} relies on conditionally homoskedastic and serially uncorrelated regression
errors, an extension of the \citet{stock2005testing} testing strategy to
heteroskedastic and serially correlated errors is devised in \citet{olea2013robust}. 

However, when one moves to general nonlinear models, the impact of instrument weakness on the resulting estimates is more difficult to ascertain. As presented in \citet{antoine2009efficient,antoine2012efficient}, and following the work of \citet{hahn2002discontinuities} and \citet{caner2009testing}, there can exist a range of identification strengths in nonlinear models, between the extreme
cases of weak identification (when estimators are not consistent) and
strong identification (when estimators are consistent and asymptotically
normal at the $n^{1/2}$ rate). Indeed, these authors have shown that generalized method of moments (GMM) estimators can be consistent at a rate slower than the canonical rate of $n^{1/2}$,
but only in the case of a convergence rate strictly larger than $n^{1/4}$ is standard
inference based on the normal distribution approximation warranted. The key issue is that, when convergence is too slow and the
model is nonlinear, second-order terms in Taylor expansions, which govern the behavior of the estimator, may not be
negligible 
relative to the first-order terms, so that standard asymptotic
inference may no longer be valid. Such slow rates of convergence have also
been documented in the case of many weak instruments (see \citealp{newey2009generalized} and references therein) while a general study of nearly
strong instruments is available in \citet{andrews2012estimation}.

Using this characterization of varying identification strength, \cite{antoine2017testing} have devised a testing strategy that is capable of detecting (certain levels of) instrument strength in nonlinear models estimated by GMM.
The proposed test, dubbed the
distorted J-test (DJ test), is based on computing the GMM J-test statistic at a
perturbed value of the continuously updated GMM (CUGMM) estimator. The logic behind the test is that, if identification is truly weak, a small perturbation of the moments within the J-statistic will not significantly alter its value, while if identification is not weak this perturbation will result in a significant increase in the value of the J-statistic. Similar to other inference strategies robust to weak identification, the approach explicitly relies on the nature of the CUGMM objective function, which, as originally pointed out by \cite{stock2000gmm}, automatically controls the behaviour of the GMM objective function under weak identification.

Interestingly, \citet{antoine2017testing} have
demonstrated that their DJ test is akin to the standard rule-of-thumb when the
model is linear and homoskedastic. In contrast, they stress (see also
\citealp{windmeijer2019two} for related work in the context of clustering) that this
DJ test differs from standard ``robustified'' versions of the rule-of-thumb in
case of a heteroskedastic linear model. We note, in particular, that when
using linear probability models, one is faced (besides the well-known
criticisms of this approach) with a severely heteroskedastic linear model. 

Herein, we adapt the general testing strategy of \cite{antoine2017testing} to the case of discrete choice models and construct a consistent test for the null hypothesis that the instruments are too weak to allow consistent point estimation. Following the nomenclature of \cite{antoine2017testing}, we also refer to this test as a distorted J-test (DJ test) in this binary model context.  Similar to \cite{antoine2017testing}, we
demonstrate that our DJ test can be interpreted as a natural ``generalized rule-of-thumb" in the
context of discrete choice models, in the sense that {this test appropriately modifies the standard approach to account for both heteroskedasticity and non-linearity.
	
	We compare the performance of this test with the aforementioned existing} approaches both through Monte Carlo experiments and the analyses of two empirical examples. Monte Carlo results show that our DJ test, albeit conservative, {has respectable power. However, the crucial feature of this approach is its ability to discern} that the underlying estimator may not be reliable, while in contrast, the standard rule-of-thumb, because it overlooks information lost due to the nonlinearity of the model, will severely over-reject the null of weak identification. When applied to the two examples with real data, our DJ test is able to unambiguously determine when the null of weak identification should be rejected (as in the textbook
example of the causal effect of education of married women on their labor
force participation, with strong instruments like parents education), while it rightly questions the use of {standard inference approaches} when identification appears weak, as in the second empirical example. In
particular, contrary to the naive rule-of-thumb and Stock and Yogo test results, the DJ test casts some
doubt on the strength of the IV used in \citet{nunn2014us} and thus on the consistency of the 
estimated negative effect on war offset and the conclusion that food aid may prolong the duration of conflict.

In addition to the development of our DJ test, this paper also reinforces the asymptotic theory developed in \cite{antoine2009efficient,antoine2012efficient} regarding inference with nearly-strong instruments. By characterizing the strength of instruments in terms of a drifting data generating process, a la \cite{staiger1997instrumental} and \cite{stock2000gmm}, we demonstrate that once the null hypothesis of estimator inconsistency has been rejected, Wald-based inference can be performed as normal, up to the effect of pretesting.\footnote{For simplicity, and following \cite{antoine2017testing}, we choose to overlook the effect of pretesting on the resulting inferences in this current work.} This result is in stark contrast to the existing results for general nonlinear models under weak identification, where it has been shown that standard inference is only warranted once the rate of convergence is strictly larger than $n^{1/4}$. {The ability to perform standard inference in this setting stems from the fact that discrete choice models, while nonlinear, are built from
latent linear models, which ensures that they are close enough to linear models to permit} standard inference once the underlying estimator is consistent.
{While the convergence rate of the resulting estimator may be very slow, the studentization performed in computing Wald test statistics make their behavior consistent with the standard critical values.} In short, if our DJ test rejects the null of estimator inconsistency (which will be accomplished asymptotically with probability one under
the alternative), the practitioner can safely apply standard inference
procedures.

In this respect, our recommendation remains true to the widespread practice
of a two-stage decision rule: a pretest for weak IV followed by standard
inference when the null of weak identification is rejected. Of course, an
alternative would be to use more computationally demanding inference
strategies that are robust to weak identification. The robust approach
proposed by \citet{kleibergen2005testing} has been extended by \citet{magnusson2010inference} to the
context of limited dependent variable models. More generally, while the
existence of weak IV is a common phenomena, there is little theoretical
evidence regarding the properties of GMM estimators in endogenous discrete
choice models. Using Monte Carlo simulations, \citet{dufour2013weak}
demonstrate the poor behavior of Wald and Likelihood Ratio tests in the presence of weak
instruments. \cite{finlay2009implementing} analyze the Wald test
in probit models with weak instruments, and find that the test can significantly
over-reject the null hypothesis.

We note that the development of a consistent test for weak instrument  in discrete choice
models is particularly important since the similarity between linear
models and common discrete choice models has led
researcher to apply tests that are appropriate for linear models in this
nonlinear context. In particular, it is relatively common to see researchers apply the
rule-of-thumb developed for the linear model to
detect the presence of weak instruments in discrete choice models: see, e.g.,
\citet{miguel2004economic}, \citet{arendt2005does}, \citet{mckenzie2011can}, \citet{cawley2012medical}, \citet{block2013education} and \citet{goto2016cartel}.
However, the above studies do not question the validity of the
rule-or-thumb when it is applied in discrete choice models. Other researchers prefer
to abandon the discrete choice framework in favor of the linear probability
model; see, e.g., \citet{lochner2004effect}, \citet{powell2005importance}, \citet{kinda2010investment}, \citet{ruseski2014sport}. Besides the fact that they are heavily
heteroskedastic, linear probability models are by definition misspecified.
Since our DJ test is based on a distortion of the standard J-test statistic for
misspecification, it should not be used in the context of
misspecified moment models.

The remainder of the paper is organized as follows.

Section 2 introduces our model setup and assumptions. The key maintained
assumption is the existence of a control function, in which the conditional
probability distribution of the structural error term, given all the
variables in the reduced form regression,
coincides with the conditional distribution of the structural error term conditional on the reduced form error term. The control function approach for probit with endogeneity
has been pioneered by \citet{rivers1988limited} and led them to put forward a
{two-stage conditional maximum likelihood (2SCML) approach, and \cite{blundell2004endogeneity} propose a nonparametric extension that does not require certain of the parametric assumptions underlying the 2SCML approach.} In this section, we note that a GMM
framework allows us to obtain asymptotically equivalent estimators for the
structural parameters without necessarily resorting to a two-stage approach.
Moreover, we show that our GMM approach is also versatile enough to encompass the
Quasi-LIML approach of \citet{wooldridge2014quasi}.

In Section 3, we present our DJ test and prove its asymptotic properties: size control (under the null of weak identification)
and consistency (under the alternative). We further demonstrate that as long as the estimators are consistent (i.e., under the alternative to the null hypothesis of weak identification), standard Wald-style inference can be applied. This stands in contrast to the
general case of identification strength for nonlinear models considered in \citet{antoine2009efficient,antoine2012efficient} and \cite{andrews2014gmm}, where it is shown that in nonlinear models standard inference approaches are warranted only when the rate of convergence is faster than $n^{1/4}$. Lastly, we demonstrate that, in the context of a
discrete choice model, the DJ test can be interpreted as a generalized
rule-of-thumb that accounts for the nonlinear nature of the probit model.

Monte Carlo experiments in Section 4 compare the finite-sample properties of
our proposed test as well as the performance of other weak IV tests. Section 5 applies our test in two empirical examples: married women labor force participation (\citealp{wooldridge2010econometric}), and US
food aid and civil conflicts (\citealp{nunn2014us}). Section 6 concludes.

\section{General Framework}
{\cite{blundell2004endogeneity} propose  a control function (hereafter, CF) approach to conduct inference on the structural parameters of endogenous binary choice models. In this and the next section, we examine the impact of weak instruments on such a CF approach to inference. However, we first demonstrate the general point that a CF approach allows us to see} both the 2SCML of \cite{rivers1988limited} and the Quasi-LIML
approach of \cite{wooldridge2014quasi} as particular cases of a class of
GMM estimators, which we discuss in Section \ref{sec:2.2}. While these GMM
estimators can always be characterized by a one-step minimization problem, {using similar arguments to those in Section 6 of \cite{newey1994large},} we can also interpret the
estimator of the structural parameters as a two-step estimator, {whereby a preliminary plug-in estimator (obtained from a reduced form regression equation) is used within the moments. After establishing the general framework, in Section \ref{sec:2.3} we then sketch the weak IV issue in the context of probit models.}
	
\subsection{Model and Control Function Approach}\label{sec:cfa}
\cite{newey1999nonparametric} suggest that the key for a
CF approach is to start from a triangular simultaneous
equations model. In the context of endogenous binary choice models, this entails specifying structural and reduced form regression equations, and the mechanism generating the binary responses.

{The structural equation characterizes the response of an unobservable} endogenous
variable $y_{1i}^{\ast }$, {conditional on a scalar-valued endogenous variable $y_{2i}$ and a $k_x$-dimensional vector of explanatory variables $x_{i}$}, as the sum of an {unknown structural} function $g\left(
y_{2i},x_{i}\right) $ and a structural error term $u_{i}$:\footnote{While \citet{imbens2009identification} propose an even more general structural model
	where the error term $u_{i}$ may not be additively separable at the cost of
	more restrictive independence assumptions, such an extension is beyond
	the scope of this paper.}%
\begin{equation}
y_{1i}^{\ast }=g\left( y_{2i},x_{i}\right) +u_{i},\;\E\left[ u_{i}\right] =0.
\label{struct}
\end{equation}{For sake of expositional simplicity, we will
maintain the following linear specification for the structural function }
\begin{equation*}
g\left( y_{2i},x_{i}\right) =\alpha y_{2i}+x_{i}^{\prime }\beta ,
\end{equation*}
but we note that the analysis remains applicable to any situation where $g\left(
y_{2i},x_{i}\right) $ is a parametric function of $\left( y_{2i},x_{i}\right) $; the case of nonparametric $g(\cdot)$ is beyond the scope of this current paper, and is left for future research.
{Our primary focus of interest is the case where only the sign of the quantitative structural variable $y_{1i}^{\ast }$ is observable, which yields the structural equation defining the observed binary outcome $y_{1i}$:}\footnote{The binary choice model allows
	us to address the issue of weak identification in the case of maximum
	information loss going from the quantitative latent variable $y_{1i}^{\ast }
	$ to the observed variable $y_{1i}$. However, we note that the general methodology developed in
	this paper would be similarly relevant for any observation scheme that would
	define $y_{1i}$ as a known function of $y_{1i}^{\ast }$ and $x_{i}$\ (see
	e.g Tobit model, Gompit model, disequilibrium model, etc.).}
\begin{equation*}
y_{1i}=1[ y_{1i}^{\ast }>0].
\end{equation*}

A reduced form, or first stage, regression equation relates the endogenous
explanatory variable $y_{2i}$\ {to a $k_z$-dimensional vector of valid instrumental variables, $z_i$,
and the explanatory variables $x_{i}$}:
\begin{equation}
y_{2i}=\pi \left( x_{i},z_{i}\right) +v_{i},\;\E[v_{i}\mid x_{i},z_{i}
] =0  \label{first}.
\end{equation}
\begin{remark}
\normalfont 	
While we have chosen {to view the reduced form}
regression equation (\ref{first}) as the specification of a conditional
expectation, we could alternatively follow the quasi-LIML estimation
approach of \cite{wooldridge2014quasi}. In his approach, the
reduced form regression equation is only required to be a linear projection
of $y_{2i}$ onto $x_{i}$ and $z_{i}$. We will always assume that $%
x_{i}$ includes a constant, so that the reduced form error term $v_{i}$ has
a zero mean. That is, instead of (\ref{first}), we could have assumed
\begin{equation}
y_{2i}=x_{i}^{\prime }\pi +z_{i}^{\prime }\xi +v_{i},\;\E[ v_{i}] =0
\label{sec},
\text{ with }
\text{Cov}\left( \left[
\begin{array}{c}
x_{i} \\
z_{i}%
\end{array}%
\right] ,v_{i}\right) =0 .
\end{equation}
\end{remark}
\begin{remark}
\normalfont
As noted by \cite{blundell2004endogeneity}, the reduced form error term $v_{i}$
often appears to be conditionally heteroskedastic. Taking this possibility
into account will allow us to devise more efficient estimators when the
reduced form error term is deduced from a conditional expectation rather
then from only a linear projection. We will actually combine the advantages
of both approaches (\ref{first}) and (\ref{sec}) by assuming
that:%
\begin{equation}
y_{2i}=x_{i}^{\prime }\pi +z_{i}^{\prime }\xi +v_{i},\;\E[v_{i}\left\vert
x_{i},z_{i}\right] =0  \label{firstlin}
\end{equation}
{However, it must be acknowledged that the linearity assumption for the
conditional expectation is restrictive, and prevents us from considering cases} where the endogenous explanatory variable $y_{2i}$ is itself
qualitative.\footnote{{We also note that, while \citet{blundell2004endogeneity} propose a nonparametric
	estimator of the possibly nonlinear regression function $\pi \left(
	x_{i},z_{i}\right) $, a given nonlinear parametric form of this regression
	function would not result either in a significant change in our proposed methodology.}}
\end{remark}

{As stressed by \cite{newey1999nonparametric}, the CF approach does not assume that $%
x_{i} $ and $z_{i}$ are valid instruments, in that the approach does not require}
\begin{equation}
\E[u_{i}\left\vert x_{i},z_{i}\right] =0  \label{exo},
\end{equation}but instead only that
\begin{equation}
\E[u_{i}\left\vert v_{i},x_{i},z_{i}\right] =\E[u_{i}\left\vert v_{i}\right]
\label{control}.
\end{equation}
Moreover, it is worth realizing that {neither equation (\ref{exo}) or equation (%
\ref{control}) implies the other.} While we will {eventually
maintain a stronger version of equation (\ref{control})}, i.e., $u_i$ conditionally independent of $x_i,z_i$ given $v_i$, there is no
reason to believe that $v_{i}$ is itself independent of $x_{i},z_{i}$ ,
which jointly with the former conditional independence would be tantamount
to joint independence of $(u_{i},v_{i})$ and $(x_{i},z_{i})$, and would in turn imply (\ref{exo}). {In particular, such independence would rule out the possibility of conditional heteroskedasticity for the error term
$v_{i}$ in the reduced form regression equation (\ref{firstlin}).}

As clearly defined by \citet{wooldridge2015control}, ``a control function is a variable
that, when added to a regression, renders a policy variable appropriately
exogenous." {Typically, the restriction in (\ref{control})} allows us to rewrite equation (\ref{struct})
as
\begin{equation}
y_{1i}^{\ast }=g\left( y_{2i},x_{i}\right) +\E[u_{i}\left\vert v_{i}\right]
+\varepsilon_{i}  \label{controlreg},
\end{equation}
where
\[
\varepsilon_{i}=y_{1i}^{\ast }-\E\left[y_{1i}^{\ast }\mid v_{i},x_{i},z_{i}
\right] =u_{i}-\E[u_{i}\left\vert v_{i}\right],
\]
{which ensures, by definition, that the policy variable is appropriately exogenous; i.e.,}
\[
\E[\varepsilon_{i}\left\vert y_{2i},x_{i},v_{i}\right] =0 .
\]

In their seminal work, {\cite{rivers1988limited} note that the only assumption needed to obtain valid inference in the probit model} is that the
conditional distribution of $u_{i}$ given $v_{i}$ is
normal with a mean that is linear in $v_{i}$ and with a fixed variance. While this condition is satisfied if $(u_{i},v_{i})$ is jointly normal, joint
normality is not required in general. Similarly, for general discrete choice models, a CF approach can be constructed by assuming that $\E\left[u_{i}\mid v_{i}%
\right] $ is linear in $v_{i}$\ and that $\varepsilon_{i}=u_{i}-\E\left[u_{i}\mid v_{i}%
\right] $ \ is independent of $v_{i}$, along with an assumption that $\varepsilon_i$ has a known continuous cumulative distribution function denoted by $\Phi $. We assume that this probability distribution is symmetric, i.e., $\Phi(\varepsilon)=1-\Phi(-\varepsilon)$, which, together with \eqref{control}, allows us to write
\begin{flalign*}
	\Pr \left[y_{1i}=1 \mid v_{i},x_{i},z_{i}\right] &=\Pr \left\{\varepsilon
	_{i}>-g\left( y_{2i},x_{i}\right) -\E\left[u_{i}\mid v_{i}\right]
	\mid v_{i},x_{i},z_{i}\right\} \\
	&=\Phi \left\{ g\left( y_{2i},x_{i}\right) +\E[u_{i}\left\vert v_{i}\right] %
	\right\}.
\end{flalign*}

We now collect the maintained assumptions on the general model in \eqref{struct}-\eqref{first}.

\medskip

\noindent\textbf{Assumption 1:} The following conditions are satisfied. \medskip

\noindent\textbf{(A.1)} (\textit{Observation scheme}) The observed data $\left\{
s_{i}\right\} _{i=1}^{n}=\left\{ (y_{1i},y_{2i},x_{i}',z_{i}')'\right\}
_{i=1}^{n}$ are from an i.i.d. sample and for some $\kappa >0,\E\left[ \left\Vert
s_{i}\right\Vert ^{2+\kappa }\right] <\infty .$
\medskip

\noindent(\textbf{A.2}) (\textit{Reduced form regression}): $
y_{2i}=\pi \left( x_{i},z_{i}\right) +v_{i},$ and $\E[v_{i}\mid x_{i},z_{i} ] =0.$
\medskip

\noindent(\textbf{A.3}) (\textit{Structural equation}): (i) $\E[u_{i}\left\vert v_{i},x_{i},z_{i}\right] =\E[u_{i}\left\vert v_{i}\right]$; (ii) $\Phi$ is a known cumulative distribution
function, twice continuously differentiable and strictly increasing, such that $\Phi (\varepsilon)=1-\Phi (-\varepsilon)$; and (iii) for some unknown parameter $\tilde{\rho}\in\mathbb{R}$, $$
\Pr [y_{1i}=1\mid v_{i},x_{i},z_{i}] =\Phi [ g\left(
y_{2i},x_{i}\right) +\tilde{\rho}v_{i}].$$

\noindent(\textbf{A.4}) (\textit{Linearity}): The unknown functions $g(\cdot,\cdot)$ and $\pi(\cdot,\cdot)$ are linear:
\begin{itemize}
	\item[(i)] For unknown parameters $\alpha\in\mathbb{R}$ and $\beta\in\mathbb{R}^{k_x}$, $g\left( y_{2i},x_{i}\right) =\alpha y_{2i}+x_{i}^{\prime }\beta$;
	\item[(ii)] For unknown parameters $\pi\in\mathbb{R}^{k_x}$ and $\xi\in\mathbb{R}^{k_z}$, $\pi \left( x_{i},z_{i}\right)=x_{i}^{\prime }\pi +z_{i}^{\prime }\xi $.
\end{itemize}
\medskip

\noindent(\textbf{A.5})(\textit{Parameters}) The unknown parameters $\theta=(\theta_1',\theta_2')'$, where $\theta_1:=(\tilde{\rho},\alpha,\beta')'$ and $\theta_2:=(\pi',\xi')'$, are of dimension $p=2+2k_x+k_z$. We have $\theta_1\in\Theta_1\subset\mathbb{R}^{k_x+2}$, $\theta_2\in\Theta_2\subset\mathbb{R}^{k_x+k_z}$, $\Theta:=\Theta_1\times\Theta_2$ and $\Theta$ is compact.  For $\theta^0$ denoting the unknown true value of $\theta$, we have $\theta^0\in\text{Int}(\Theta)$.

\medskip

{As already mentioned, the linearity in \textbf{Assumption (A.4)} is innocuous and what follows can be extended to settings where $g\left(
y_{2i},x_{i}\right) $ has any parametric single-index structure and to cases where $\pi \left( x_{i},z_{i}\right) $ has any parametric form.} In the more general nonparametric setting, \cite{newey1999nonparametric} demonstrate that identification by {CF of the structural model is tantamount} to assuming that there is no
functional relationship between the random variables $y_{2i},x_{i}$ and $v_{i}$
(see \citealt{newey1999nonparametric} for a precise definition of this concept). With a
linear structural function $g\left( y_{2i},x_{i}\right) $, identification
of the {structural parameter $\alpha $\ is equivalent to assuming that $y_{2i}$ is not a linear combination of $x_{i}$ and $%
v_{i}$, meaning that the reduced form regression depends on $z_{i}$, i.e., $\xi \neq 0$.}

To give a more concise treatment, throughout the remainder we restrict our
analysis to the case where $\Phi $ is the CDF of the standard normal
distribution and refer to the model:%
\begin{equation*}
\Pr [y_{1i}=1\left\vert v_{i},x_{i},z_{i}\right] =\Phi \left[ \alpha
y_{2i}+x_{i}^{\prime }\beta +\tilde{\rho}v_{i}\right]
\end{equation*}
as a probit model. Since only the sign of the latent
variable $y_{1i}^{\ast }$ is observed, the probit model generally requires the normalization condition $\text{Var}(u_{i})=1$. {However, it is without loss of generality to instead consider the normalization condition}
\begin{equation*}
\text{Var}[u_{i}\left\vert v_{i}\right] =\text{Var}\left(\varepsilon_{i}\right) =1.
\end{equation*}
If $\rho $ denotes the linear correlation coefficient between $%
u_{i}$ and $v_{i}$, the above normalization ensures that
\[
\text{Var}\left( u_{i}\right) =\tilde{\rho}^{2}\text{Var}\left( v_{i}\right) +1=\rho
^{2}\text{Var}\left( u_{i}\right) +1,
\]where  $\sigma _{v}=\sqrt{\text{Var}\left( v_{i}\right) }$,
\[
\text{Var}\left( u_{i}\right) =\frac{1}{1-\rho ^{2}},\;\tilde{\rho}=\frac{\rho }{%
	\sigma _{v}\sqrt{1-\rho ^{2}}} ,
\]and where we have that $\tilde{\rho}$\ is monotonic in $\rho $. Of course, the {simultaneity/endogeneity} problem is in evidence if and only if $\rho \neq 0$ or
equivalently $\tilde{\rho}\neq 0$.

\subsection{Estimating Equations}\label{sec:2.2}
Throughout the remainder, we partition the parameter vector as $\theta=(\theta_1',\theta_2')'$, where
\begin{equation*}
\theta _{1} =\left( \tilde{\rho},\alpha ,\beta ^{\prime }\right) ^{\prime
},\quad\theta _{2}=\left( \pi ^{\prime },\xi ^{\prime }\right) ^{\prime }.
\end{equation*}
The vector $\theta _{1}$ (resp., $\theta _{2}$) represents the
vector of structural (resp., reduced-from) parameters. Following \textbf{Assumption 1}, the {true value of the reduced form parameters $\theta _{2}$ is defined by} the conditional moment restrictions
\begin{equation}
\E[r_{2i}\left( \theta _{2}\right) \left\vert x_{i},z_{i}\right] =0,\text{ where }
r_{2i}\left( \theta _{2}\right) =y_{2i}-x_{i}^{\prime }\pi -z_{i}^{\prime
}\xi \label{CMRR}.
\end{equation}
{For fixed $\theta _{2}$, the true value of the structural parameters $\theta
_{1}$\ is} defined by the conditional moment restrictions
\begin{equation}
\E[r_{1i}\left( \theta _{1},\theta _{2}\right) \left\vert
y_{2i},x_{i},z_{i} \right] =0,\text{ where }	r_{1i}\left( \theta _{1},\theta _{2}\right)=y_{1i}-\Phi \left[ \alpha
y_{2i}+x_{i}^{\prime }\beta +\tilde{\rho}v_{i}\left( \theta _{2}\right) %
\right]  \label{CMRS},
\end{equation}and where $$v_{i}\left( \theta _{2}\right) =r_{2i}\left( \theta _{2}\right)
=y_{2i}-x_{i}^{\prime }\pi -z_{i}^{\prime }\xi.$$
As usual, we will handle conditional moment restrictions by choosing vectors
of instrumental functions, denoted respectively as $\tilde{b}\left(
x_{i},z_{i}\right) $\ for (\ref{CMRR}) and $\tilde{a}\left(
y_{2i},x_{i},z_{i}\right) $ for (\ref{CMRS}), where it is assumed that the moments $\E[\|\tilde{a}( y_{2i}, x_{i},z_{i})\|^{2+\kappa}]$ and $\E[\|\tilde{b}( x_{i},z_{i})\|^{2+\kappa}]$ are finite for some $\kappa>0$. For a given choice of instrumental
functions $\tilde{a}\left( .,.,.\right) $\ and $\tilde{b}\left( .,.\right) $, we maintain the following identification assumption.

\medskip

\noindent\textbf{Assumption 2} (\textit{Identification}): The true unknown value $\theta ^{0}=(\theta_1^{0'},\theta_2^{0'})'\in\text{Int}(\Theta)$ is the
unique solution $\theta \in \Theta $ to the following moment restrictions:%
\begin{alignat*}{4}
\text{\textbf{Reduced form}:}\quad&\quad	\E[ \tilde{b}( x_{i},z_{i}) r_{2i}( \theta _{2}) %
] =0&\iff&\quad\theta_2=\theta^0_2,\\
\text{\textbf{Structural}:}\quad&\quad		\E[ \tilde{a}( y_{2i},x_{i},z_{i}) r_{1i}( \theta
	_{1},\theta _{2}^0) ] =0\quad&\iff&\quad\theta_1=\theta^0_1 .
\end{alignat*}

We can summarize the unconditional moment conditions in \textbf{Assumption 2} as follows: for $H\ge p$, and $H$-dimensional vectors $a_{i}$ and $b_{i}$ of the same dimension, define
\begin{equation*}
g_{i}(\theta ) =a_{i}r_{1i}\left( \theta _{1},\theta _{2}\right)
+b_{i}r_{2i}\left( \theta _{2}\right),\text{ where }
a_{i} =\begin{bmatrix}
\tilde{a}\left( y_{2i},x_{i},z_{i}\right) \\
\mathbf{0}%
\end{bmatrix} ,\quad b_{i}=\begin{bmatrix}
\mathbf{0} \\
\tilde{b}\left( x_{i},z_{i}\right)%
\end{bmatrix} ,
\end{equation*}
then \textbf{Assumption 2} implies that the moment function $g_i(\theta)$ satisfies
\begin{eqnarray*}
\E\left[ g_{i}(\theta )\right] =0\Longleftrightarrow \theta =\theta ^{0}.
\end{eqnarray*}
A GMM estimator of $\theta^0$ can then be constructed using the moment function
\begin{equation}\label{eq:mom}
g_i(\theta)=(g_{1i}(\theta)',g_{2i}(\theta)')',\text{ where }g_{1i}(\theta)=\tilde{a}\left( y_{2i},x_{i},z_{i}\right) r_{1i}\left( \theta \right),\;\; g_{2i}(\theta)=\tilde{b}\left( x_{i},z_{i}\right) r_{2i}\left( \theta _{2}\right).
\end{equation}
In particular, for $W_n$ a sequence of positive-definite $H\times H$ weighting matrix, we can estimate $\theta^0$ using the GMM estimator
$$
\hat\theta_n=\arg\min_{\theta\in\Theta}\bar{g}_n(\theta)'W_{n}\bar{g}_n(\theta), \text{ where }\bar{g}_{n}(\theta )=\frac{1}{n}\sum_{i=1}^{n}g_i(\theta)\equiv \begin{pmatrix}
\bar{g}_{1n}(\theta)'&\bar{g}_{2n}(\theta)'
\end{pmatrix}'.
$$

\begin{remark}
\normalfont {In general, imposing that some components of the vectors $%
a_{i}$ and $b_{i}$ are zero prevents us from choosing optimal instruments, and ultimately results in $\hat\theta_n$ being an inefficient estimator of $\theta^0$.} The characterization of optimal
instrumental functions for the joint set (\ref{CMRR}) and (\ref{CMRS}) of
conditional moment restrictions is non-standard because they correspond to
different conditioning variables. The optimal instrumental functions in this
case have been characterized by \cite{kawaguchi2017moment} (see
also \citet{ai2003efficient} for a general study). Their result implies that in
case of overidentification and simultaneity ($\tilde{\rho}\neq 0$), the
first set $r_{1i}(\theta)$\ of moment conditions is also informative about $\theta
_{2}$, so that a more efficient estimator of $\theta _{2}$ (and in turn $%
\theta _{1}$) is obtained by an appropriate choice of $a_{i}$ in which all of its components are non-zero.
\end{remark}

{While the specific choice of instrumental functions $a_{i}$ and $b_{i}$ may be sub-optimal, this choice allows us to demonstrate the equivalence between a GMM-based approach and the 2SCML approach of \cite{rivers1988limited}. In particular, for $g_{1i}(\theta)$ and $g_{2i}(\theta)$ defined as in equation \eqref{eq:mom}}, we have that
\begin{flalign*}
\text{Cov}\left[ g_{1i}(\theta ^{0}),g_{2i}(\theta ^{0})\right]
&=\E\left[ \tilde{a}\left( y_{2i},x_{i},z_{i}\right) \tilde{b}^{\prime
}\left( x_{i},z_{i}\right) r_{1i}\left( \theta^0 \right) r_{2i}\left( \theta^0_{2}\right) \right]\\&=\E\left\{ \tilde{a}\left( y_{2i},x_{i},z_{i}\right) \tilde{b}^{\prime
}\left( x_{i},z_{i}\right) r_{2i}\left( \theta _{2}^{0}\right)
\E[r_{1i}\left( \theta ^{0}\right) \left\vert y_{2i},x_{i},z_{i}\right]
\right\} =0.
\end{flalign*}
Thus, an efficient GMM estimator based on the moment functions in \eqref{eq:mom} can be defined as
\begin{eqnarray*}
\hat{\theta}_{n} &=&\arg \min_{\theta \in \Theta }\bar{g}_{n}(\theta
)^{\prime }\left[
\begin{array}{cc}
W_{1n} & 0 \\
0 & W_{2n}%
\end{array}%
\right] \bar{g}_{n}(\theta ) \\
&=&\arg \min_{\theta \in \Theta }\left\{ \bar{g}_{1n}(\theta )^{\prime
}W_{1n}\bar{g}_{1n}(\theta )+\bar{g}_{2n}(\theta )^{\prime }W_{2n}\bar{g}%
_{2n}(\theta )\right\},  \nonumber
\end{eqnarray*}for an appropriate choice of the weighting matrices $W_{1n}$ and $W_{2n}$. Consequently, the components of the first-order conditions for the structural parameters $\theta_1$ are given by
\begin{equation}
\frac{\partial \bar{g}_{1n}(\hat{\theta}_{n})^{\prime }}{\partial \theta
	_{1}}W_{1n}\bar{g}_{1n}(\hat{\theta}_{n})=0  \label{absfoc}.
\end{equation}

Equation (\ref{absfoc}) allows us to see the estimator $\hat{\theta}_{1n}$ as a two-step estimator based on the moment
conditions
\begin{equation}
\E[r_{1i}\left( \theta _{1},\theta _{2}^{0}\right) \left\vert
y_{2i},x_{i},z_{i}\right] =0  \label{structmom},
\end{equation}
where the nuisance parameter $\theta _{2}^{0}$\ is replaced by a consistent first-step
estimator $\hat{\theta}_{2n}$. From \eqref{absfoc}, we can see that the estimator $\hat{\theta}_{1n}$ is
the solution in $\theta _{1}=\left( \tilde{\rho},\alpha ,\beta
^{\prime }\right) ^{\prime }$\ to the $\left( 2+k_{x}\right) $
orthogonality conditions
\begin{equation}
\sum_{i=1}^{n}\gamma _{i,n}\left\{ y_{1i}-\Phi \left[ \alpha
y_{2i}+x_{i}^{\prime }\beta +\tilde{\rho}v_{i}\left( \hat{\theta}%
_{2n}\right) \right] \right\} =0,\text{ for }\gamma _{i,n}=\frac{\partial \bar{g}_{1n}(\hat{\theta}_{n})^{\prime }}{%
	\partial \theta _{1}}W_{1n}\tilde{a}\left( y_{2i},x_{i},z_{i}\right) .
  \label{orthocond}
\end{equation}

The optimal instruments associated with estimation of $\theta_1^0$ in equation \eqref{structmom} (i.e., where $\theta^0_2$ is known) are given by any consistent estimator of:%
\begin{eqnarray}
\gamma _{i}^{\ast } =\left[ \text{Var}(r_{1i}\left( \theta _{1}^0,\theta
_{2}^{0}\right) \left\vert y_{2i},x_{i},z_{i}\right) \right] ^{-1}\E\left[
\frac{\partial r_{1i}\left( \theta^0 _{1},\theta _{2}^{0}\right) }{\partial
	\theta _{1}}\mid y_{2i},x_{i},z_{i} \right]\equiv \frac{\phi _{i}\left( \theta ^{0}\right) }{\Phi _{i}\left( \theta
	^{0}\right) \left[ 1-\Phi _{i}\left( \theta ^{0}\right) \right] }\left[
\begin{array}{c}
v_{i}\left( \theta _{2}^{0}\right) \\
y_{2i} \\
x_{i}%
\end{array}%
\right]  \nonumber
\end{eqnarray}
where
\begin{eqnarray*}
	\Phi _{i}\left( \theta ^{0}\right) &=&\Phi \left[ \alpha
	^{0}y_{2i}+x_{i}^{\prime }\beta ^{0}+\tilde{\rho}^{0}v_{i}\left( \theta
	_{2}^{0}\right) \right] \\
	\phi _{i}\left( \theta ^{0}\right) &=&\phi \left[ \alpha
	^{0}y_{2i}+x_{i}^{\prime }\beta ^{0}+\tilde{\rho}^{0}v_{i}\left( \theta
	_{2}^{0}\right) \right]
\end{eqnarray*}
and $\phi \left( x\right) =d\Phi\left(x\right)/dx $ is the
probability density function associated to $\Phi $.

Therefore, if one were to choose a consistent estimator of $\gamma_i^*$ as instruments, the estimator $%
\hat{\theta}_{1n}$ can be seen as the solution in $\theta _{1}=\left( \tilde{%
	\rho},\alpha ,\beta ^{\prime }\right) ^{\prime }$\ to the equations:%
\begin{equation}
\sum_{i=1}^{n}\frac{\phi _{i}\left( \theta _{1},\hat{\theta}_{2n}\right)
}{\Phi _{i}\left( \theta _{1},\hat{\theta}_{2n}\right) \left[ 1-\Phi
	_{i}\left( \theta _{1},\hat{\theta}_{2n}\right) \right] }\left[
\begin{array}{c}
v_{i}\left( \hat{\theta}_{2n}\right) \\
y_{2i} \\
x_{i}%
\end{array}%
\right] \left\{ y_{1i}-\Phi \left[ \alpha y_{2i}+x_{i}^{\prime }\beta +%
\tilde{\rho}v_{i}\left( \hat{\theta}_{2n}\right) \right] \right\} =0.
\label{2SCML}
\end{equation}
Equation (\ref{2SCML}) shows that, for any choice of a consistent first-step estimator $%
\hat{\theta}_{2n}$, the estimator $\hat{\theta}_{1n}$\ is a
2SCML estimator a la \cite{rivers1988limited}.

\subsection{The Weak IV Issue in the Probit Model}\label{sec:2.3}
The representation in equation \eqref{2SCML} demonstrates that the general class of GMM estimators for $\theta_1$ defined in equation (\ref{orthocond}) contains both 2SCML and Quasi-LIML estimators as particular cases. Therefore, we can ascertain the impact of instrument weakness, on these and related methods, by studying instrument weakness in this general class of GMM estimators.

However, before moving to a general study, we give some intuition on the potential impacts of instrument weakness in the case of probit model. These implications are most easily elucidated in the infeasible case where we replace the optimal instruments in equation \eqref{2SCML} with their infeasible counterpart $\gamma_i^\ast$, and where we replace the estimator $\hat\theta_{2n}$ by the true value $\theta^0_2$.

Under these simplification, and under the one-to-one transformation of $\theta_1$ defined by
$$
\eta _{1}=\tilde{\rho},\quad\eta _{2}=\alpha +\tilde{\rho},\quad\eta _{3}^{}=\beta -%
\tilde{\rho}{\pi}^0,
$$the infeasible estimator $\tilde\eta_n$ of $\eta^0$ (and thus $\theta_1^0$) can be defined as the solution to
$$
\sum_{i=1}^{n}\gamma_i^* \left\{ y_{1i}-\Phi \left[ \eta _{1}\left( -z_{i}^{\prime }{\xi}^0\right) +\eta _{2}y_{2i}+x_{i}^{\prime }\eta _{3}^{}\right] \right\}
=\sum_{i=1}^{n}w_i D_i\left\{ y_{1i}-\Phi \left[ \eta _{1}\left( -z_{i}^{\prime }{\xi}^0\right) +\eta _{2}y_{2i}+x_{i}^{\prime }\eta _{3}^{}\right] \right\}=0,
$$where $\gamma_i^*=w_i D_i$,  $w_i={1}/{\Phi_i(\theta^0)[1-\Phi_i(\theta^0)]}$ and $D_i=\phi_i(\theta^0)(-z_i'\xi^0,y_{2i},x_i')'$.\footnote{The simplification made in the term $D_i$, i.e., replacing $v_i(\theta^0_2)$ by $-z_i'\xi^0$, follows from the row operation on $\gamma^*_i$ which does not affect the solution of the linear equations in \eqref{2SCML} asymptotically.}
A Taylor expansion allows us to heuristically write
\begin{flalign*}
y_{1i}&-\Phi \left[ \eta _{1}\left( -z_{i}^{\prime }{\xi}^{0}\right)
+\eta _{2}y_{2i}+x_{i}^{\prime }\eta _{3}^{}\right] \nonumber \\
&\approx y_{1i}-\Phi _{i}\left( \theta ^{0}\right) -\phi _{i}\left(
\theta ^{0}\right) \left[ \left( -z_{i}^{\prime }\xi ^{0}\right) \left( \eta
_{1}-\eta _{1}^{0}\right) +y_{2i}\left( \eta _{2}-\eta _{2}^{0}\right) %
 +x_{i}^{\prime }\left( \eta _{3}-\eta _{3}^{0}\right)\right].
\end{flalign*}
Using this expansion within the infeasible estimating equations, $\tilde{\eta}_n$ can be seen to solve
$$
\sum_{i=1}^{n}w_i D_i\left( \tilde{y}_{1i}-D_i'\eta \right)=0, \text{ where }\tilde{y}_{1i} =y_{1i}-\Phi(\theta^0)+D_i'\eta^0.
$$

Consequently, $\tilde{\eta}_n$ is obtained from a weighted least squares regression of $\tilde{y}_{1i}$ on the explanatory variables $D_i= \phi_i(\theta^0)(-z_i'\xi^0,y_{2i},x_i')'$. While the above estimating equations are not identical to those in equation \eqref{2SCML}, it is clear from comparing the two that they are of a similar form, and therefore whatever implications are drawn about the later will be sustained by the former.

This regression-based viewpoint yields two important, and interrelated, implications for inference in endogenous binary choice models. First, the linear regression that is considered is not the one suggested by a linear probability model, which would be based on explanatory variables $z_{i}^{\prime }\xi ^{0},y_{2i}, x_{i}$, and not the weighted versions in $D_i$. Second, since the explanatory variables in the regression are weighted by $\phi_i(\theta^0)$, it is inappropriate to focus solely on the contribution of $z_i'\xi^0$ in the reduced form regression as a measure of instrument strength.

\begin{remark}
\normalfont Before moving on, we note that the above type of estimation approach has been dubbed ``two-stage residual
inclusion" (2SRI) estimation by \cite{terza2008two}. In particular, using the
first stage consistent estimators $\hat{\theta}_{2n}=(\hat{\pi}_{n}^{\prime
},\hat{\xi}_{n}^{\prime })^{\prime },$ the estimated first stage
residual
\[
\hat{v}_{i}=y_{2i}-x_{i}^{\prime }\hat{\pi}_{n}-z_{i}^{\prime }\hat{\xi}%
_{n}
\]
is included in the computation of the generalized residual
\[
r_{1i}\left( \theta _{1},\theta _{2}\right) =y_{1i}-\Phi \left[ \alpha
y_{2i}+x_{i}^{\prime }\beta +\tilde{\rho}v_i(\theta_2)\right].
\]
We know from \cite{hausman1978specification} that, in a fully linear model and as far as
estimation of structural parameters $\alpha $\ and $\beta $ is concerned,
2SRI is equivalent to 2SLS. The inclusion of the residual $\hat{v}_{i}$\ in
the regression equation ensures that naive OLS would coincide with 2SLS. In addition,
\cite{terza2008two} dub ``Two-stage predictor substitution" (2SPS)
the direct generalization of 2SLS to our nonlinear context, meaning that in
the structural equation, the endogenous variable is simply replaced by its
first stage adjusted value, leading to the generalized residual:%
\begin{eqnarray*}
	\hat{u}_{i} &=&y_{1i}-\Phi \left[ \alpha \hat{y}_{2i}+x_{i}^{\prime }\beta %
	\right] \\
	\hat{y}_{2i} &=&x_{i}^{\prime }\hat{\pi}_{n}+z_{i}^{\prime }\hat{\xi}_{n}
\end{eqnarray*}
Not surprisingly, \cite{terza2008two} show that in a nonlinear
model, 2SPS is not equivalent anymore to 2SRI and only the latter provides a
consistent estimator of structural parameters. The intuition is quite clear.
Due to the non-linearity of the function $\Phi \left( .\right) $, plugging
in $\hat{y}_{2i}$ to instrument $y_{2i}$ does not fix satisfactorily the
endogeneity bias problem.

\end{remark}

As alluded to above, it can be misleading to set the focus on the contribution of $z_{i}^{\prime
}\xi ^{0}$\ in the reduced form regression to gauge the instrument strength,
as is done when using the standard rule-of-thumb. Doing so is akin to
overlooking the impact of nonlinearity\ in the same way as that it is wrong to
confuse the correct 2SRI and the flawed 2SPS. Indeed, as the above arguments clarify, the relevant variable for capturing
	instrument strength is not $z_i$, as in the standard linear case, but $ \phi _{i}(\theta ^{0})z_{i}$. Thus, the assessment of identification strength should rather be based
	on the variability of $\phi _{i}(\theta ^{0})z_{i}^{\prime }\xi
	^{0}$.
	
We can easily illustrate the impact of moving from $z_i'\xi^0$ to $\phi(\theta^0)z_i'\xi^0$ in terms of instrument strength in the probit model, so that $\phi (\cdot)$ is the probability density function of the Gaussian
distribution.\footnote{The conclusions given below will remain valid for any other probability
distribution with thin tails, such that the variability of the
$\phi _{i}(\theta ^{0})z_{i}$ is drastically
different from the one of $z_{i}$.} First we recall that
that for a real valued variable $\nu $ and any given number $c$, the
absolute value of the function $h(\nu )=\nu \phi (c+\nu )$ is decreasing
in $\left\vert \nu \right\vert $\ when the latter value is larger than the
absolute value of the roots of the polynomial $\left[ 1-c\nu -\nu ^{2}\right]
$. Moreover, the rate of this decrease is sharp (converging swiftly to
zero) due to the thin tails of the Gaussian distribution.

Using this argument, one may realize that the multiplication of $z_{i}^{\prime }\xi ^{0}$ by%
\begin{equation*}
\phi _{i}(\theta ^{0})=\phi \left[ \alpha ^{0}y_{2i}+x_{i}^{\prime
}\beta ^{0}+\tilde{\rho}^{0}\left( y_{2i}-x_{i}^{\prime }\pi
^{0}-z_{i}^{\prime }\xi ^{0}\right) \right]
\end{equation*}
erases the variability of $z_{i}^{\prime }\xi ^{0}$, by
pruning all its large values. For $Z\sim \mathcal{N}\left( 0,\sigma _z^{2}\right) $, it is useful to illustrate the above
point by comparing the variance of $Z\phi (1+Z)$ as a percentage of the variance of $Z$. For various values of $\sigma
_z^{2}$ , we collect these ratios in Table \ref{tab:one} below.
\begin{table}[htbp]
	\begin{center}
		\caption{Comparison of the variance of $Z$ to the variance of $W=Z\phi(1+Z)$}
		\begin{tabular}{lcccccccc}
			\hline\hline
			&  $\sigma^2_z$     & 1    &2     & 5    & 10    & 50 & 100 \\\hline
			& Rel. \%     & 100\% & 79.03\% & 30.18\%& 28.13\% & 7.42\% &3.83\%\\\hline\hline
		\end{tabular}%
		\label{tab:one}%
	\end{center}
	\footnotesize
	Note: For $\sigma^2_w=\text{Var}(W)$, we first calculate $l_z=\sigma^2_w/\sigma^2_z$, i.e., the variance of $W$ as a percentage of the variance of $Z$, for various values of $\sigma^2_z$. The value of Rel \% in the table is the value of $l_z$ expressed as a percentage of  $\sigma^2_w/1$, i.e., we report the results relative to the case where $\sigma^2_z=1$.
\end{table}%

The results in Table \ref{tab:one} constitute compelling evidence on the likely flaws of
the standard rule-of-thumb in the
probit context. It is also worth stressing that, while Table \ref{tab:one} only
displays results with the normalized function $\phi (1+Z),$ the pruning
impact of large values of $z_{i}^{\prime }\xi ^{0}$\ within the function $%
\phi (\cdot)$ may actually be magnified in finite
sample by a large value of the parameter $\tilde{\rho}^{0}$. We may then
expect that the pruning effect documented in Table \ref{tab:one} will be even more
detrimental for small values of $\sigma _{v}$\ and/or a large degree of
endogeneity $\rho ,$ with both cases corresponding to a large
value of $\tilde{\rho}$. These possible perverse effects for the naive rule-of-thumb will be confirmed by the Monte
Carlo experiments in Section \ref{sectionMC}. These experiments will show that the
standard rule-of-thumb will be more prone to over-reject the null of weak
instruments in the case of strong simultaneity ($\rho $ close to one) and/or
a large signal to noise ratio $\sigma _{z}/\sigma _{v}$\ in the reduced form
regression.

\section{A Test for Instruments Weakness}

\subsection{Intuition}\label{sec:int}

Several authors, {such as} \cite{kleibergen2005testing}, \cite{caner2009testing}, \cite{chaudhuri2017score}, \cite{stock2000gmm}, and \cite{antoine2017testing}, have
discussed the advantages of a continuously updated GMM (CUGMM) approach to efficient GMM estimation
in case of possible weak identification. Following the latter two authors, in our context the advantage of the
CUGMM approach is that, irrespective of identification weakness, the asymptotic behavior of the CUGMM criterion is always
controlled. This feature of the CUGMM criterion will ultimately allow us to obtain a test for instrument weakness that is size controlled and consistent.

To see that this key feature remains true in our setting, recall the specific moment conditions underlying this analysis given by equation \eqref{eq:mom}; namely, for $\theta_1=(\tilde\rho,\alpha,\beta')'$ and $\theta_2=(\pi',\xi')'$, and $g_{1i}(\theta)=\tilde{a}(y_{2i},x_i,z_i)r_{1i}(\theta_1,\theta_2)$, $g_{2i}(\theta)=\tilde{b}(x_i,z_i)r_{2i}(\theta_2)$,
\[
g_{i}(\theta )=r_{1i}(\theta )a\left( y_{2i},x_{i},z_{i}\right)
+r_{2i}(\theta_2 )b\left( x_{i},z_{i}\right)=\begin{pmatrix} g_{1i}(\theta)',&g_{2i}(\theta)'\end{pmatrix}'.
\]
Defining the weighting matrix
\[
S_{n}(\theta )=\begin{bmatrix}S_{11,n}(\theta)&0\\0&S_{22,n}(\theta)\end{bmatrix},\;\;S_{jj,n}(\theta)=\frac{1}{n}\sum_{i=1}^{n}\left[g_{j,i}(\theta )-\bar{g}_{j,n}(\theta )\right]\left[g_{j,i}(\theta )-\bar{g}_{j,n}(\theta )\right]^{\prime
}, \;(j=1,2),
\]we consider a CUGMM estimator (hereafter, CUE) that takes into account the block diagonal structure of the population variance matrix. Then our CUE of $\theta^0$ based on $\bar{g}_n(\theta)=(\bar{g}_{1n}(\theta)',\bar{g}_{2n}(\theta)')'$ is defined as
\begin{eqnarray*}
	\hat{\theta}_{n} =\arg\min_{\theta\in\Theta }J_{n}(\theta,\theta ), \text{ for }
	J_{n}(\theta,\tilde\theta ) =n\bar{g}_{n}(\theta )^{\prime }S_{n}^{-1}(\tilde\theta )\bar{g}%
	_{n}(\theta ),
\end{eqnarray*}where the notation $J_n(\theta,\tilde\theta)$ differentiates the occurrences of $\theta$ in the moments, $\bar{g}_n(\theta)$, from those in the weighting matrix,  $S^{-1}_n(\tilde\theta)$.

{The critical feature of the criterion $J_{n}(\theta,\tilde\theta ) $ is that, by definition,}
\begin{equation}
J_{n}( \theta ^{0},\theta ^{0})\geq
J_{n}( \hat{\theta}_{n},\hat\theta_n),
\label{upper}
\end{equation}
while, since $\text{Cov}\left[g_{1i}(\theta^0),g_{2i}(\theta^0)\right]=0$, it follows that $J_{n}\left( \theta ^{0},\theta^0\right) $ {converges in {distribution to
a chi-square random variable with $H$ degrees of freedom, denoted throughout as $\chi ^{2}(H)$}}.\footnote{We note that a similar bound remains valid for a general CUGMM setup that does not make use of the block diagonal structure. For the reasons given previously, we focus on this more particular case.}

The general validity of this upper bound, {regardless of the instrument strength, and, hence consistency of $\hat{\theta}_n$,} is the reason why we resort to CUGMM. This upper bound will allow us to control the size of our test for weak identification.\footnote{The upper bound (\ref{upper}) is generally invalid if a first-step estimator of $\theta ^{0}${ is used to estimate the optimal instrumental functions.} The only way to incorporate optimal instrumental functions for $a(y_{2i},x_i,z_i)$ and $b(x_i,z_i)$ would be
to use them with a free value of $\theta $\ like in the weighting matrix of
CUGMM. The discussion of this alternative approach is left for future research. Also, we note that in the just identified case, the minimum $J_{n}\left(\hat{\theta}_{n},\hat\theta_n\right) $ of $J_{n}(\theta )$\ is asymptotically, with
probability one, equal to zero and $S_{n}^{-1}(\theta )
$\ is immaterial. In
particular, when using the first-order conditions of some M-estimator,
including two-stage conditional maximum likelihood or quasi-LIML, the
weighting matrix is irrelevant.}

The key intuition for our test of weak identification is the following observation. {Under weak identification, there are certain directions of the parameter space where the CUGMM objective function $J_{n}( \cdot,\hat\theta_n) $ is flat in the neighbourhood of $\hat{\theta}_{n}$. In these directions, if we distort $\hat{\theta}_{n}$ by some ``small'' value, say $\Delta_n\in\mathbb{R}^{p}$, and evaluate $J_n(\cdot,\hat\theta_n)$ at $\hat{\theta}^\delta_n=\hat{\theta}_n+\Delta_n$, then the value of $J_n(\hat\theta_n^\delta,\hat\theta_n)$ should not differ ``significantly'' from that of $J_n(\hat\theta_n,\hat\theta_n)$. Herein, the concept of} ``significance" means that
$J_{n}( \hat{\theta}_{n}^{\delta },\hat\theta_n) $ exceeds some {pre-specified quantile of the $\chi ^{2}(H)$ distribution.}

{Critically, however, since the objective function scales the squared norm of the sample mean $\bar{g}%
_{n}(\theta )$, by the factor $n$, when identification is not weak the distortion introduces a wedge between $\bar{g}_n(\hat{\theta}_{n}^{\delta })$\ and $\bar{g}_n(\hat{\theta}_{n})$. Therefore, if identification is not weak, so long as the distortion goes to zero sufficiently slowly with $n$, the criterion $J_{n}( \hat{\theta}_{n}^{\delta },\hat\theta_n) $ diverges asymptotically and thus exceeds (with
probability going to one) the chosen quantile of the $\chi ^{2}(H)$ distribution. Throughout the remainder, we refer to this testing procedure as a distorted J-test.}\footnote{It is worth noting that this test is dubbed the ``distorted J-test" because
	it uses the J statistic proposed by \cite{hansen1982large} in the overidentified case to test for the validity of a set of moments. The
	terminology is a bit misleading since our test may work even in the just
	identified case ($H=p$). There are actually two possible points of view: either one chooses to perform the distorted J-test test in a just identified setting ($H=p$), or in the overidentified setting ($H>p$).}

\subsection{The null hypothesis of weak identification}\label{sec:null}

As already discussed in Section \ref{sec:2.3}, {weak instruments impact estimation of the structural parameters through the structural moment function}
\[
g_{1i}\left( \theta \right) =\tilde{a}\left( y_{2i},x_{i},z_{i}\right)
r_{1i}\left( \theta _{1},\theta _{2}\right),
\text{ where }	r_{1i}\left( \theta _{1},\theta _{2}\right)=y_{1i}-\Phi \left[ (\tilde{\rho}+\alpha)
y_{2i}+x_{i}^{\prime }(\beta-\tilde{\rho}\pi) -\tilde{\rho}z_{i}^{\prime }\xi\right].
\]
{The impact of weak instruments can be most easily disentangled under the parameterization}
\begin{equation}
\eta =\left( \eta _{1},\eta _{2},\eta _{3}^{\prime }\right) ^{\prime
}=\left( \tilde{\rho},\tilde{\rho}+\alpha ,\beta ^{\prime }-\tilde{\rho}\pi
^{\prime }\right)^{\prime },\label{basis}
\end{equation}which allows us to restate the moment function as
$$
g_{1i}(\eta,\theta_2)=\tilde{a}(y_{2i},x_i,z_i)\tilde{r}_{1i}(\eta,\theta_2), \text{ where }\tilde{r}_{1i}\left( \eta
,\theta _{2}\right)=y_{1i}-\Phi \left[ -\eta _{1}z_{i}^{\prime }\xi +\eta
_{2}y_{2i}+x_{i}^{\prime }\eta _{3}\right].
$$

Following \cite{staiger1997instrumental} and \cite{stock2000gmm}, we use a drifting data generating process (DGP) to capture instrument weakness, so that population expectations are viewed as being $n$-dependent. However, to paraphrase \cite{lewbel2019identification}, we do not actually believe that the DGP is changing as $n$ changes, but use the drifting DGP concept in order to obtain more reliable asymptotic approximations in the context of weak identification. To this end, we consider that the population expectation of $\bar{g}_{1n}(\eta,\theta_2)$ is defined as
\[
m_{1n}\left( \eta, \theta _{2}\right) =\E_{n}\left[\sum_{i=1}^{n} \tilde{a}\left(
y_{2i},x_{i},z_{i}\right) \tilde{r}_{1i}\left( \eta, \theta _{2}\right) %
\right]/n.
\] {Under this drifting DGP, we are obliged to see $\theta^0_2$, and hence $\eta^0$, as $n$-dependent, so that the maintained identification assumption should technically be recast as $$m_{1n}\left( \eta,\theta _{2}\right) =0\iff (\eta,\theta_2)=(\eta^0_n,\theta_{2n}^0).$$ However, to keep the notational burned to a minimum, we only make the true-values dependence on $n$ explicit when absolutely necessary.}

{Following the approach of \cite{stock2000gmm} (see their Section 2.3), the following decomposition of  $m_{1n}\left( \eta, \theta _{2}\right)$ will ultimately allow us to isolate the impact of instrument weakness
\begin{eqnarray*}
m_{1n}\left( \eta, \theta _{2}^{0}\right) &=&m_{1n}\left( \eta^{0}, \theta
_{2}^{0}\right) +\left[ m_{1n}\left( \eta, \theta _{2}^{0}\right)
-m_{1n}\left( \eta _{1}^{0},\eta _{2},\eta_{3}, \theta _{2}^{0}\right) %
\right] \\&+&\left[ m_{1n}\left( \eta _{1}^{0},\eta _{2},\eta_{3}, \theta
_{2}^{0}\right) -m_{1n}\left( \eta^{0}, \theta _{2}^{0}\right) \right].
\end{eqnarray*} In particular, since $m_{1n}\left( \eta^{0}, \theta _{2}^{0}\right) =0$, we have}
\begin{equation}
m_{1n}\left( \eta, \theta _{2}^{0}\right) =\left[ m_{1n}\left( \eta
,\theta _{2}^{0}\right) -m_{1n}\left( \eta _{1}^{0},\eta _{2},\eta
_{3},\theta _{2}^{0}\right) \right] +m_{1n}\left( \eta _{1}^{0},\eta
_{2},\eta_{3}, \theta _{2}^{0}\right)  \label{decomp}.
\end{equation}
As explained in Section \ref{sec:2.3}, instrument weakness is encapsulated by the
explanatory variable $\phi_i \left( \theta ^{0}\right) z_{i}^{\prime }\xi^{0}$. {The impact of this explanatory variable on instrument strength can be directly obtained by linearising $m_{1n}\left( \eta, \theta _{2}^{0}\right) $ around $\eta_1^0$ to obtain}
\begin{eqnarray}
m_{1n}\left( \eta, \theta _{2}^{0}\right) -m_{1n}\left( \eta _{1}^{0},\eta
_{2},\eta_{3}, \theta _{2}^{0}\right) &=&\left( \eta _{1}-\eta
_{1}^{0}\right) \frac{\partial m_{1n}}{\partial \eta _{1}}\left( \eta
_{1n}^{\ast },\eta _{2},\eta_{3}, \theta _{2}^{0}\right)  \label{localSW} \\
&=&\left( \eta _{1}-\eta _{1}^{0}\right) \E_{n}\left[ \sum_{i=1}^{n}\tilde{a}\left(
y_{2i},x_{i},z_{i}\right) \phi _{i}\left( \eta _{1n}^{\ast },\eta
_{2},\eta_{3}, \theta _{2}^{0}\right) z_{i}^{\prime }\xi ^{0}\right]/n,
\nonumber
\end{eqnarray}
where $\eta _{1n}^{\ast }$ denotes a component-by-component intermediate value, which  can vary according to the components of the function $\tilde{a}(.)$.

Equation \eqref{localSW} allows us to write the decomposition in equation \eqref{decomp} in the following semi-separable form, which clearly partitions the directions of weakness in the parameter space:
for some real, positive, and deterministic sequence $\varsigma_{n}\rightarrow\infty$ as $n\rightarrow\infty$, with $\varsigma_{n}=O(\sqrt{n})$, possibly $o(\sqrt{n})$,
\begin{equation}
m_{1n}\left( \eta, \theta _{2}^{0}\right) ={q_{11,n}\left( \eta
	\right) }/{\varsigma_n}+q_{12,n}\left( \eta _{2},\eta _{3}\right),  \label{SW}
\end{equation}where
\begin{flalign*}
&q_{11,n}\left( \eta \right) =\varsigma_n\left[ m_{1n}\left( \eta, \theta
_{2}^{0}\right) -m_{1n}\left( \eta _{1}^{0},\eta _{2},\eta_{3}, \theta
_{2}^{0}\right) \right],\\
&q_{12,n}\left( \eta _{2},\eta _{3}\right) =m_{1n}\left( \eta _{1}^{0},\eta
_{2},\eta_{3}, \theta _{2}^{0}\right) =\sum_{i=1}^{n}\E_n\left[\tilde{a}\left(
y_{2i},x_{i},z_{i}\right) \tilde{r}_{1i}\left( \eta _{1}^{0},\eta
_{2},\eta_{3}, \theta _{2}^{0}\right) \right]/n.
\end{flalign*}
Given this decomposition of $m_{1n}\left( \eta, \theta _{2}^{0}\right)$, the identification strength of $\eta_1$ is entirely determined by equation \eqref{localSW} and therefore $q_{11,n}(\eta)/\varsigma_n$. In particular, the rate $\varsigma_n$ can be thought of as encapsulating the speed with which the curvature of the moments approaches zero in the $\eta_1$ direction, and thus $\varsigma_n$ determines the degree of identification weakness. If $\varsigma_n$ diverges like $\sqrt{n}$, the speed at which this curvature vanishes is matched by the rate at which information accumulates in the sample, i.e., $\sqrt{n}$, and there is no hope that $\eta^0_1$ can be identified from sample information; i.e., $\eta^0_1$ is weakly identified. In contrast, the identification of $\eta_2,\eta_3$ is determined by $q_{12,n}(\eta_2,\eta_3)$ and is not afflicted by identification weakness. That is, in this rotated parameter space of $\eta$, identification weakness only occurs in the $\eta_1$ direction and does not permeate the remaining directions in the parameter space. The representation in equation \eqref{SW} is conformable, but not equivalent, to the decomposition employed by \cite{stock2000gmm} to study the behavior of GMM under weak identification (see Remark \ref{remark_SW} for details). We maintain the following conditions on $m_{1n}(\eta,\theta^0_2)$, which has the same form as Assumption C in \cite{stock2000gmm}.

\medskip
\noindent\textbf{Assumption 3:} For $\varsigma_n=O(\sqrt{n})$, possibly $o(\sqrt{n})$, $m_{1n}\left( \eta, \theta _{2}^{0}\right) ={q_{11,n}\left( \eta
	\right)/\varsigma_{n}}+q_{12,n}\left( \eta _{2},\eta _{3}\right)$:

\noindent(i) $q_{11,n}\left( \eta \right)\rightarrow q_{11}\left( \eta \right)$ as $n\rightarrow\infty$ uniformly in $\eta $, where $q_{11}\left( \eta ^{0}\right) =0$, and $q_{11}(\cdot)$ is uniformly
continuous (and hence bounded) in $\eta$.
\medskip

\noindent (ii) $q_{12,n}\left( \eta_2,
\eta_3 \right)\rightarrow q_{12}\left( \eta_2,\eta_3 \right)$ as $n\rightarrow\infty$ uniformly in $\eta_2,\eta_3 $. For all $n\ge1$, $q_{12,n}\left( \eta _{2},\eta _{3}\right) $ satisfies $q_{12,n}\left( \eta _{2},\eta _{3}\right) =0\Longleftrightarrow \left( \eta
_{2},\eta _{3}\right) =\left( \eta _{2}^{0},\eta _{3}^{0}\right)$, and is continuously differentiable, with
${\partial q_{12,n}\left( \eta _{2},\eta _{3}\right)}/\partial (\eta _{2}, \eta _{3}^{\prime })'$ full column rank at $(\eta^0_2,\eta_3^{0'})'$.

\begin{remark}{
	\normalfont \textbf{Assumption 3(i)} is justified by the decomposition in equation (\ref{localSW}) and \textbf{Assumptions 1 and 2}. Secondly, we note that \textbf{Assumption 3} is  natural in our context. \textbf{Assumption 3(ii)} enforces that, for $q_{12,n}\left( \eta _{2},\eta _{3}\right)=\sum_{i=1}^{n}\E_n\left\{ \tilde{a}\left(
y_{2i},x_{i},z_{i}\right) \left[ y_{1i}-\Phi \left(-\eta^0
_{1}z_{i}^{\prime }\xi^0 +\eta _{2}y_{2i}+x_{i}^{\prime }\eta _{3}\right)
\right] \right\}/n$,
\begin{eqnarray*}
	-\frac{\partial q_{12,n}\left( \eta _{2},\eta _{3}\right) }{\partial (\eta _{2}, \eta
		_{3}^{\prime })'}=\frac{1}{n}\E_n\left\{ \sum_{i=1}^{n}\tilde{a}\left(
	y_{2i},x_{i},z_{i}\right) \phi _{i}\left( \eta_1^0,\eta_2,\eta_3, \theta _{2}^{0}\right) %
	( y_{2i}\;\vdots\; x_{i}^{\prime }) \right\}
\end{eqnarray*}has full column rank at $(\eta_2^0,\eta_3^{0'})'$. This is tightly related to the requirement that
the components of $( y_{2i}\;\vdots\; x_{i}^{\prime })$\ be
linearly independent, since they coincide with the explanatory variables of
the latent structural equation.}
\end{remark}

For the set,  $$\Upsilon(\theta^0_2):=\left\{\eta\in\mathbb{R}^{k_x+2}\;:\;\eta=(\tilde\rho,\alpha+\tilde\rho,\beta'-\tilde\rho\pi^{0'})',\text{ for some }\theta_1=(\tilde\rho,\alpha,\beta')'\in\Theta_1\right\},$$ we state the null hypothesis of weak identification as follows.
\medskip

\noindent\textbf{Null Hypothesis of Weak Identification:}
\begin{equation}
\text{H}_{0}\left(\varsigma_{n}=\sqrt{n}\right):\sup_{\eta\in\Upsilon(\theta_2^0) }\frac{1}{n}\left\Vert \E_{n}\left[ \sum_{i=1}^{n}\tilde{a}\left(
y_{2i},x_{i},z_{i}\right) \phi _{i}\left( \eta, \theta _{2}^{0}\right)
z_{i}^{\prime }\xi ^{0}\right] \right\Vert =O_{}\left( \frac{1}{\sqrt{n}}%
\right)  \label{nullweak}.
\end{equation}
\medskip

The set $\Upsilon(\theta^0_2)$ denotes the set of structural parameters under the parametrization in \eqref{basis}, and with $\theta_2=\theta^0_2$, so that the supremum over $\eta$ in (\ref{nullweak}) is akin to a supremum over the structural parameters $\theta _{1}$, given the true value $\theta _{2}^{0}$\ of the reduced form parameters. Both sets of structural
parameters, the initial one $\Theta _{1}$ and the reparameterized one $%
\Upsilon \left( \theta _{2}^{0}\right) $ are compact subsets of $
\mathbb{R}^{k_{x}+2}$. Based on the decomposition of \eqref{SW}, the identification strength of $\eta_1$ is determined by the rate $\varsigma_n$, and $\varsigma_n=O(\sqrt{n})$ implies that even asymptotically, the population objective function is nearly flat in $\eta_1$. Such asymptotic behavior of the objective function will lead to inconsistent estimation of $\eta^0_1$ in the rotated parameter space and for the structural parameter $\theta_1$ in the original parameter space $\Theta_1$.

\begin{remark}\label{remark_SW}\normalfont{	
It is worth noting that this
definition of weak identification is a generalization of \cite{stock2000gmm} since it is considered at the true value $\theta _{2}^{0}$\ of the
parameters of the reduced form regression equation. This must be seen as the
relevant extension of the concept of weak instruments for the context of
control variables. As explained in Section 2.3, the relevant explanatory
variables for the structural equation are ${\phi _{i}\left( \eta, \theta _{2}^{0}\right) ( z_{i}^{\prime }\xi
^{0},y_{2i},x_{i}^{\prime })^{\prime } }$.
In particular, it is the impact $z_{i}^{\prime }\xi ^{0}$, at
the true value $\xi ^{0}$, that
matters for identification and the
pruning effect of $\phi _{i}\left( \eta, \theta _{2}^{0}\right) $, also
at the true value $\theta _{2}^{0}=\left( \pi ^{0\prime },\xi ^{0\prime
}\right) ^{\prime }$. This extension is made possible by the reinforced
identification condition in \textbf{Assumption 2} (identification of $\theta^0 _{2}$ by the second
set of moment conditions in isolation) and the choice of block-diagonal
weighting matrix.}
\end{remark}
\subsection{A Distorted J-test (DJ test) for the Null of Weak Identification}\label{sec:3.3}

The decomposition in equation (\ref{SW}), along with \textbf{Assumption 3}, clarifies and confines the weak
identification issue, under the parametrization $\zeta=(\eta',\theta_2')'$, to the $\eta_1$ direction. Therefore, to construct a distorted testing approach for weak identification along the lines proposed in Section \ref{sec:int}, it is precisely this direction, and only this direction, that should be distorted. 

To this end, let $\mathcal{Z}$ denote the parameter space of $\zeta$ and define the infeasible CUE
$$
\hat\zeta_n=\argmin_{\zeta\in\mathcal{Z}}\bar{g}_n(\zeta)'S_n^{-1}(\zeta)\bar{g}_n(\zeta),
$$ and consider distorting the first component of $\hat\zeta_n$
as
\[
\hat{\zeta}_{n}^{\delta }=\hat{\zeta}_{n}+
\begin{bmatrix}
\delta _{n} &
0 &\dots&0
\end{bmatrix}'=[\hat{\tilde{\rho}}_n,\hat{\tilde\rho}_n+\hat\alpha_n,\hat\beta'_n-\hat{\tilde\rho}_n\pi^{0'},\hat\theta_{2n}^{\prime}]'+\begin{bmatrix}
\delta _{n}, &
0, &\dots&0
\end{bmatrix}'.
\]
Under the change of basis, 
 this is equivalent to distorting the CUE $\hat\theta_n$ as
\[
\left[
\begin{array}{c}
\hat{\theta}_{1n} \\
\hat{\theta}_{2n}%
\end{array}%
\right] +\left[
\begin{array}{c}
\Delta^0 _{1n} \\
\textbf{0}%
\end{array}%
\right] ,
\text{ where }
\Delta^0 _{1n}=
\begin{bmatrix}
\delta _{n}, &
-\delta _{n}, &
\delta _{n}\pi^{0}%
\end{bmatrix}' ,
\]which distorts the entire vector of structural parameters $\theta_1$. However, the above perturbation of $\hat\theta_n$ is infeasible as it depends on the unknown $\pi^0$. A feasible perturbation can be produced by replacing $\pi^0$ with its estimated value $\hat\pi_n$, which yields
\begin{equation}\label{eq:distort}
\hat\theta_n^\delta:=\left[
\begin{array}{c}
\hat{\theta}_{1n} \\
\hat{\theta}_{2n}%
\end{array}%
\right] +\left[
\begin{array}{c}
\Delta _{1n} \\
\textbf{0}%
\end{array}%
\right] ,
\text{ where }
\Delta _{1n}=
\begin{bmatrix}
\delta _{n},&
-\delta _{n},&
\delta _{n}\hat\pi_{n}%
\end{bmatrix}.
\end{equation}

As explained in Section \ref{sec:int}, under weak identification, if we distort
the CUE $\hat{\theta}_{n}$ by some small value in the
directions of weak identification, i.e., $\eta_1$, the value of the GMM criterion at $\hat{%
	\theta}_{n}^{\delta }$\ should not differ significantly from the criterion evaluated at $\hat{\theta}_{n}$. More precisely, recalling the definitions of $\bar{g}_n(\theta)$ and $S_n(\theta)$ given in Section \ref{sec:int},
\[
J_{n}( \theta,\tilde\theta ) =n\bar{g}_{n}( \theta ) ^{\prime
}S_{n}^{-1}( \tilde\theta ) \bar{g}_{n}( \theta ), \quad
J_{n}( \hat{\theta}_{n},\hat\theta_n) =\min_{\theta\in\Theta }J_{n}( \theta
,\theta),
\]we introduce the distorted J-test statistic:%
\[
J_{n}^{\delta }=n\bar{g}_{n}( \hat\theta^\delta_n ) ^{\prime
}S_{n}^{-1}( \hat\theta_n ) \bar{g}_{n}( \hat\theta^\delta_n ).
\]

To deduce the behavior of $J^\delta_n$ under the null of weak identification, we must maintain a regularity condition on the Jacobian of the moments. However, given that our null of weak identification is local about $\eta_1$, at the fixed value of $\theta_2^0$, we are only required to maintain the following assumption.\footnote{We note that Assumption 4 is guaranteed under Assumption 1 and a functional central limit theorem. See the proof of Lemma \ref{lem_uniform_deriv} in the Appendix for details. We state this result as an assumption to ease the comparison with standard results.}

\medskip

\noindent\textbf{Assumption 4:} Uniformly over $\Upsilon \left( \theta_{2}^{0}\right) $, $\sqrt{n}\left\{\partial \bar{g}_n(\eta,\theta^0_2)/\partial\eta_1-\E_n\left[\partial \bar{g}_n(\eta,\theta^0_2)/\partial\eta_1\right]\right\}\Rightarrow\Psi(\eta,\theta_2^0)$, for $\Psi(\eta,\theta_2^0)$ a mean-zero Gaussian process, and where $\Rightarrow$ denotes weak convergence in the sup-norm.

\begin{proposition}[Lack of Consistency]\label{prop1}
If \textbf{Assumptions 1-4} are satisfied, and if ${\E[\|\tilde{a}(y_{2i},x_i,z_i)z_i'\|^2]<\infty}$, then under the null of weak identification, for any $\delta _{n}=o(1)$,
\[
\plim_{n\rightarrow \infty }\sqrt{n}\left[ \bar{g}_{n}( \hat{\theta}%
_{n}^{\delta }) -\bar{g}_{n}( \hat{\theta}_{n}) \right] =0 .
\]In addition, if $\sup_{\theta \in \Theta }\left\Vert S_{n}^{-1}(
\theta ) \right\Vert =O_{p}(1)$, then
\[
\plim_{n\rightarrow \infty }\left[ J_{n}^{\delta }-J_{n}( \hat{\theta}%
_{n},\hat\theta_n) \right] =0.
\]
\end{proposition}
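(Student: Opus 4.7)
The approach is to translate the $\delta_n$-distortion of $\hat\theta_n$ into the reparametrized basis $\eta = (\tilde\rho,\tilde\rho+\alpha,\beta'-\tilde\rho\pi^{0\prime})'$ from \eqref{basis}, where weakness is isolated in the single direction $\eta_1$, and then Taylor expand the structural moments. Part two then follows from the quadratic-form identity $a'Ma - b'Mb = (a+b)'M(a-b)$ combined with Cauchy-Schwarz in the inner product induced by $M = S_n^{-1}(\hat\theta_n)$.

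For part one, a direct computation from \eqref{eq:distort} shows that in $\eta$-coordinates the distortion of $\hat\theta_n$ is $\Delta\eta_n = (\delta_n,\,0,\,\delta_n(\hat\pi_n - \pi^0)')'$. Thus only the weakly identified direction $\eta_1$ is perturbed to first order, while the $\eta_3$ component picks up a secondary perturbation of size $\delta_n\cdot O_p(n^{-1/2})$ because the reduced-form estimator $\hat\pi_n$ is root-$n$ consistent (Assumption 2 identifies $\theta_2^0$ from the second block in isolation). The second moment block $\bar g_{2n}$ is unaffected since it depends only on $\theta_2$. A mean-value expansion of $\bar g_{1n}$ at $\hat\eta_n$, multiplied by $\sqrt n$, reduces part one to showing that (i) the $\eta_3$-term is $o_p(1)$ and (ii) the $\eta_1$-term is $o_p(1)$. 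Item (i) is immediate: $\sqrt n\,\delta_n \cdot O_p(1)\cdot O_p(n^{-1/2}) = \delta_n O_p(1) = o_p(1)$. For (ii), decompose the Jacobian as $\partial m_{1n}/\partial\eta_1 + \{\partial\bar g_{1n}/\partial\eta_1 - \partial m_{1n}/\partial\eta_1\}$; the deterministic part equals
\[\frac{\partial m_{1n}}{\partial\eta_1}(\eta,\theta_2^0) = \frac{1}{n}\E_n\Big[\sum_{i=1}^n\tilde a(y_{2i},x_i,z_i)\phi_i(\eta,\theta_2^0)z_i'\xi^0\Big],\]
which under $H_0(\varsigma_n=\sqrt n)$ is $O(n^{-1/2})$ uniformly in $\eta$, while the stochastic part is $O_p(n^{-1/2})$ uniformly by Assumption 4. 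Hence (ii) is $\sqrt n\,\delta_n \cdot O_p(n^{-1/2}) = o_p(1)$, giving part one.

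For part two, set $a_n = \sqrt n\bar g_n(\hat\theta_n^\delta)$, $b_n = \sqrt n\bar g_n(\hat\theta_n)$ and $M_n = S_n^{-1}(\hat\theta_n)$. The identity and Cauchy-Schwarz in the $M_n$-inner product yield $|J_n^\delta - J_n(\hat\theta_n,\hat\theta_n)| \leq \sqrt{(a_n+b_n)'M_n(a_n+b_n)}\cdot\sqrt{(a_n-b_n)'M_n(a_n-b_n)}$. The second factor is bounded above by $\|M_n\|^{1/2}\cdot\|a_n-b_n\| = O_p(1)\cdot o_p(1) = o_p(1)$, using part one and the assumed $\|S_n^{-1}\| = O_p(1)$. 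For the first factor, $b_n'M_nb_n = J_n(\hat\theta_n,\hat\theta_n) \leq J_n(\theta^0,\theta^0) = O_p(1)$ by \eqref{upper} and the stated $\chi^2(H)$ limit, and by the triangle inequality in the $M_n$-norm $\|a_n\|_{M_n} \leq \|b_n\|_{M_n} + \|a_n-b_n\|_{M_n} = O_p(1)$, whence $\|a_n+b_n\|_{M_n} = O_p(1)$. Multiplying yields $o_p(1)$.

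The only real technical point is replacing $\theta_2^0$ by $\hat\theta_{2n}$ in the Jacobian after the mean-value expansion, which is a routine empirical-process perturbation using smoothness of $\phi$ together with root-$n$ consistency of $\hat\theta_{2n}$, and crucially does not depend on the identification strength of $\theta_1$. Everything else reduces to the rate calculation $\sqrt n\,\delta_n\cdot O(n^{-1/2}) = o(1)$ that the null hypothesis is precisely designed to supply.
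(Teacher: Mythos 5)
Your proof is correct and, for the first claim, follows essentially the same route as the paper: a mean-value expansion in the rotated coordinate $\eta_1$, a decomposition of the $\eta_1$-Jacobian into its expectation (which the null makes $O(n^{-1/2})$ uniformly over $\Upsilon(\theta_2^0)$) and its centered part ($O_p(n^{-1/2})$ by Assumption 4), and the rate calculation $\sqrt{n}\,\delta_n\,O_p(n^{-1/2})=o_p(1)$; you are in fact slightly more careful than the paper in noting that the feasible distortion \eqref{eq:distort} also moves $\eta_3$ by $\delta_n(\hat\pi_n-\pi^0)=\delta_n O_p(n^{-1/2})$, a term the paper's displayed expansion absorbs without comment, while the paper instead makes explicit (via a second-order Taylor term in $\theta_2$) the replacement of $\theta_2^0$ by $\hat\theta_{2n}$ that you flag as the remaining routine step. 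For the second claim the paper's appendix offers no explicit argument, and your polarization-plus-Cauchy--Schwarz derivation in the $S_n^{-1}(\hat\theta_n)$-inner product is a clean way to supply one: it needs only $\|a_n-b_n\|=o_p(1)$ from part one, the bound $J_n(\hat\theta_n,\hat\theta_n)\le J_n(\theta^0,\theta^0)=O_p(1)$ from \eqref{upper}, and $\sup_{\theta\in\Theta}\|S_n^{-1}(\theta)\|=O_p(1)$, and it avoids having to bound $\|\sqrt{n}\bar g_n(\hat\theta_n)\|$ in the Euclidean norm (which would require a lower bound on the eigenvalues of $S_n^{-1}$).
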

\textbf{Proposition \ref{prop1}} demonstrates that under the null of weak identification, the curvature of the objective function is insensitive to a small departure from the CUE, indicating the lack of consistency of $\hat{\theta}_{n}$. By adapting the general testing approach of \citet{antoine2017testing}, Proposition \ref{prop1} paves the way for a testing
strategy for weak instruments in discrete choice models. Recall that the number of model parameters is $p=2+2k_x+k_z$, and $H$ denotes the number of moments.

\begin{theorem}[Distorted J-test: Under the Null]\label{thm0}
Under \textbf{Assumptions 1-4} and the null of weak identification, for
any deterministic sequence $\delta _{n}=o(1)$, define the
distorted J-test by the rejection region:%
\[
W_{n}^{\delta }=\left\{ J_{n}^{\delta }>\chi _{1-\alpha }^{2}\left(
H+1-p\right) \right\},
\]
where $\chi _{1-\alpha }^{2}\left( H+1-p\right) $ is the $%
\left( 1-\alpha \right) $\ quantile of the Chi-square distribution with $(H+1-p)$ degrees of freedom. Under the null hypothesis of weak identification, $W_{n}^{\delta }$ {has asymptotic size of at most} $\alpha $.	
\end{theorem}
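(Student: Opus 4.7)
The plan is to combine Proposition~\ref{prop1} with an upper bound on the CUGMM objective at its minimizer. By Proposition~\ref{prop1}, under the null of weak identification we have $J_n^{\delta} = J_n(\hat{\theta}_n,\hat{\theta}_n) + o_p(1)$, so it suffices to show that $J_n(\hat{\theta}_n,\hat{\theta}_n)$ is asymptotically stochastically dominated by a chi-square random variable with $H+1-p$ degrees of freedom.

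Working in the rotated parametrization $\zeta=(\eta_1,\eta_2,\eta_3',\theta_2')'$ of Section~\ref{sec:null}, I would exploit the elementary CUGMM monotonicity $J_n(\hat{\zeta}_n,\hat{\zeta}_n)\le J_n(\bar{\zeta}_n,\bar{\zeta}_n)$ for any feasible candidate $\bar{\zeta}_n$. Take $\bar{\zeta}_n=(\eta_1^{0},\tilde{\zeta}_n)$, where $\tilde{\zeta}_n$ denotes the restricted CUE that fixes the weakly identified coordinate at the (unknown) true value $\eta_1^{0}$ and minimizes $J_n$ over the remaining $p-1$ coordinates $(\eta_2,\eta_3,\theta_2)$. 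Under the null, Assumption~3(ii) guarantees that $(\eta_2,\eta_3)$ are strongly identified by $q_{12,n}$ with a full-column-rank Jacobian, while Assumption~2 and the block-diagonal structure of $S_n$ identify $\theta_2$ from the reduced-form moments at the usual $\sqrt{n}$-rate; hence $\tilde{\zeta}_n$ is a classical strongly-identified CUE.

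Standard CUGMM asymptotics for this restricted problem then deliver $J_n(\bar{\zeta}_n,\bar{\zeta}_n)\Rightarrow\chi^2(H-(p-1))=\chi^2(H+1-p)$: the first-order conditions for the $p-1$ free coordinates project $\sqrt{n}\bar{g}_n(\bar{\zeta}_n)$ onto the orthogonal complement of the Jacobian with respect to $(\eta_2,\eta_3,\theta_2)$, an $H-(p-1)$ dimensional subspace. Combining with the monotonicity bound and Proposition~\ref{prop1} yields
\[
\limsup_{n\to\infty}\Pr(W_n^{\delta})\le\Pr\bigl(\chi^2(H+1-p)>\chi^2_{1-\alpha}(H+1-p)\bigr)=\alpha,
\]
which is the desired conclusion.

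The main obstacle is verifying that $\tilde{\zeta}_n$ really behaves as in the strongly-identified case despite the drifting DGP. The delicate point is to control the bias induced by the semi-separable decomposition~\eqref{SW}: at $\bar{\zeta}_n$ the population moment equals $q_{11,n}(\eta^{0})/\sqrt{n}+q_{12,n}(\tilde{\zeta}_n)$, and when scaled by $\sqrt{n}$ the first term contributes an $O(1)$ drift. By Assumption~3(i) this drift vanishes pointwise, since $q_{11,n}(\eta^{0})\to q_{11}(\eta^{0})=0$; promoting this to the uniform control needed for the argmin-mapping argument underlying $\tilde{\zeta}_n$ relies on Assumption~4 and the compactness of $\Upsilon(\theta_2^{0})$. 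Once this uniform control is in place, the block-diagonal variance structure and the usual projection argument make the chi-square limit routine.
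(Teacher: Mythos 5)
Your proposal is correct and follows essentially the same route as the paper: apply Proposition~\ref{prop1} to replace $J_n^{\delta}$ by $J_n(\hat{\theta}_n,\hat{\theta}_n)+o_p(1)$, bound the latter by the criterion at the restricted CUE that fixes $\eta_1=\eta_1^{0}$ and minimizes over the remaining $p-1$ coordinates, and invoke standard over-identification asymptotics for that strongly identified restricted problem to obtain the $\chi^2(H+1-p)$ limit. The only small remark is that the drift you worry about is even more benign than you suggest: since $q_{11,n}(\eta)=\varsigma_n[m_{1n}(\eta,\theta_2^0)-m_{1n}(\eta_1^0,\eta_2,\eta_3,\theta_2^0)]$, it vanishes identically (not just in the limit) once $\eta_1$ is pinned at $\eta_1^{0}$.
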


\medskip

{As discussed in Section \ref{sec:int}, the CUGMM framework allows us to control the size of our test by ensuring that we can obtain a convenient upper bound for $J_n^\delta$ under the null of weak
	identification. Since there is only a single direction of weakness in the rotated parameter space, this bound can be based on the $\chi^2(H+1-p)$ distribution; please see the proof of Theorem \ref{thm0} for details. While the test statistic $J_n^\delta$ coincides with the one given in Section \ref{sec:int}, we have improved the asymptotic power of the test $W_n^\delta$ by using a
critical value calculated from $\chi ^{2}\left( H+1-p\right) $ instead
of $\chi ^{2}\left( H\right) $. This power gain is obviously important since
we may be afraid that our test would be overly conservative.}

\subsection{Estimation and Testing Under the Alternative}\label{sec:alt}
In this section, we prove that $W_n^\delta$, the distorted J-test based on $J_n^\delta$, is consistent under the alternative. Before presenting this result, we first discuss the asymptotic behavior of the CUE under the alternative.

\subsubsection{Estimation Under the Alternative}
We first deduce the properties of the infeasible CUE for the rotated parameter $\zeta=(\eta',\theta_2')'$. The vector $\zeta$ represents the following change of basis in the parameter space: 
\begin{equation}\label{eq:basis}
\theta=R\zeta=R\begin{pmatrix}
\eta\\\theta_2
\end{pmatrix},\text{ where } R=\begin{pmatrix}
R_1&\textbf{O}\\\textbf{O}&\textbf{I}_{k_x+k_z}
\end{pmatrix},\;R_{1}=\begin{pmatrix}
1&0&0\\-1&1&0\\\pi^0&\mathbf{0}&\textbf{I}_{k_x}
\end{pmatrix}.\end{equation}
For $\mathcal{Z}$ denoting the parameter space of $\zeta$, the CUE of $\zeta^0$ is given by
\begin{equation*}
\hat\zeta_n =\argmin_{\zeta\in \mathcal{Z} }\bar{g}_{n}(R\zeta)^{\prime }S_{n}^{-1}(R\zeta)\bar{g}_{n}(R\zeta).\label{eq:rotparm}
\end{equation*}Once the asymptotic properties of $\hat\zeta_n$ have been deduced, the asymptotic behavior of $\hat\theta_n$ can be ascertained by applying the change of basis $\hat\theta_n=R\hat\zeta_n$ in equation \eqref{eq:basis}. 

To deduce the properties of $\hat\zeta_n$ under the alternative, we first recall that the null of weak identification, defined by \eqref{nullweak}, implies that
$$
\sup_{\eta\in\Upsilon(\theta^0_2)}\left\|\frac{1}{n}\E_{n}\left\{ \sum_{i=1}^{n}\frac{\partial g_i(\eta,\theta^0_2)}{\partial\eta_1}\right\}\right\|=\sup_{\eta\in\Upsilon(\theta^0_2)}\left\|\frac{1}{n}\E_{n}\left\{ \sum_{i=1}^{n}\tilde{a}
\left( y_{2i},x_{i},z_{i}\right) \phi _{i}\left( \eta ^{},\theta
_{2}^{0}\right) z_{i}^{\prime }\xi ^{0}\right\}\right\|=O(1/\sqrt{n}).
$$The alternative hypothesis to this null implies the existence of a deterministic sequence $\varsigma_{n}=o(\sqrt{n})$ such that
$$
\limsup_{n\rightarrow\infty}\sup_{\eta\in\Upsilon(\theta^0_2)}\left\|\frac{1}{n}\E_{n}\left\{ \sum_{i=1}^{n}\tilde{a}%
\left( y_{2i},x_{i},z_{i}\right) \phi _{i}\left( \eta ^{},\theta
_{2}^{0}\right) z_{i}^{\prime }\xi ^{0}\right\}\varsigma_{n}\right\|>0.
$$
To deduce the behavior of the CUE $\hat\zeta_n$ under the alternative, we slightly reinforce this condition as follows. \medskip

\noindent\textbf{Assumption 5:} Under the alternative hypothesis, there exists a deterministic sequence $\varsigma_{n}=o(\sqrt{n})$ and
a continuous, and deterministic vector function $V^0(\eta)$ such that, ${\inf_{\eta\in\Upsilon(\theta^0_2)}\|V^0(\eta)\|>0}$, and\footnote{We note here that $V^0(\eta)$ technically depends on the drifting pseudo-true value $\theta^0_2$, but subsume this dependence in the definition to simply notations.}
\[
\lim_{n\rightarrow\infty}\sup_{\eta\in\Upsilon(\theta^0_2)}\left\|\frac{1}{n}\E_{n}\left\{ \sum_{i=1}^{n}\tilde{a}%
\left( y_{2i},x_{i},z_{i}\right) \phi _{i}\left( \eta ^{},\theta
_{2}^{0}\right) z_{i}^{\prime }\xi ^{0}\right\} \varsigma _{n}-V^0(\eta)\right\|=0.
\]

\begin{remark}
	\normalfont
Even though \textbf{Assumption 5} arguably limits the scope of the
alternative hypothesis, it is more general than if we were to follow the approach of \cite{staiger1997instrumental} and characterize identification strength only through the reduced form regression equation. In the latter case, one would consider that the reduced form regression evolves according to the drifting DGP
\[
\E_{n}[y_{2i}\left\vert x_{i},z_{i}\right] =x_{i}^{\prime }\pi^0
+z_{i}^{\prime }\xi^0 _{n}.
\]Under the null of weak identification, we have that
$\xi^0 _{n}=O(1/\sqrt{n}).
$ In contrast, \textbf{Assumption 5} would require that, for some $\gamma^0\in\mathbb{R}^{k_z} $ with $\|\gamma^0\|>0$ and some $\varsigma_{n}=o(\sqrt{n})$,
$$\lim_{n\rightarrow \infty }\varsigma_{n}\xi ^0_{n}=\gamma^0,\;\text{ and }
V^0(\eta)=\E_n\left[\tilde{a}(y_{2i},x_i,z_i)\phi_i(\eta,\theta^0_2)z_i'\right]\gamma^0\neq0.
$$However, as explained in Section \ref{sec:2.3}, this approach to characterize
identification strength is not sufficient in our opinion, since it only accounts for the instrument strength in the reduced form regression, $\xi^0_n$, and does not account for the interactions between the instrumental function $\tilde{a}(y_{2i},x_i,z_i)$ and $\phi
_{i}\left( \eta,\theta _{2n}^{0}\right) z_{i}^{\prime }\xi _{n}^{0}$, which may result in the pruning of large realizations of the instruments via the behavior of $\phi_{i}\left( \eta,\theta _{2n}^{0}\right) $.
\end{remark}

By defining the alternative hypothesis using \textbf{Assumption 5}, we clearly partition the two possible cases for estimation of $\zeta^0$: (i) if identification is weak, $\hat{\zeta}_{n}$\ is not consistent (as implied by Proposition \ref{prop1}), nor are other commonly applied estimators such as 2SCML or Quasi-LIML estimators; (ii) when identification is not weak, $\hat{\zeta}_{n}$ is consistent.

\begin{proposition}[Consistency]\label{lem:cons}If \textbf{Assumptions 1-5} are satisfied, and if ${\sup_{\zeta\in \mathcal{Z}}\|S_n^{-1}(\zeta)\|=O_p(1)}$, then $\|\hat\zeta_n-\zeta^0\|=o_p(1)$.
\end{proposition}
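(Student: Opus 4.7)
My plan is to adapt the classical Wald-type consistency argument for extremum estimators, but with careful bookkeeping of the different rates at which the CUGMM objective diverges along different directions of the rotated parameter space $\mathcal{Z}$. The key structural fact is the semi-separable decomposition $m_{1n}(\eta,\theta_2^0) = q_{11,n}(\eta)/\varsigma_n + q_{12,n}(\eta_2,\eta_3)$ from Assumption 3 plus Assumption 5, which ensures that under the alternative we have $\varsigma_n = o(\sqrt{n})$ while the $\eta_1$-direction Jacobian, scaled by $\varsigma_n$, is uniformly bounded away from zero.

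The first step is to establish a standard uniform law of large numbers for $\bar{g}_n(R\zeta)$ over the compact set $\mathcal{Z}$, using Assumption 1 (i.i.d.\ with $(2+\kappa)$ moments) together with the moment bounds on the instrumental functions $\tilde{a}(\cdot)$ and $\tilde{b}(\cdot)$, yielding $\sup_{\zeta\in\mathcal{Z}}\|\bar{g}_n(R\zeta)-m_n(R\zeta)\|=O_p(n^{-1/2})$. Combined with $\sup_\zeta\|S_n^{-1}(R\zeta)\|=O_p(1)$, this will let me reduce the analysis of the sample objective $J_n(\zeta)/n$ to the behavior of $m_n(R\zeta)'W(R\zeta)m_n(R\zeta)$ for some positive-definite limit $W$.

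The second step is an upper bound on the optimum: since $\sqrt{n}\bar{g}_n(R\zeta^0)$ is asymptotically Gaussian with variance consistently estimated by $S_n(R\zeta^0)$, we have $J_n(R\zeta^0,R\zeta^0)=O_p(1)$, and therefore $J_n(\hat\zeta_n,\hat\zeta_n)=O_p(1)$ by the CUGMM minimization property. The third step, which is the crux of the argument, is to show that for any $\epsilon>0$, the infimum $\inf_{\|\zeta-\zeta^0\|>\epsilon}J_n(R\zeta,R\zeta)\to\infty$ in probability. I partition the complement of the ball into three regions: (a) $\|\theta_2-\theta_2^0\|>\epsilon_1$, where the reduced-form block $n\bar{g}_{2n}'S_{22,n}^{-1}\bar{g}_{2n}$ is of order $n$ by Assumption 2 applied to $\tilde{b}$; (b) $\theta_2$ near $\theta_2^0$ but $\|(\eta_2,\eta_3)-(\eta_2^0,\eta_3^0)\|>\epsilon_2$, where the full-rank Jacobian statement in Assumption 3(ii) together with compactness yields $\|q_{12,n}(\eta_2,\eta_3)\|\geq c>0$ and hence $n\|\bar{g}_{1n}\|^2$ of order $n$; (c) the remaining region where only $|\eta_1-\eta_1^0|>\epsilon_3$, where integration of the uniformly-non-degenerate drift $V^0(\eta)/\varsigma_n$ from Assumption 5 gives $\|m_{1n}\|\gtrsim 1/\varsigma_n$, and since $\varsigma_n=o(\sqrt{n})$, the first term dominates the sampling noise $O_p(1/\sqrt{n})$, producing $J_n\gtrsim n/\varsigma_n^2\to\infty$.

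The main obstacle is region (c): the CUGMM objective converges to zero along $\eta_1\neq\eta_1^0$ when scaled by $1/n$, so the standard Newey--McFadden identification-uniqueness argument does not apply on $Q_n(\zeta)=\bar g_n'S_n^{-1}\bar g_n$. What saves the argument is precisely the CUGMM scaling by $n$: the gap between the identification rate $1/\varsigma_n$ and the stochastic error $1/\sqrt{n}$ is amplified to $n/\varsigma_n^2\to\infty$. Making this rigorous requires showing that the lower bound $\|m_{1n}(\eta,\theta_2^0)\|\geq c/\varsigma_n$ holds \emph{uniformly} over $|\eta_1-\eta_1^0|>\epsilon_3$ with $(\eta_2,\eta_3,\theta_2)$ in a small neighborhood of their true values, which follows from the continuity of $V^0$ on compacts and its uniform lower bound $\inf_\eta\|V^0(\eta)\|>0$. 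Once the three-region divergence is established, a standard argmin-continuity argument---combined with $J_n(\hat\zeta_n,\hat\zeta_n)=O_p(1)$ from step two---forces $\|\hat\zeta_n-\zeta^0\|=o_p(1)$.
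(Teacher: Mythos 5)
Your overall strategy is sound and genuinely different from the paper's. The paper does not prove consistency by showing that the objective diverges uniformly outside neighbourhoods of $\zeta^0$; instead, following Theorem 2.1 of \cite{antoine2012efficient}, it writes $\E_n[\bar g_n(\zeta)]=\widetilde\Lambda_n\gamma(\zeta)/\sqrt{n}+q_2(\eta_2,\eta_3)$, uses the inequality $J_n(\zeta^0,\zeta^0)\ge J_n(\hat\zeta_n,\hat\zeta_n)$ to derive a \emph{quadratic inequality} in $z_n=\|\widetilde\Lambda_n\gamma(\hat\zeta_n)+\sqrt{n}q_2(\hat\eta_{2n},\hat\eta_{3n})\|$, solves it to get $z_n=O_p(1)$, hence $\|\gamma(\hat\zeta_n)+q_2(\hat\eta_{2n},\hat\eta_{3n})\|=O_p(\varsigma_n/\sqrt{n})=o_p(1)$, and only then invokes identification plus continuity. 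That route buys a rate on the drift term and, crucially, treats the deviations in $\eta_1$ and $(\eta_2,\eta_3)$ \emph{jointly} through a single norm; your route is the more classical Wald argument and would, if completed, be somewhat more transparent. Two of your ingredients are fine: the $O_p(n^{-1/2})$ uniform rate for $\bar g_n-m_n$ needs the Donsker property rather than a bare ULLN, but the paper's Lemma \ref{lem_uniform} supplies exactly that; and the upper bound $J_n(\hat\zeta_n,\hat\zeta_n)=O_p(1)$ is the same starting point the paper uses.

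The genuine gap is in your region (c). You claim that $|\eta_1-\eta_1^0|>\epsilon_3$ together with $\inf_\eta\|V^0(\eta)\|>0$ gives $\|m_{1n}(\eta,\theta_2)\|\gtrsim 1/\varsigma_n$ uniformly over that region. But in region (c) the infimum is also taken over $(\eta_2,\eta_3)$ in a fixed $\epsilon_2$-neighbourhood of the truth, and $m_{1n}=(\eta_1-\eta_1^0)V^0(\bar\eta)/\varsigma_n\{1+o(1)\}+q_{12,n}(\eta_2,\eta_3)$. The adversarial choice of $(\eta_2,\eta_3)$ at distance $O(1/\varsigma_n)$ from $(\eta_2^0,\eta_3^0)$ can cancel the component of the $\eta_1$-drift lying in the column space of $\partial q_{12,n}/\partial(\eta_2,\eta_3')'$, leaving only the orthogonal residual $|\eta_1-\eta_1^0|\,\|(I-P_D)V^0(\eta)\|/\varsigma_n$, where $P_D$ projects onto that column space. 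Your lower bound therefore requires that $V^0(\eta)$ not lie in the range of $\partial q_{12,n}/\partial(\eta_2,\eta_3')'$ --- i.e.\ the joint full-column-rank condition that the paper only establishes in Lemma \ref{local-lem} --- and this does not follow from $\inf_\eta\|V^0(\eta)\|>0$ and continuity alone. To repair the argument you must either handle the $(\eta_1,\eta_2,\eta_3)$ deviations jointly (which is precisely what the paper's quadratic inequality in the norm of the full drift vector accomplishes), or add the transversality/rank condition explicitly and work with the projected residual in region (c).
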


The asymptotic distribution of $\hat\zeta_n$ depends on the behavior of the Jacobian for the moments. Under \textbf{Assumption 3 and 5}, the scaled Jacobian of the moment functions, as defined below in Lemma \ref{local-lem}, is full rank under the following mild assumption, which, if we take $\tilde{b}(x_i,z_i)=\left(x_i'\;:\;z_i'\right)'$ is nothing but the standard rank condition on the reduced form regression. \medskip

\noindent\textbf{Assumption 6:} For all $n\ge1$,  $\E_n[\tilde{b}(x_i,z_i)(x_i'\;:\;z_i')]$ has column rank $(k_x+k_z)=\text{dim}(\theta_2)$. \medskip

\begin{lemma}\label{local-lem}
 Under \textbf{Assumption 1-6}, for a given sequence $\varsigma_{n}=o(\sqrt{n})$, the matrix $$M=\plim_{n\rightarrow\infty}\left\{\frac{\partial\bar{g}_n(\zeta^0)}{\partial\zeta'}\right\}\Lambda_n,\text{ where }\Lambda_n=\begin{bmatrix}\varsigma_{n}&\textbf{O}_{p-1}\\\textbf{O}_{p-1}&\textbf{I}_{p-1}
\end{bmatrix},
$$exists and is full column rank.
\end{lemma}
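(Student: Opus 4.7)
Since $\theta=R\zeta$ with $R$ defined in \eqref{eq:basis}, the chain rule gives $\partial\bar g_n(\zeta)/\partial\zeta'=[\partial\bar g_n(R\zeta)/\partial\theta']R$, so my task reduces to computing $M=\plim_{n\to\infty}[\partial\bar g_n(\theta^0)/\partial\theta']R\Lambda_n$ in closed form and then checking its rank. Differentiating the moments in \eqref{eq:mom} at $\theta^0$ yields, with $\phi_i^0=\phi_i(\theta^0)$ and $v_i=v_i(\theta_2^0)$,
\[
\frac{\partial g_{1i}(\theta^0)}{\partial\theta'}=-\tilde a_i\phi_i^0\bigl[v_i,\ y_{2i},\ x_i',\ -\tilde\rho^0 x_i',\ -\tilde\rho^0 z_i'\bigr],\qquad \frac{\partial g_{2i}(\theta^0)}{\partial\theta'}=\bigl[0,\,0,\,0,\,-\tilde b_i x_i',\,-\tilde b_i z_i'\bigr],
\]
and the algebraic identity $v_i-y_{2i}+x_i'\pi^0=-z_i'\xi^0$ (from \textbf{A.2} and \textbf{A.4}) collapses the first column of $(\partial g_{1i}/\partial\theta')R_1$ to $\tilde a_i\phi_i^0 z_i'\xi^0$. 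The block-identity of $R$ on the $\theta_2$-coordinates leaves the remaining columns unchanged.

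Right-multiplying by $\Lambda_n=\mathrm{diag}(\varsigma_n,I_{p-1})$ scales only this first column by $\varsigma_n$. By \textbf{Assumption 5} the rescaled first column converges to $(V^0(\eta^0)',0')'$; by a standard LLN (using the $2+\kappa$ moment bound in \textbf{A.1} together with the finite moments of $\tilde a_i$ and $\tilde b_i$) every other column converges to its population mean. Writing $A_2=\plim(1/n)\sum\tilde a_i\phi_i^0 y_{2i}$, $A_3=\plim(1/n)\sum\tilde a_i\phi_i^0 x_i'$, $A_4=\plim(1/n)\sum\tilde a_i\phi_i^0 z_i'$, $B_1=\plim(1/n)\sum\tilde b_i x_i'$, $B_2=\plim(1/n)\sum\tilde b_i z_i'$, I obtain
\[
M=\begin{pmatrix}V^0(\eta^0)&-A_2&-A_3&\tilde\rho^0 A_3&\tilde\rho^0 A_4\\ 0&0&0&-B_1&-B_2\end{pmatrix},
\]
which establishes existence.

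For full column rank I would set $Mc=0$ with $c=(c_1,c_2,c_3',c_4',c_5')'$ partitioned conformably. The lower block reads $[B_1,B_2](c_4',c_5')'=0$, and \textbf{Assumption 6} forces $c_4=0$ and $c_5=0$. The upper block then reduces to $V^0(\eta^0)c_1-A_2 c_2-A_3 c_3=0$. Since \textbf{Assumption 3(ii)} identifies $-[A_2,A_3]$ with the limit of $\partial q_{12,n}/\partial(\eta_2,\eta_3')'$ at $(\eta_2^0,\eta_3^0)$ and gives it full column rank, it remains only to show $V^0(\eta^0)\notin\mathrm{col}([A_2,A_3])$ to pin down $c_1=0$ (and then $c_2=c_3=0$). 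Substituting the reduced form $y_{2i}=x_i'\pi^0+z_i'\xi^0+v_i$ into $A_2$, and noting that the $z_i'\xi^0$ contribution is of order $\xi^0_n=o(1)$ under the drifting DGP while $A_2$ is only normalised by $n$, I would obtain $A_2=A_3\pi^0+\plim(1/n)\sum\tilde a_i\phi_i^0 v_i$; hence $\mathrm{col}([A_2,A_3])$ coincides with the span of $A_3$ and $\plim(1/n)\sum\tilde a_i\phi_i^0 v_i$, a $(v_i,x_i)$-flavoured subspace that cannot contain the $z_i$-direction $V^0(\eta^0)=A_4\gamma^0$ (with $\gamma^0=\lim\varsigma_n\xi^0_n$) by reduced-form exogeneity in \textbf{A.2} together with the nondegeneracy $\inf_\eta\|V^0(\eta)\|>0$ in \textbf{Assumption 5}. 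The main obstacle is precisely this last linear-independence step, which requires carefully tracking which asymptotic contributions vanish under the drifting DGP; the remainder of the argument is a mechanical chain-rule computation followed by a uniform LLN.
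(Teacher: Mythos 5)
Your computation of $M$ is correct and, after the change of coordinates, produces essentially the same matrix as the paper (up to the north-east block, which is immaterial for the rank argument): the paper works directly in the rotated space and reads the columns of $\E_n[\partial\bar g_n/\partial\eta']\Lambda_{1n}$ off the decomposition $m_{1n}=q_{11,n}/\varsigma_n+q_{12,n}$ of Assumption 3, identifying the $\varsigma_n$-scaled first column with $V^0(\eta^0)$ via Assumption 5 and the remaining columns with $\partial q_{12,n}/\partial(\eta_2,\eta_3')'$, while you obtain the identical objects by differentiating in $\theta$ and applying $R_1$ (the identity $v_i-y_{2i}+x_i'\pi^0=-z_i'\xi^0$ doing the same work as the $q_{11,n}$ construction). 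The existence claim, the control of the centered part of the Jacobian, and the elimination of $(c_4,c_5)$ via Assumption 6 are all fine and match the paper.

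The gap is exactly where you flag it: the step $V^0(\eta^0)\notin\mathrm{col}([A_2,A_3])$. The argument you sketch --- that $\mathrm{col}([A_2,A_3])$ is a ``$(v_i,x_i)$-flavoured subspace'' which ``cannot contain the $z_i$-direction'' by reduced-form exogeneity --- is not a proof. All of $V^0(\eta^0)=\E[\tilde a_i\phi_i^0 z_i']\gamma^0$, $\E[\tilde a_i\phi_i^0 x_i']$ and $\E[\tilde a_i\phi_i^0 v_i]$ are fixed vectors in $\mathbb{R}^{H_1}$; the condition $\E[v_i\mid x_i,z_i]=0$ constrains the joint law of the underlying random variables but implies nothing about linear independence of these moment vectors, because $\tilde a_i$ and $\phi_i^0$ are functions of $(y_{2i},x_i,z_i)$ that couple all directions, and the nondegeneracy $\inf_\eta\|V^0(\eta)\|>0$ only rules out $V^0(\eta^0)=0$, not $V^0(\eta^0)\in\mathrm{col}([A_2,A_3])$. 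For what it is worth, the paper's own proof does not do better on this point: it writes $M_1(\eta^0)$ as a block-diagonal matrix with $V^0(\eta^0)$ and $\partial q_{12,n}/\partial(\eta_2,\eta_3')'$ placed in ``orthogonal'' blocks and adds the column ranks, which amounts to asserting the same linear independence rather than deriving it from Assumptions 1--6. So you have correctly isolated the one piece of genuine content in the lemma, but neither your heuristic nor the paper's block-diagonal shortcut closes it; a watertight treatment would either impose the linear independence of $V^0(\eta^0)$ from the columns of $\plim n^{-1}\sum_i\tilde a_i\phi_i^0(y_{2i}\,\vdots\,x_i')$ as an explicit assumption or verify it for the particular instrumental functions in use.
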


Given the full-rank nature of the scaled Jacobian, we would expect the CUE to be asymptotically normal. In particular, under the alternative (as defined by \textbf{Assumptions 3 and 5}), we can then deduce the following result.
\begin{theorem}[Asymptotic Normality]\label{theoremN} If \textbf{Assumptions 1-6} are satisfied  then
	$$\sqrt{n}\Lambda_n^{-1}(\hat{\zeta}_n-\zeta^0)\overset{d}{\rightarrow}\mathcal{N}\left(0,[M'S^{-1}M]^{-1}\right),\text{ where }S:=\plim_{n\rightarrow \infty }S_n(\zeta^0).$$
\end{theorem}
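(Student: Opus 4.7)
The plan is to carry out the classical CUGMM expansion in the rotated coordinate $\zeta$, but with the diagonal rescaling $\Lambda_n = \mathrm{diag}(\varsigma_n, I_{p-1})$ inserted before inverting the Jacobian, so that Lemma \ref{local-lem} supplies a full-rank limit in the weak direction. Proposition \ref{lem:cons} already delivers $\hat\zeta_n \overset{p}{\to}\zeta^0$, so I would start from the interior first-order condition of the CUGMM problem,
\[
\left[\frac{\partial \bar g_n(\hat\zeta_n)}{\partial \zeta'}\right]' S_n^{-1}(\hat\zeta_n)\,\bar g_n(\hat\zeta_n) + r_n = 0,
\]
where $r_n$ collects the contribution of differentiating $S_n^{-1}(\zeta)$ with respect to $\zeta$. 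Because $S_n(\zeta)$ is block-diagonal and $\sup_\zeta\|S_n^{-1}(\zeta)\| = O_p(1)$, standard CUGMM arguments (as in \citealp{stock2000gmm} and \citealp{antoine2012efficient}) together with the moment bounds in Assumption 1 give $r_n = o_p(n^{-1/2})$ after premultiplication by $\Lambda_n$, since $\bar g_n(\hat\zeta_n) = O_p(n^{-1/2})$ by the CLT applied at the consistent estimator.

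Next I would perform a mean-value expansion $\bar g_n(\hat\zeta_n) = \bar g_n(\zeta^0) + \frac{\partial \bar g_n(\bar\zeta_n)}{\partial \zeta'}(\hat\zeta_n - \zeta^0)$ and insert $\Lambda_n\Lambda_n^{-1}$ between the Jacobian and the deviation, producing
\[
\sqrt{n}\,\bar g_n(\hat\zeta_n) = \sqrt{n}\,\bar g_n(\zeta^0) + \Bigl[\tfrac{\partial \bar g_n(\bar\zeta_n)}{\partial \zeta'}\Lambda_n\Bigr]\sqrt{n}\,\Lambda_n^{-1}(\hat\zeta_n - \zeta^0).
\]
Lemma \ref{local-lem} yields $\frac{\partial \bar g_n(\zeta^0)}{\partial \zeta'}\Lambda_n \overset{p}{\to} M$; combining this with Assumption 4 (uniform convergence of the derivative in the $\eta_1$ direction) and the consistency of $\hat\zeta_n$ and $\bar\zeta_n$, the scaled Jacobian evaluated at the intermediate point and at $\hat\zeta_n$ both share the limit $M$. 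Substituting the expansion into the first-order condition, premultiplying by $\Lambda_n$, and using $\Lambda_n[\partial \bar g_n/\partial\zeta']' = [(\partial \bar g_n/\partial \zeta')\Lambda_n]'$, I would obtain, up to $o_p(1)$,
\[
\sqrt{n}\,\Lambda_n^{-1}(\hat\zeta_n - \zeta^0) = -\,[M'S^{-1}M]^{-1} M'S^{-1}\,\sqrt{n}\,\bar g_n(\zeta^0),
\]
where $S_n(\hat\zeta_n)\overset{p}{\to} S$ by continuity of $S_n$ and consistency of $\hat\zeta_n$, and the inverse exists by the full column rank conclusion of Lemma \ref{local-lem}.

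The last step is a Lindeberg--Feller CLT applied to the i.i.d. triangular array $\{g_i(\zeta^0_n)\}$, which has mean zero by construction and variance converging to $S$, with the uniform integrability condition supplied by the $(2+\kappa)$-th moment bound in Assumption 1. This yields $\sqrt{n}\,\bar g_n(\zeta^0)\overset{d}{\to}\mathcal N(0,S)$, and Slutsky's theorem then delivers $\sqrt{n}\,\Lambda_n^{-1}(\hat\zeta_n-\zeta^0)\overset{d}{\to}\mathcal N(0,[M'S^{-1}M]^{-1})$. The main obstacle I expect is controlling the $\eta_1$ column of the scaled Jacobian at the intermediate point $\bar\zeta_n$ under the drifting DGP: one must verify that $\varsigma_n\,\partial \bar g_{1n}(\bar\zeta_n)/\partial\eta_1$ converges to the same $q_{11}$-derivative limit as at $\zeta^0$, uniformly in a shrinking neighbourhood, which is precisely what Assumption 4 is designed for; verifying negligibility of $\Lambda_n r_n$ in the CUGMM FOC is a companion technical point but follows routinely from the block-diagonal structure of $S_n$ and the envelope conditions inherited from Assumption 1.
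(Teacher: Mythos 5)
Your overall route is the same as the paper's: linearize the CUGMM first-order condition in the rotated coordinates, insert $\Lambda_n\Lambda_n^{-1}$ so that Lemma \ref{local-lem} delivers the full-rank limit $M$ of the scaled Jacobian (via Lemma \ref{lems_old} at the intermediate point), show that the term generated by differentiating $S_n^{-1}(\zeta)$ is asymptotically negligible after rescaling by $\Lambda_n'$, and conclude with a CLT for $\sqrt{n}\,\bar g_n(\zeta^0)$ and Slutsky.

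There is, however, one step whose justification as written would fail. You dispose of the weighting-matrix correction by asserting that $\bar g_n(\hat\zeta_n)=O_p(n^{-1/2})$ ``by the CLT applied at the consistent estimator.'' Consistency of $\hat\zeta_n$ alone does not deliver this: $\E_n[g_i(\hat\zeta_n)]\neq 0$ in general, and since the $\eta_1$-column of the population Jacobian is only of order $1/\varsigma_n$, the bias contribution $\{\partial\bar g_n/\partial\eta_1\}(\hat\eta_{1n}-\eta_{1}^0)$ is $O_p(n^{-1/2})$ only if one already knows $\hat\eta_{1n}-\eta_{1}^0=O_p(\varsigma_n/\sqrt{n})$ --- which is precisely (part of) the conclusion being sought, so the argument is circular as stated. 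The paper breaks this circle by first establishing the preliminary rate result $\sqrt{n}\,\Lambda_n^{-1}(\hat\zeta_n-\zeta^0)=O_p(1)$ in Lemma \ref{lems:order}, derived not from the first-order condition but from the basic inequality $J_n(\zeta^0,\zeta^0)\ge J_n(\hat\zeta_n,\hat\zeta_n)$ combined with the full column rank of $M$ and a reverse triangle inequality. That rate then gives $\bar g_n(\hat\zeta_n)=\bar g_n(\zeta^0)+O_p(n^{-1/2})$, which is what makes the continuous-updating correction $O_p(n^{-1/2})$ (note it is quadratic in $\bar g_n(\hat\zeta_n)$, with one factor carrying a $\sqrt{n}$, so the $O_p(n^{-1/2})$ bound on the moment vector is genuinely needed). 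Your proof requires this intermediate rate lemma, or an equivalent non-circular argument, inserted before the expansion of the first-order condition; with that piece added, the remainder of your derivation coincides with the paper's.
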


 As expected, all entries of $\zeta$, save for $\eta _{1}$, are $\sqrt{n}$-consistent and
asymptotically normal GMM estimators. In contrast, the direction $\eta _{1}$ converges at the $\{\sqrt{n}/\varsigma_{n}\}$-rate, which is possibly slower than $\sqrt{n}$. Of course, our goal is not to conduct inference on $\zeta^0$, but on $\theta^0$. By the change of basis in \eqref{eq:basis}, $\theta=R\zeta$, and Theorem \ref{theoremN} implies that the feasible CUGMM
estimator $\hat{\theta}_{n}$ satisfies
\begin{equation}
\sqrt{n}\Lambda _{n}^{-1}R^{-1}(\hat{\theta}_{n}-\theta _{}^{0})
\overset{d}{\rightarrow}\mathcal{N} \left( 0,[M'S^{-1}M]^{-1}\right) .
\label{contamin}
\end{equation} Importantly, since the matrix $R$ is not diagonal, the slower rate of $\{\sqrt{n}/\varsigma_{n}\}$ pollutes the entire vector of structural parameters $\theta_{1}=(\tilde{\rho},\alpha,\beta')'$, which follows from the change of basis $\theta=R\zeta$. Therefore, all structural parameter estimates in the probit model converge at the slower $\{\sqrt{n}/\varsigma_n\}$-rate.

Equation \eqref{contamin} itself does not directly provide a feasible
inference strategy since the matrix $R$ depends on the unknown $\pi^0$. Of course the matrix $R$
may be consistently estimated. However, as explained by \cite{antoine2012efficient} (see the discussion of their Theorem 4.5), a sufficient condition to ensure that the estimation of $R$ does not pollute the asymptotic distribution in \eqref{contamin} is that the matrix $R$ is estimable at a rate faster than $n^{1/4}$. In the case of the probit model, the matrix $R$ only depends on the unknown true reduced form parameter $\pi^0 $, which is strongly identified and consistently estimable at the $\sqrt{n}$-rate. Therefore, if $\hat{R}_{n}$ denotes the matrix $R$ where
$\pi^0 $ is replaced by $\hat{\pi}_{n}$, we can conclude that, following Theorem 4.5 in
\cite{antoine2012efficient},
\begin{equation}
\sqrt{n}\Lambda _{n}^{-1}\hat{R}_{n}^{-1}(\hat{\theta}_{n}-\theta
_{}^{0}) \overset{d}{\rightarrow}\mathcal{N} \left( 0,[M'S^{-1}M]^{-1}\right)   \label{inference}.
\end{equation}

\begin{remark}\label{rem:eight}{\normalfont
 The result in equation (\ref{inference})  implies that $\sqrt{n}\Lambda
_{n}^{-1}\hat{R}_{n}^{-1}(\hat{\theta}_{n}-\theta _{}^{0})$ behaves like a mean-zero Gaussian random variable, whose variance can be consistently estimated by $$[\Lambda_n\hat R_n^\prime\{\partial\bar{g}_n(\hat\theta_n)/\partial\theta'\}'S^{-1}_n(\hat\theta_n)\{\partial\bar{g}_n(\hat\theta_n)/\partial\theta'\}\hat R_n\Lambda_n]^{-1}
.$$ {However, Theorem \ref{theoremN} \textit{does not say} that the common estimator of the variance-matrix of $\sqrt{n}(\hat\theta_n-\theta^0)$, obtained using the standard formula  $$[\{\partial\bar{g}_n(\hat\theta_n)/\partial\theta'\}'S^{-1}_n(\hat\theta_n)\{\partial\bar{g}_n(\hat\theta_n)/\partial\theta'\}]^{-1},$$ is well-behaved, which follows by noting that the matrix $\frac{\partial\bar{g}_n(\theta^0)}{\partial\theta}S^{-1}_n(\theta^0)\frac{\partial\bar{g}_n(\theta^0)}{\partial\theta'}$
is asymptotically singular unless $\varsigma_n=O(1)$.}
Fortunately, Theorem 5.1 in \cite{antoine2012efficient} allows us to
conclude that standard formulas for Wald inference based on the GMM estimator $%
\hat{\theta}_{n}$ are asymptotically valid. The main intuition is that the
Studentization implied by Wald inference cancels out the required
rescaling terms. This is all the more important given that the rescaling factor $%
\varsigma _{n}$ is unknown in practice.

We stress that this result is in contrast to the general nonlinear case where the asymptotic normality requires faster than $n^{1/4}$ convergence rate, and it is only due to the specificities of the probit model that we are able to conduct valid Wald inference as soon as
identification is not genuinely weak. That is, any near weakness, even as severe as $%
\varsigma _{n}$ being arbitrarily close to $\sqrt{n}$, will still allow us to compute
a consistent GMM estimator and apply standard formulas for Wald
inference based on this estimator.
}
\end{remark}

\subsubsection{The Power of the Distorted J-Test}
The key to ensuring that the size of $W_{n}^{\delta}$ is asymptotically controlled is the equivalence between $J_n$, the usual $J$-statistic, and $J_n^\delta$, the distorted $J$-statistic, that obtains under the null of weak identification. However, as demonstrated by Proposition \ref{lem:cons} and Theorem \ref{theoremN}, under the alternative hypothesis the CUE is consistent and asymptotically normal. 
Therefore, there is no reason to suspect that $J_n$ and $J_{n}^{\delta}$ will be asymptotically equivalent under the alternative, at least under reasonable choices for the tuning parameter $\delta_n$.

The following result demonstrates that under the alternative, the distorted J-test, $W_n^\delta$, is a consistent test for the null of weak instruments across a wide range of choices for the perturbation sequence $\delta_n$.

\begin{theorem}[Distorted J-test: Under the Alternative]\label{theorem3}
	If \textbf{Assumptions 1-6}  are satisfied, then $W_n^\delta$ is consistent under the alternative so long as $\{\sqrt{n}/\varsigma_{n}\}\delta_n\rightarrow\infty$ as $n\rightarrow\infty$.
\end{theorem}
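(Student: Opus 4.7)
The plan is to show that, under the alternative, $J_n^\delta \to \infty$ in probability, which is strictly stronger than what is required to reject against any fixed chi-square quantile. Since $S_n(\hat\theta_n) \to S$ positive definite, $\lambda_{\min}(S_n^{-1}(\hat\theta_n))$ is bounded away from zero in probability, so it will be enough to prove that $\|\sqrt{n}\,\bar{g}_n(\hat\theta_n^\delta)\| \to \infty$ in probability.

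First I would transfer the perturbation into the rotated coordinates $\zeta = (\eta',\theta_2')'$ of Section \ref{sec:alt}. Using $\theta_1 = R_1\eta$ and the definition of $\Delta_{1n}$ in equation \eqref{eq:distort}, a direct inversion of $R_1$ gives
\[
\hat\zeta_n^\delta - \hat\zeta_n \;=\; \bigl(\delta_n,\; 0,\; \delta_n(\hat\pi_n-\pi^0)',\; \mathbf{0}',\; \mathbf{0}'\bigr)'.
\]
Two consequences follow immediately. Because $g_{2,i}$ depends only on $\theta_2$ and the $\theta_2$-block of the perturbation is zero, one has $\bar g_{2n}(\hat\theta_n^\delta) = \bar g_{2n}(\hat\theta_n) = O_p(1/\sqrt n)$, so the divergence must originate entirely in the structural block $\bar g_{1n}$. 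Also, because $\hat\pi_n - \pi^0 = O_p(1/\sqrt n)$ by Theorem \ref{theoremN}, the $\eta_3$-component of the distortion is of order $\delta_n/\sqrt n$, an order of magnitude smaller than the $\eta_1$-component of size $\delta_n$.

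Next I would mean-value expand $\tilde g_{1n}(\zeta) := \bar g_{1n}(R\zeta)$ about $\hat\zeta_n$ along the segment to $\hat\zeta_n^\delta$:
\[
\sqrt n\,\tilde g_{1n}(\hat\zeta_n^\delta) \;=\; \sqrt n\,\tilde g_{1n}(\hat\zeta_n) \;+\; \sqrt n\,\delta_n\,\frac{\partial \tilde g_{1n}(\zeta^\ast)}{\partial \eta_1} \;+\; \sqrt n\,\delta_n\,\frac{\partial \tilde g_{1n}(\zeta^\ast)}{\partial \eta_3'}(\hat\pi_n - \pi^0).
\]
The first summand is $O_p(1)$ by combining the CUE bound $J_n(\hat\theta_n,\hat\theta_n) \le J_n(\theta^0,\theta^0) = O_p(1)$ with the lower bound on $\lambda_{\min}(S_n^{-1})$. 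For the second summand, Assumption 5 combined with the uniform derivative control supplied by Assumption 4 gives $\varsigma_n\,\partial \tilde g_{1n}(\zeta^\ast)/\partial \eta_1 \overset{p}{\rightarrow} V^0(\eta^0)$, so that summand equals $(\sqrt n\,\delta_n/\varsigma_n)\,V^0(\eta^0) + o_p(\sqrt n\,\delta_n/\varsigma_n)$ and diverges in norm since $\|V^0(\eta^0)\| > 0$ and $\sqrt n\,\delta_n/\varsigma_n \to \infty$ by hypothesis. The third summand is $O_p(\delta_n)$ because $\partial q_{12,n}/\partial \eta_3$ is of order one by Assumption 3(ii); this is negligible compared with $\sqrt n\,\delta_n/\varsigma_n$, since $\varsigma_n = o(\sqrt n)$. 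Hence $\|\sqrt n\,\bar g_n(\hat\theta_n^\delta)\| \to \infty$ in probability, and so does $J_n^\delta$.

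The step I expect to be most delicate is making the mean-value expansion uniform. The intermediate point $\zeta^\ast$ lies in a shrinking random neighbourhood of $\hat\zeta_n$, so the rate $1/\varsigma_n$ of $\partial \tilde g_{1n}/\partial \eta_1$ and the $O_p(1)$ control of $\partial \tilde g_{1n}/\partial \eta_3$ must hold \emph{uniformly} along this segment rather than only at isolated parameter values. This can be handled by a stochastic equicontinuity argument built on the functional CLT in Assumption 4 together with the smoothness of $\Phi$ from Assumption \textbf{(A.3)(ii)}, tools that should already be available from the proofs of Proposition \ref{prop1} and Theorem \ref{theoremN}. Once the uniform control is in place, the rest is routine: block diagonality of $S_n$ prevents the reduced-form block from cancelling the divergence in the structural block, and $W_n^\delta$ is entered with probability tending to one.
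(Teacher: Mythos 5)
Your proof is correct and follows essentially the same route as the paper's: isolate the distortion's contribution to $\sqrt{n}\,\bar g_n(\hat\theta_n^\delta)$, show via Assumption 5 that it diverges at rate $\{\sqrt{n}/\varsigma_n\}\delta_n$ in the direction $V^0(\eta^0)$, and convert divergence of the moment norm into divergence of $J_n^\delta$ through a minimum-eigenvalue bound on $S_n^{-1}$. The only difference is bookkeeping: you expand about $\hat\zeta_n$ and control the base term with the CUGMM inequality $J_n(\hat\theta_n,\hat\theta_n)\le J_n(\theta^0,\theta^0)=O_p(1)$, whereas the paper expands about $\theta^0$ and invokes Lemma \ref{lems:order} to bound $\sqrt{n}\Lambda_n^{-1}(\hat\zeta_n-\zeta^0)$; both yield the same $O_p(1)$ control of the non-distortion terms.
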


\begin{remark}\normalfont
	Theorem \ref{theorem3} implies that our choice of $\delta_{n}$ has important consequences for the power of the distorted $J$-test. All else equal, the test is more powerful the slower $\delta_n$ goes to zero. However, it is also helpful to understand how fast $\delta_n$ can converge to zero before the result of Theorem \ref{theorem3} is invalidated. To this end, consider the rate requirement on $\delta_n$ that results from parametrizing $\varsigma_{n}$ as  $\varsigma_{n}=n^{\lambda}$ for some $0<\lambda<1/2$. Using this parametrization, we see that the distorted $J$-test is consistent so long as $\delta_nn^{1/2-\lambda}\rightarrow\infty$, and clarifies that if $\delta_n$ goes to zero too fast, i.e., if $\delta_n\ll n^{\lambda-1/2}$, the test can not be consistent.
\end{remark}

\begin{remark}
	\normalfont
	It is worth keeping in mind that \textbf{Assumption 2} maintains that both the structural and reduced form moments are correctly specified. Thus, when the observed data lead to a rejection of $W_n^\delta$, we immediately conclude that it
	is \textit{not due to misspecification} of the moment conditions but \textit{due to their
		identification power}. However, if the model is misspecified, but we reject the null of weak identification, then we can actually consistently test for model misspecification. Indeed, under the alternative, the standard overidentification test $$\{J_n(\hat\theta_n,\hat\theta_n)>\chi^2_{1-\alpha}(H-p)\},$$ remains a consistent test for model misspecification. As such, if we reject the null of weak identification, we can compare the value of $J_n(\hat\theta_n,\hat\theta_n)$ against $\chi^2_{1-\alpha}(H-p)$ to deduce a consistent test for model misspecification.
\end{remark}

\subsection{Testing Procedure}\label{section3.3}
We now explain one approach to implement our distorted J-test in practice. The key step in the testing procedure is to choose the perturbation (tuning parameter) $\delta_{n}$. To this end, we take $\delta_n=\delta/r_n$, and fix $r_n=\log \{\log(n)\}$. It is then possible to choose $\delta$ using a data-driven approach.

To present our approach to choosing $\delta$, first recall that the perturbation $\delta_n=\delta/\log\{\log(n)\}$ can be thought of as only being applied to the single direction of weakness in the rotated parameter space; namely, the parameter $\eta_1$, which by equation \eqref{eq:basis} is nothing but $\tilde\rho$. Therefore, it is with respect to the magnitude of ${\hat{\tilde{\rho}}}_n$ that the perturbation $\delta_n$ should be chosen.

To ensure the value of $\delta_n$ is sufficiently close to the magnitude of $\hat{\tilde{\rho}}_n$, we design a grid of $m$ candidate points for $\delta$ by dissecting the \textit{standard confidence interval} of $\hat{\tilde{\rho}}_n$ into $m$ equal regions. For the $i$-th region, we set $\delta_i$, $i=1,\dots,m$, to be to the midpoint of the $i$-th region. This produces a grid of $m$ perturbations with $i$-th value, $i=1,\dots,m$ given by $\delta_{i,n}=\delta_i/r_n$.

Whilst it is possible to use any given $\delta_{n,i}$ to conduct the test, we suggest carrying out the test across the entire grid of $\delta_{n,i}$ values and then appropriately modify the critical value via a Bonferroni correction. In particular, let $J_{n,i}^{\delta}$ denote the test statistic $J_{n}^\delta$ calculated under the perturbation $\delta_{n,i}$. This approach would lead us to reject the null of weak identification if
$$
\max_{i\in\{1,\dots,m\}}J_{n,i}^\delta>\chi^2_{1-\alpha/m}(H+1-p).
$$

Using the above decision rule, our approach can be implemented using the following four steps.
\begin{enumerate}
	\item[(1)] Compute $\hat{\theta}_{n}=\argmin_{\theta\in\Theta}J_n(\theta,\theta)$;
	\item[(2)] For a given choice of $m$, choose the sequence of tuning parameter $\delta_n=\delta/r_n$, as described above;
	\item[(3)] For each $i=1,\dots,m$, compute the test statistic $J_{n,i}^{\delta}$, as defined in Section \ref{sec:3.3};
	\item[(4)] Rejection rule: reject if $
	\max_{i\in\{1,\dots,m\}}J_{n,i}^\delta>\chi^2_{1-\alpha/m}(H+1-p).
	$
	\end{enumerate}

Under the null hypothesis, the testing procedure is size controlled for any choice of $\delta_{n,i}=o(1)$, while under the alternative the choice of $\delta_{n,i}$ only has implications for the power of the test. Moreover, since the values of $\delta_i$ are chosen from some compact set, dividing by $\log\{\log(n)\}$ ensures that $\delta_{n,i}=o(1)$ under both the null and alternative.


\subsection{Generalizing the Rule-of-Thumb to Probit Models}\label{sectionGROT}

{We begin our discussion on the so-called ``rule-of-thumb",
initially inspired by the work of \cite{staiger1997instrumental}, in the infeasible situation where the latent endogenous
variable $y_{1i}^{\ast }$\ is observable,} meaning that we would consider
a bivariate linear model. {For sake of expositional simplicity, let us
consider a simplification of this model whereby} the vector $x_{i}$
only contains a constant, so that the model becomes
\begin{eqnarray}
y_{1i}^{\ast } &=&\alpha y_{2i}+\beta +u_{i}  \label{Probit} \\
y_{2i} &=&\pi +z_{i}^{\prime }\xi +v_{i}  \nonumber.
\end{eqnarray}

The rule-of-thumb starts from the reduced form regression and its OLS
estimator for $\xi$,
\[
\hat{\xi}_{n}=(\tilde{Z}^{\prime }\tilde{Z})^{-1}\tilde{Z}^{\prime }\tilde{Y}_{2} ,
\]
where for $\mathbf{1}_n$ a $(n\times 1)$-vector of ones
\begin{align*}
Y_{2}&=( y_{21},\dots,y_{2n})',\;\tilde{Y}_2=Y_2-\bar{y}_{2n}\mathbf{1}_n,\;\\
Z&=( z'_{1},\dots,z'_{n})',\;\tilde{Z}=Z-\bar{Z}_{n} ,
\end{align*}
and where $\bar{y}_{2n}=\frac{1}{n}\sum_{i=1}^{n}y_{2i} $ and $\bar{Z}_{n}$\ denotes the $\left( n\times k_{z}\right) $ matrix
whose $j^{th}$-column has all its entries equal to
\[
\bar{z}_{j,n}=\frac{1}{n}\sum_{i=1}^{n}z_{ij} .
\]
Let $F_{n}$\ denote the F-test statistic for testing the null hypothesis
that the vector $\xi $ of coefficients {for the variables} $z_{i}$ in the reduced form regression {are zero}. Under the assumption of conditional
homoskedasticity {for the error term $v_{i}$,}
the F-test statistic can be written as
\begin{equation*}
F_{n}=\frac{n-k_{z}}{nk_{z}}\frac{1}{\hat{\sigma}_{v,n}^{2}}\left[ \hat{\xi}%
_{n}^{\prime }\left( \tilde{Z}^{\prime }\tilde{Z}\right) \hat{\xi}_{n}\right],
\end{equation*}
with $\hat{\sigma}_{v,n}^{2}$ a consistent estimator of variance of $v_i$, $\sigma^2_v$.
The rule-of-thumb amounts to conclude that instruments are {strong (i.e., consistent estimation is feasible)} if $F_{n}$ {exceeds a pre-specified threshold value, which differs from the standard critical value used to test the null hypothesis $\text{H}_{0}:\xi =0,$ and which has} been extensively
documented by \cite{stock2005testing}. The rationale for this rule can be
understood from the drifting DGP considered in Remark 7. Under the
alternative hypothesis to the null of weak identification, for $n$ large,
\begin{equation}
\xi _{n}^{0}\sim \frac{\gamma ^{0}}{\varsigma _{n}}\Longrightarrow
k_{z}F_{n}\sim \frac{n}{\varsigma _{n}^{2}}\frac{1}{\sigma _{v}^{2}}\gamma
^{0\prime }\text{Var}\left( z_{i}\right)\gamma ^{0}.
\label{equiv}
\end{equation}

{Therefore, under the null of weak identification ($\varsigma _{n}=%
\sqrt{n}$), $F_{n}$ in equation (\ref{equiv}) has a finite limit,
whilst under} the alternative ($\varsigma
_{n}=o\left( \sqrt{n}\right) $) {the statistic $F_n$ diverges} to infinity
with a slope defined by the squared norm of $\gamma ^{0}$ and a weighting
matrix that is proportional to $\text{Var}(z_{i})/\text{Var}(v_{i})$. This
sounds like a natural criterion to measure {instrument strength in the infeasible model \eqref{Probit},} since the reduced form regression will lead
to the control variable $v_{i}=y_{2i}-\pi -z_{i}^{\prime }\xi $\ and
endogeneity in the structural equation will be controlled thanks to the
two-stage residual inclusion (2SRI):%
\begin{equation}
y_{1i}^{\ast }=\alpha y_{2i}+\beta +\tilde{\rho}\left[ y_{2i}-\pi
-z_{i}^{\prime }\xi \right] +\varepsilon _{i}  \label{2SRI}.
\end{equation}

{Since identification of $\eta _{1}=\tilde{\rho}$ in equation (\ref%
{2SRI}) depends on the variation of}
\[
z_{i}^{\prime }\xi _{n}^{0}\sim \frac{z_{i}^{\prime }\gamma ^{0}}{\varsigma
	_{n}} ,
\]
it may sound natural to assess the magnitude of $\gamma ^{0}$\ after
normalization by the variance of $z_{i}$. As noted by \cite{16324}, ``IVs can be weak and the F statistic small, either because $\gamma$ is close to zero or because the variability of $z_i$ is low relative to the variability of $v_i$.'' However, the F-test statistic
follows a Fisher distribution (and asymptotically a distribution $\chi
^{2}\left( k_{z}\right) /k_{z}$) under the null $\text{H}_{0}:\xi =0$\ only when
{the reduced form error term $v_{i}$} is
conditionally homoskedastic. When one is {concerned with the} presence of conditional
heteroskedasticity in this equation (i.e., non-constant $\text{Var}[v_{i}\mid z_{i}] $), one may consider the heteroskedasticty corrected Fisher test
statistic
\[
F_{n}^{\ast }=\frac{n-k_{z}}{k_{z}}\left[ \hat{\xi}_{n}^{\prime }\hat{\Sigma}%
_{n}^{-1}\hat{\xi}_{n}\right],
\]
where $\hat{\Sigma}_{n}$ is a consistent estimator of the asymptotic
variance of $\sqrt{n}( \hat{\xi}_{n}-\xi _{n}^{0}) $. While \cite{stock2005testing} propose to extend the use of the rule-of-thumb by using
instead $F_{n}^{\ast }$\ in case of conditional heteroskedasticity, several
authors, including \cite{andrews2018valid} and \cite{olea2013robust}, have
documented the disappointing performance of the heteroskedasticity corrected
rule-of-thumb. One may help to clarify this issue by noting that, denoting $\tilde{z}_{i}$ to be the $i$-th column vector of the matrix $\tilde{Z}'$, {for $n$ large and for $\sigma _{v}^{2}(z_{i})=\text{Var}[v_{i}\mid z_{i}]$},
\begin{equation}
\xi _{n}^{0}\sim \frac{\gamma ^{0}}{\varsigma _{n}}\Longrightarrow
k_{z}F_{n}^{\ast }\sim \frac{n}{\varsigma _{n}^{2}}\gamma ^{0\prime
}\text{Var}\left( z_{i}\right) \left[ \E\left( \tilde{z}_{i}\tilde{z}_{i}^{\prime
}\sigma _{v}^{2}(z_{i})\right) \right] ^{-1}\text{Var}\left( z_{i}\right) \gamma
^{0}  \label{RobF}.
\end{equation}

Equation (\ref{RobF}) is a straightforward extension of a result {provided by
\cite{antoine2017testing}, and makes explicit {how robustifying the}
test statistic for heteroskedasticity modifies the
rule-of-thumb.} This modification is arguably puzzling since what really
matters{ for identification power, namely the residual inclusion of $v_{i}$
in the structural equation (\ref{2SRI}), is not fully captured by $\sigma _{v}^{2}(z_{i})$.} More precisely, the conditional heteroskedasticity
that intuitively matters in the structural equation {is instead}
\[
\sigma _{u}^{2}(z_{i})=\text{Var}[u_{i}\left\vert z_{i}\right] =\tilde{\rho}%
^{2}\text{Var}[v_{i}\left\vert z_{i}\right] +\text{Var}[\varepsilon _{i}\left\vert z_{i}%
\right] .
\]

This intuition is confirmed by \cite{antoine2017testing} who show that,
when nesting the IV estimation procedure in a GMM framework, the distorted
J-test leads to a decision rule {based on the following weighted norm of $\gamma ^{0}$}:
\[
\frac{n}{\varsigma _{n}^{2}}\gamma ^{0\prime }\text{Var}\left( z_{i}\right) \left[
\E\left( \tilde{z}_{i}\tilde{z}_{i}^{\prime }\sigma _{u}^{2}(z_{i})\right) %
\right] ^{-1}\text{Var}\left( z_{i}\right) \gamma ^{0} .
\]
{In the context of the probit model,} where only the sign $y_{1i}$ of $y_{1i}^{\ast }$\ is
observed, the 2SRI equation becomes
\[
y_{1i}=\Phi \left[ \alpha y_{2i}+\beta +\tilde{\rho}\left( y_{2i}-\pi
-z_{i}^{\prime }\xi \right) \right] +\varepsilon _{i},
\]
for some error term $\varepsilon_i$, {and the conditional heteroskedasticity in the structural equation takes the form}
\begin{equation*}
	\text{Var}[\varepsilon _{i}\left\vert y_{2i},z_{i}\right] =\Phi _{i}\left( \theta
	^{0}\right) \left[ 1-\Phi _{i}\left( \theta ^{0}\right) \right],\;	\text{ where }\Phi _{i}\left( \theta \right) =\Phi \left[ \alpha y_{2i}+\beta +\tilde{%
		\rho}\left( y_{2i}-\pi -z_{i}^{\prime }\xi \right) \right].
\end{equation*}

{One may then expect that any generalized rule-of-thumb for probit models must account not only for} this conditional heteroskedasticity but also the
impact of the non-linearity in the structural equation. In the simple context of
Remark 7, we may then expect that the {key element to obtain a decision rule about
weak instruments in the probit model is the magnitude of the vector}
\[
V^{0}\left( \eta \right) =\E_n\left[ \tilde{a}\left( y_{2i},z_{i}\right) \phi
_{i}\left( \eta,\theta _{2}^{0}\right) z_{i}^{\prime }\right] \gamma ^{0}, \text{ where }\|\gamma^0\|>0.
\]

More generally, since the alternative to weak identification, defined by
\textbf{Assumption 5}, is tantamount to the non-nullity of the vector $V^{0}\left(
\eta \right) $, the generalized rule-of-thumb should be based on a norm of
$V^{0}\left( \eta \right) $. We argue that we do have a
well-suited generalization for the standard rule-of-thumb when applying a
decision rule that rejects the null of weak identification if the
norm $\left\Vert U\right\Vert $, of a certain vector $U$, exceeds a specified
threshold with the following definition for $U$.
\begin{enumerate}
	\item[(i)] $U=\sqrt{n}\text{Var}\left( z_{i}\right)^{1/2} /\sigma_v\xi^{0}$ for a
	linear model with conditional homoskedasticity (i.e. the standard
	rule-of-thumb);
	\item[(ii)] $U=\sqrt{n}\left[ \E\left( \tilde{z}_{i}\tilde{z}_{i}^{\prime }\sigma
	_{u}^{2}(z_{i})\right) \right] ^{-1/2}\text{Var}\left( z_{i}\right) \xi^{0}$
	for a linear model with conditional heteroskedasticity (i.e. the
	generalization of the standard rule-of-thumb proposed by \citealp{antoine2017testing});
	\item[(iii)]$U=\sqrt{n}S_{11,n}^{-1/2}\left( \theta ^{0}\right) \E_n\left[ \tilde{a}%
	\left( y_{2i},z_{i}\right) \phi _{i}\left( \eta,\theta _{2}^{0}\right)
	z_{i}^{\prime }\right] \xi^{0}\delta_n$ for the probit model (\ref{Probit}) (in
	the context of Remark 7) and more generally $U=\sqrt{n}S_{11,n}^{-1/2}\left( \theta
	^{0}\right) V^{0}\left( \eta \right)\delta_n/\varsigma_n$, where the perturbation term $\delta_n$ is introduced by the design of the distorted J-test.
\end{enumerate}

It is worth realizing that this generalized rule-of-thumb is, for $n$ large, precisely what is performed by our test for the null of weak
identification based on the distorted J-test statistic.\footnote{ It is worth realizing that our comparison between the
	so-called ``rules-of-thumb" is based only on the definition of the test
	statistic. We do not enter into debates regarding alternative definitions of
	the null of weak identification based either on Assumption 3, or the 2SLS
	relative bias, Wald test size distortion, Nagar bias, etc..} To see this, we
extend the argument of \cite{antoine2017testing} by noting that under the alternative, our
distorted J-test statistic sets the focus on the norm of
\[
U=S_{n}^{-1/2}\left( \theta ^{0}\right) \sqrt{n}\bar{g}_{n}( \hat{\theta%
}_{n}^{\delta }) ,
\]
where
\[
\bar{g}_{n}( \hat{\theta}_{n}^{\delta }) =\bar{g}_{n}( \hat{%
	\theta}_{n}) +\left[
\begin{array}{c}
\bar{g}_{1n}( \hat{\theta}_{n}^{\delta }) -\bar{g}_{1n}(
\hat{\theta}_{n})  \\
\mathbf{0}%
\end{array}%
\right].
\]
{Noting that,}
\[
\sqrt{n}\left[ \bar{g}_{1n}( \hat{\theta}_{n}^{\delta }) -\bar{g}%
_{1n}( \hat{\theta}_{n}) \right] =\sqrt{n}\frac{\partial \bar{g}%
	_{1n}}{\partial \eta _{1}}( \eta _{1n}^{\ast },\hat{\eta}_{2n},\hat{\eta}%
_{3n},\hat{\theta}_{2n}) \delta _{n},
\]
where $\eta _{1n}^{\ast }$ denotes a component-by-component intermediate
value between the first coefficient of $\hat{\theta}_{n}$\ and $\hat{\theta}%
_{n}^{\delta }$, under the alternative hypothesis to the null of weak
identification
\begin{eqnarray*}
	\frac{\partial \bar{g}_{1n}}{\partial \eta _{1}}\left( \eta _{1n}^{\ast },%
	\hat{\eta}_{2n},\hat{\eta}_{3n},\hat{\theta}_{2n}\right)  &=&\E_{n}\left[
	\frac{\partial \bar{g}_{1n}}{\partial \eta _{1}}\left( \eta ^{0},\theta
	_{2}^{0}\right) \right] +O_{p}\left( \frac{1}{\sqrt{n}}\right)  \\
	&=&\frac{1}{n}\E_{n}\left\{ \sum_{i=1}^{n}\tilde{a}\left( y_{2i},z_{i}\right)
	\phi _{i}\left( \eta^0,\theta _{2}^{0}\right) z_{i}^{\prime }\xi ^{0}\right\}
	+O_{p}\left( \frac{1}{\sqrt{n}}\right) ,
\end{eqnarray*}
and where\footnote{The $O_p(1/\sqrt{n})$ term in the expansion of ${\partial \bar{g}_{1n}( \eta _{1n}^{\ast },\hat{\eta}_{2n},\hat{\eta}_{3n},\hat{\theta}_{2n}) }/{\partial \eta _{1}}$ can be deduced via a Taylor series expansion, re-arranging terms, and noting that the derivative of the Jacobian, in the $\eta_1$ direction, is also degenerate at the $\varsigma_{n}$-rate.}
\[
\frac{1}{n}\E_{n}\left\{ \sum_{i=1}^{n}\tilde{a}\left( y_{2i},z_{i}\right)
\phi _{i}\left( \eta ,\theta _{2}^{0}\right) z_{i}^{\prime }\xi ^{0}\right\}
\sim \frac{V^{0}\left( \eta \right) }{\varsigma _{n}}
\]
is the dominant term since $\varsigma _{n}=o\left( \sqrt{n}\right) .$ To
summarize, under the alternative hypothesis to the null of weak
identification, and for a $\delta_n$ such that $\{\sqrt{n}/\varsigma_{n}\}\delta_n\rightarrow \infty$,
\begin{eqnarray*}
\left\Vert U\right\Vert =\left\Vert S_{n}^{-1/2}\left(
	\theta ^{0}\right) \sqrt{n}\bar{g}_{n}\left( \hat{\theta}_{n}^{\delta
	}\right) \right\Vert \sim \left\Vert S_{11,n}^{-1/2}\left( \theta ^{0}\right)
	V^{0}\left( \eta^0 \right) \right\Vert \frac{\sqrt{n}}{\varsigma _{n}}\delta
	_{n},
\end{eqnarray*}
{which diverges as $n\rightarrow\infty$ and yields a natural generalization of the rule-of-thumb to probit models.}

\section{Monte Carlo: Conventional Weak IV Tests v.s. Distorted $J$-test}\label{sectionMC}
In this section, we verify the properties of the distorted $J$-test (hereafter, DJ test) and compare this test against three commonly used weak IV tests, which, even though they are not designed for discrete choice models, have been widely applied in the literature on discrete choice modelling: (i) the \citet{staiger1997instrumental} standard rule-of-thumb (SS); (ii) \citet{stock2005testing} (SY); and (iii) the robust weak IV test of \citet{olea2013robust} (Robust).

We generate observed data according to
\begin{align}\label{16}
y_{1i}=1[\beta+\alpha y_{2i}+u_i>0],~~y_{2i}=\pi+\xi z_i+v_i,~~i=1,2,...,n,
\end{align}
where \(z_i\sim \mathcal{N}(0,\sigma_z^2)\) is i.i.d. univariate, \((u_i,v_i)'\) is i.i.d. homoskedastic and normally distributed, and \((u_i,v_i)'\) is independent of \(z_i\). We set \(\beta=0.5\), \(\alpha=1\) and \(\pi=0.3\). In addition, we take \(\rho=corr(u_i,v_i)\in\{0.5,~0.95\}\), and \(\sigma_u=1/\sqrt{1-\rho^2}\) (to ensure the normalization of \(\text{Var}[u_i|y_{2i},z_i]=1\)). To characterize the potential instrument weakness, we adjust the value of $\xi$ to restrict the correlation between the endogenous regressor $y_{2i}$ and the instrument $z_i$ to be $corr(y_{2i},z_i)=\gamma/n^\lambda$, with \(\gamma=1.5\), and we consider a grid of values for $\lambda\in\{0.5,~0.4,~0.3,~0.2,~0.1\}$.

Since the performance of the DJ test and the standard weak IV tests may depend on $\sigma_{z}$ and $\sigma_{v}$, we simulate data using the following grids: $\sigma_z\in\{0.2,~0.5,~1,~5,~10\}$ and $\sigma_v\in\{0.2,~0.5,~1,~5,~10\}$. For each Monte Carlo trial, we take the sample size to be one of \(n=500,~5000,~10000\) and consider $N=1000$ Monte Carlo replications.

Across each Monte Carlo design, \(\theta=(\widetilde{\rho},\alpha,\beta,\pi,\xi)'\) is estimated by CUGMM with a single degree of over-identification. We choose the instrument functions $a_i=a(y_{2i},z_i)=(1,y_{2i},z_i,z_i^2,0,0)'$ and $b_i=b(z_i)=(0,0,0,0,1,z_i)'$. The DJ test is implemented following the procedure presented in Section \ref{section3.3}.\footnote{For computational simplicity, in the Monte Carlo simulations, we adopt the perturbation $\delta_n=\hat{\tilde{\rho}}/\log(\log(n))$, where $\hat{\tilde{\rho}}$ is the CUGMM estimate of $\tilde{\rho}$ in each Monte Carlo replication. This procedures is a simplified version of the data-driven approach developed in Section \ref{section3.3}.} Using a 5$\%$ significant level, we reject the null hypothesis of weak instruments in accordance to Theorem \ref{theorem3}; i.e., we reject the null if $J^\delta_n>\chi^2_{0.95}(H+1-p)$, where in this case $H=6$, $p=5$ and $\chi^2_{0.95}(H+1-p)=5.99$. Theoretically, the hypotheses of the DJ test correspond to $H_0:\lambda=0.5$, and the alternative  to $\lambda<0.5$.\footnote{We note that the null hypothesis of each test are slightly different: DJ- $\text{H}_{0}:\lambda=0.5$; SS- $F_n<10$ as an informal null hypothesis; SY- the triple $\{\xi,\sigma_v^2,\sigma_z^2\}$ is such that 2SLS relative bias or Wald test size distortion is larger than a given tolerance using the Cragg-Donald statistic; the Robust test regards that the Nagar bias exceeds a fraction of the benchmark as null. Although the definitions of the weak instrument are different for each test, their null hypothesis are consistent in the sense to capture situations under which the instrument is weak.} However, we note that in finite samples, it is hardly the case that $\lambda$ alone determines the behavior of the CUEs.

Given this, to compare the behavior of the DJ test with the conventional linear tests, we introduce two sets of criteria to assess the potential impact of instrument weakness in finite samples: the behavior of the CUE and the size distortions of the
associated Wald statistic. Specifically, we compute the bias, standard deviation (s.d.) and relative root mean square error (rrmse) as below (taking $\alpha$ as an example) to measure the estimation performance under different designs:
\begin{align}\label{criterion1}
\text{bias}=\bar{\widehat{\alpha}}-\alpha^0,~~\text{s.d.}=\sqrt{\frac{1}{N}\sum_{l=1}^{N}(\widehat{\alpha}_{l}-\bar{\widehat{\alpha}})^{2}},~~\text{rrmse}=\sqrt{\frac{1}{N}\sum_{l=1}^{N}\left(\frac{\widehat{\alpha}_{l}-\alpha^0}{\alpha^0}\right)^{2}}
\end{align}
where $\bar{\widehat{\alpha}}=1/N\sum_{l=1}^N\widehat{\alpha}_{l}$, $\widehat{\alpha}_{l}$ stands for the $l$-th Monte Carlo CUGMM estimate and $\alpha^0$ is the true value. As proven in Sections \ref{sec:3.3} and \ref{sec:alt}, under the null the CUE is consistent, while under the alternative, the estimator will be consistent and asymptotically normal, albeit with non-standard rates. Unlike \citet{stock2005testing}, who choose the relative bias of 2SLS to OLS as one criterion to detect weak instruments, here we consider the bias, the s.d. and the rrmse defined in \eqref{criterion1} instead, for the following reasons. For the IV probit model \eqref{16}, the CUE (and other commonly adopted estimation methods) does not have a closed-from expression. Therefore, the usual notion of `bias towards OLS' under potential IV weakness in linear models is not valid in this nonlinear context, with the potential impact of the IV weakness now being complicated by the nonlinear features of the model. In this case, there is no guarantee that the positive and negative biases will not offset each other and lead to a spuriously small overall bias. Therefore, to capture the instrument strength and the resulting performance of the CUGMM estimation procedure, we rely on the bias, standard deviation and rrmse of the estimator.

In addition, to better understand weakness in this discrete choice model, we conduct a Wald test of $\text{H}_0:\alpha=\alpha^0$ and compute its size distortion, relative to the $5\%$ significant level, across all the Monte Carlo designs. We carry out this Wald test for two different estimation methods: the CUE considered in this paper and the 2SCML estimator proposed by \citet{rivers1988limited}. The size distortion of the Wald test is widely used to capture instrument weakness; see e.g. \citet{staiger1997instrumental} and \citet{stock2005testing}. This measure reflects not only the performance of the hypothesis test, but also the coverage rate of confidence intervals associated with the two estimation methods. 

Under the null hypothesis of $\lambda=0.5$, the performance of the CUE and the rejection probabilities for the different testing procedures are collected in Table \ref{tab:tab1} ($\rho=0.5$) and Table \ref{tab:tab2} ($\rho=0.95$). For brevity, we only report the estimation results for the structural parameter of interest, $\alpha$, and Wald test size distortions under five designs: $(\sigma_z,\sigma_v)\in\{(1,0.2),(1,10),(1,1),(0.2,1),(10,1)\}$. Additional results for all designs can be obtained from the authors.

Simulation results in Tables \ref{tab:tab1} and \ref{tab:tab2} confirm our asymptotic results. When $\lambda=0.5$, CUGMM estimation of $\alpha^0$ is inconsistent and behaves poorly in general. More specifically, the biases are unstable, and the s.d. and rrmse do not decrease (in any noticeable way) as the sample size increases, especially when the endogeneity degree is high ($\rho=0.95$). However, under the alternative, $\lambda<0.5$, the s.d. and rrmse drop dramatically as $n$ increases. In addition, the asymptotic normality of the CUE under $\lambda<0.5$ is verified by viewing the standardized sampling distributions of the estimators across the Monte Carlo replications, which is given in Figures \ref{fig:emp0.5} and \ref{fig:emp0.95}. The sampling distributions exhibit easily detectable bi-modality when $\lambda$ is 0.5, or close to 0.5, especially when $\sigma_v$ is small and $\rho$ is large, indicating that a standard inference approach, relying on the normal approximation, is likely to perform poorly in those cases.

The results in Tables \ref{tab:tab1} and \ref{tab:tab2} also show that the behavior of the Wald test varies across the different designs even when $\lambda=0.5$. For a moderate level of endogeneity ($\rho=0.5$), we see relative small size distortions, less than $5\%$, in most cases for the Wald tests based on both 2SCML and CUEs. However, for a high degree of endogeneity ($\rho=0.95$), the Wald tests are significantly over-sized, with the size distortions for the Wald test based on 2SCML being much larger than those based on CUGMM. One exception, however, is the case of $(\sigma_z,\sigma_v)=(1,10)$ and $\rho=0.95$, where the Wald size distortions based on both estimation methods are less than $5\%$. {For $(\sigma_z,\sigma_v)=(1,10)$ and $\rho=0.95$ case, the size distortion based on the CUGMM is 0.008 when $n=10000$, indicating that the $95\%$ confidence interval coverage rate is quite accurate even though $\lambda=0.5$ ($corr(y_{2i},z_i)=0.015$). As such, this design constitutes additional evidence that the value of $\lambda$ is not the only key in determining inference performance in weakly identified discrete choice models.}

The false rejection rates of SS, SY, Robust and DJ under $\lambda=0.5$ are displayed in Tables \ref{tab:tab1} and \ref{tab:tab2}. Firstly, as expected, the DJ test is asymptotically conservative, i.e., the size is less than the significance level of $5\%$.\footnote{Results not reported here due to space limitations also confirm that DJ test is conservative.} The size of the DJ test varies between 1.0\% and 1.9\% under $\rho=0.5$, and between 1.3\% and 3.1\% when $\rho=0.95$. However, we note that the DJ test is much less conservative than the Robust approach of \cite{olea2013robust}, which is extremely conservative, and gives virtually zero rejections across all designs where identification is weak. Therefore, while the DJ test is conservative, we can conclude that it is much less conservative than the Robust approach, and can be relatively close to the nominal level (5\%) when the degree of endogeneity is large.

In addition, we see that blindly applying conventional weak instrument tests can lead to poor outcomes. For example, for the design with $(\sigma_z,\sigma_v)=(1,0.2)$ and a high level of endogeneity ($\rho=0.95$ in Table \ref{tab:tab2}), the rejection rates of SS and SY (10\%)\footnote{The rejection rate of SY (10\%) is computed based on the critical value of a maximal 10$\%$ size distortion of a 5$\%$ Wald test, provided by \citet{stock2005testing}.} are all larger than 10$\%$ across different sample sizes, and are 13.8$\%$ and 18.5$\%$ respectively when $n=10000$. However, the rrmse in this case does not decrease as $n$ increases, and the rrmese for the estimated $\alpha$ is between 910\% and 1060$\%$ of the true value. Moreover, both of the Wald size distortions exceed their nominal size by at least 10$\%$. In particular, the 2SCML size distortion is between 17$\%$ and 27$\%$, while the CUGMM size distortion is between 11$\%$ to 17$\%$. Therefore, the identification is weak, but the SS and SY approaches can suggest the opposite, and hence fail to control size. In addition, false rejection rates for other designs, not reported here for brevity, demonstrate a similar pattern of over-rejection for SS and SY tests. Hence, in line with the analysis in Section \ref{sectionGROT}, when assessing identification strength in discrete choice models, the conventional weak IV tests of SS and SY may fail to provide reliable conclusions regarding identification strength, especially if the degree of endogeneity is high.

Figure \ref{fig:power0.5} ($\rho=0.5$) and Figure \ref{fig:power0.95} ($\rho=0.95$) display the power of the four tests. Due to the conservativeness of DJ test, size adjusted power of DJ and of the conventional tests are also computed and compared in Figures \ref{fig:adj_power0.5} and \ref{fig:adj_power0.95}.\footnote{Size adjusted power is computed as follows: obtain the 95$\%$ quantile of the test statistic from the 1000 Monte Carlo replications when $\lambda=0.5$ and use it as the critical value for cases when $\lambda<0.5$.} The resulting power curves show that the DJ test is consistent as the sample size diverges, and as identification strength increases. Moreover, in cases with high endogeneity (Figure \ref{fig:power0.95}), the unadjusted power of the DJ is higher than that of the Robust test across most designs. Furthermore, Figures \ref{fig:adj_power0.5} and \ref{fig:adj_power0.95} demonstrate that the DJ-test displays non-negligible power even when identification is close to being weak, i.e, when $\lambda=0.4$ or $\lambda=0.3$, which gives convincing numerical evidence of the results in Theorem \ref{theorem3}.

\section{Empirical Illustrations}
In this section, we apply our distorted $J$-test in two well-known empirical examples to test for the presence of weak instruments. We then contrast the results of our tests with those obtained from conventional weak IV tests for linear models, namely the SS, the SY, and the Robust tests. 
\subsection{Labor Force Participation of Married Women}\label{sec:lfp}
We first study the impact of education on married women's labor force participation (hereafter LFP), when education, measured as the women's years of schooling, is treated as an endogenous treatment. We use data from the University of Michigan Panel Study of Income Dynamics (PSID) for the year 1975\footnote{The data is publicly available at \citet{wooldridge2010econometric} Supplemental Content.}, which have been used in several studies. \citet{mroz1987sensitivity} provides an extensive analysis of the women's hours of labor supply, and considers a range of specifications including potential endogeneity of several regressors, the use of different instrumental variables and controls for self-selection into labor force participation. As a text book example, \citet{wooldridge2010econometric} used the same dataset to study women's LFP decisions, and the potential endogeneity of education is tested after estimating an IV probit model using \citet{rivers1988limited} two-step conditional maximum likelihood estimator (2SCML). In what follows we use similar specification as in \citet{wooldridge2010econometric}.

The PSID consists of data on 753 married, Caucasian women who are between 30 and 60 years of age at the time the sampling. The dependent variable LFP is a binary response that equals unity if the respondent worked at some time during the year, and zero otherwise. Exogenous regressors include spousal income, the individual's work experience and its square, age, the number of children less than six years old, and the number of children older than six years old. The individual's education, measured as years of schooling, is considered to be endogenous. Following the strategy in \citet{wooldridge2010econometric}, the individual's family education, which are recorded as the years of schooling for both the individual's father and mother, are used as instruments for education.

Estimated coefficients and the average partial effects on the probability of LFP for all regressors are presented in Table \ref{tab:tab7} using two estimation methods: 2SCML as used in \citet{wooldridge2010econometric}\footnote{For the LFP example, the 2SCML estimation allows for heteroskedastic standard errors.} and CUGMM. More specifically, for the 2SCML, the first step is to regress the endogenous regressor on the instruments and all other exogenous regressors to obtain the reduced form residual. The second step is to run a probit maximum likelihood estimation of the binary response on the endogenous and the exogenous regressors, and the reduced form residual. The CUGMM estimation with over-identification degree one is conducted using $a_i=(1,y_{2i},x_i',z_{i}',\textbf{0}_{k+2}')'$ and $b_i=(\textbf{0}_{k+3}',1,x_i',z_i')'$, where $y_{2i}$, $x_i$ and $z_{i}$ denote the standardized variables corresponding to the women's education, exogenous regressors and two instruments, and $k$ is the number of exogenous regressors and the intercept. The first step estimation of the 2SCML and the reduced form of the CUGMM are listed in the first and fourth columns of Table \ref{tab:tab7} respectively. Both the two IVs are highly significant based on both estimation methods. The CUGMM estimation results are reported in columns four through six. Broadly speaking, the CUGMM and 2SCML results are similar, with both methods providing evidence that education has a significant positive effect: one extra year of education increasing the probability of LFP by 5.87 percentage points for both the 2SCML and the CUGMM. Hansen's $J$-statistic is 0.122 which is less than $\chi^2_{0.95}(1)=3.84$, therefore we fail to reject the null that all the moments are valid.

The weak IV test results are collected in Table \ref{tab:tab6} for all four tests, SS, SY, Robust and DJ. The Kleibergen-Paap $F$-statistic \citep{kleibergen2006generalized} is 81.89, based on which the SS rule-of-thumb and the SY test both reject the null that IVs are weak.\footnote{The Kleibergen-Paap $F$-statistic is utilized when allowing for heteroskedastic standard error. The reduced form regression $F$-statistic and the Cragg-Donald statistic are 95.70, when assuming homoskedastic standard error. SY rejects its null according to the critical values of the maximal desired size distortion $5\%$ and 10\% of a 5\% Wald test.} For the Robust test, the effective $F$-statistic is 91.44, the critical values for the tolerance thresholds $\{5\%,10\%\}$ are 11.59 and 8.58, respectively.\footnote{The estimated effective degrees of freedom of the Robust test for the tolerance thresholds $\{5\%,10\%\}$ are both about 1.8. See \cite{olea2013robust} for the definitions of the effective $F$-statistic, the tolerance threshold and the effective degrees of freedom. The Robust test statistic and the critical values are obtained using the Stata command "weakivtest" (\citet{pflueger2014robust}) under heteroskedastic-robust estimation.}  Comparing the effective $F$-statistic 91.44 to the critical values, the Robust test also rejects the null of weak IV.

Finally, for the DJ test, the perturbation $\delta_n$ is computed as in Section \ref{section3.3}, using $m=20$ candidate grid points. This choice of $m$ leads us to use the critical value $\chi^2_{1-0.05/20}(H+1-p)=11.98$, where we note that we have used $H=19$ moments and estimated $p=18$ parameters. Of the candidate grid points, three lead to a value of the DJ statistic larger than 11.98, leading us to soundly reject the null of weak identification. The rejection conclusion of the DJ test is quite straightforward: when perturbing the CUE $\hat{\theta}_n$ by $\delta_n$, the value of the $J$-statistic increases dramatically from 0.122 to a maximum of 17.44, implying that the CUGMM criterion is sensitive to even small departures. Overall, results reported in Table \ref{tab:tab6} suggest that the DJ test and the three conventional tests for linear models all agree in this example.

\subsection{US Food Aid and Civil Conflicts}
In the second example we examine the impact of US food aid on the incidence of civil conflicts in recipient countries. The research in \citet{nunn2014us} was motivated by concerns that humanitarian food aid may be ineffective and may even promote civil conflicts. The main challenge of this study is the potential endogeneity of US food aid due to reverse causality and joint determination. Their identification strategy relies on using the product of the lagged US wheat production and the average probability of receiving any US food aid for each country as the instrumental variable for wheat aid. \citet{nunn2014us} estimate many variations of binary and duration models and consider different kinds of wars, different controls and alternative specifications.

Herein, we focus on the cases of onset and offset of civil wars as 
considered in \citet{nunn2014us}. More specifically, we estimate the impact of US wheat aid on the probability of civil war \emph{onset} after a period of peace, or on the probability of civil war \emph{offset} after a period of war (columns (3)-(9), Table 7, \citep{nunn2014us}), using precisely the same datasets and model controls as in \citet{nunn2014us}.\footnote{Data sets used to construct the incidence of conflict, US food aid, US wheat production and other variables include the UCDP/PRIO Armed Conflict Dataset Version 4-2010, the Food and Agriculture Organization's FAOSTAT database and the data from the United States Department of Agriculture. See \citet{nunn2014us} for more detailed information.} We examine the IV strength by applying our DJ test to the model, as well as the three conventional weak IV tests for linear models. The dataset in this analysis involves observations on 78 non-OECD countries from 1971 to 2006.

For the onset analysis, the data used are those country-year observations that have no intra-state civil conflict in the previous period (columns (3)-(6), Table 7, \citep{nunn2014us}). The event indicator for civil war onset {is defined as one} if it is the first period of a intra-state conflict episode, and zero otherwise. \citet{nunn2014us} estimate a logistic discrete time hazard model for the onset of war, controlling for the previous duration of peace {using a third degree polynomial}. The US wheat aid in year $t$ is instrumented by the product of US wheat production in year $t-1$ and the probability of receiving any US food aid between 1971 and 2006 for each country. For the purpose of the paper, we estimate a binary probit model for the onset of war. The summary statistics for the data used in the onset analysis are given in part (a) of Table \ref{tab:tab8}.


Using the specification of controls in columns (3) of Table 7 in \citet{nunn2014us}, in Part (a) of Table \ref{tab:tab10}, we present the estimated coefficients and average partial effects from both 2SCML probit\footnote{The 2SCML in this example allows intragroup correlation for standard errors, clustered by countries.} and CUGMM with the degree of overidentification equal to unity. For comparison purposes, column (1) of Table \ref{tab:tab10}-(a) gives the estimated average partial effect of US wheat aid on the onset of war as reported by \citet{nunn2014us} using a 2SCML logit approach. For CUGMM, we use $a_i=(x_i',z_i,z_i^2,z_i^3,z_ix_{1i},\textbf{0}_{k_{x}+1}')'$ and $b_i=(\textbf{0}_{k_{x}+3}',1,z_i,x_i')'$ to construct moments. The variables $x_i$ and $z_i$ denote the standardized variables of exogenous regressors and the instrument, $x_{1i}$ is the non-standardized onset duration, and $k_{x}$ is the number of exogenous regressors {(including an intercept)}. Columns (2) and (5) of Table \ref{tab:tab10}-(a) demonstrate that the IV is significantly related to the endogenous regressor of wheat aid at the 1\% significant level by both estimation methods. However, the estimates of interest, the treatment effects of the US wheat aid on onset are statistically insignificant from both estimation methods, same as the result from \citet{nunn2014us} in Column (1).
\footnote{The insignificance of US food aid on onset of civil conflict is also pointed out by \citet{nunn2014us}. However, without reliable evidence on instrument strength, we should be cautious when drawing any conclusions based on standard inference procedures.} Estimates for other coefficients are quite stable and similar across the three sets of results. Finally, Hansen's $J$-statistic is 0.553, less than the critical value $\chi^2_{0.95}(1)=3.84$, thus we cannot reject the null that moments are all valid. This evidence leads to the suspicion that the potential weakness of the IV could be one of the possible reasons for the unstable estimates of the US wheat aid coefficient.

This suspicion is verified by the DJ test. The perturbation for the onset analysis is chosen as described in Section \ref{section3.3}, again using $m=20$ candidate grid points. Panel (a) of Table \ref{tab:tab9} demonstrates that the DJ test cannot reject the null of weak identification. In contrast to the earlier example in Section \ref{sec:lfp}, in this example perturbing the estimators by $\delta_n$ does not lead to a significant change in the corresponding J-statistic, which indicates a lack of curvature and thus identification weakness. Across the entire grid of candidate $\delta_n$ values, the maximum of the DJ statistics is 7.5, which is less than the corresponding 5\% critical value given by $\chi^2_{1-0.05/20}(H+1-p)=11.98$, and which is based on using $H=12$ moments to estimate $p=11$ parameters. However, when we apply the conventional SS, SY and Robust tests to the onset of the civil conflict case, the SS and SY tests all return a rejection of the weak IV hypothesis and the Robust test also rejects the null if the tolerance threshold is greater than $10\%$. As shown in Table 
\ref{tab:tab9}-(a), the reduced form regression Kleibergen-Paap $F$-statistic for SS and SY is 26.07, much larger than 10 and the critical values 16.38 and 8.96 of SY.\footnote{To be consistent with \citet{nunn2014us}, standard errors (s.e.) are computed using clustered s.e. by countries. Kleibergen-Paap $F$-statistic \citep{kleibergen2006generalized} is utilized when allowing for intragroup correlation s.e. The critical values 16.38 and 8.96 of SY test are based on the desired maximal size distortion 5\% and 10$\%$ of a 5$\%$ Wald test, respectively.} The Robust test effective $F$-statistic 26.39 is also larger than its 10$\%$ tolerance critical value 23.11.\footnote{The effective $F$-statistic and critical values are computed using the Stata command "weakivtest" (\citet{pflueger2014robust}). The critical value 23.11 is for the case of effective degrees of freedom one and the tolerance threshold $10\%$. Robust test fails to reject the weak instrument based on the critical value of 5\% tolerance.} In summary, for this onset example, the conventional weak IV tests reject the weak IV hypothesis, while our DJ test suggests the opposite. This serves as a reminder that applying conventional weak IV tests for linear models to binary outcome models should be cautioned. 

Subsequently, we have repeated this analysis for the other 5 specifications considered in \citet{nunn2014us} (columns (4)-(8) of Table 7, \citealt{nunn2014us}), which include different controls and the study for conflict \emph{offset} after a period of war. Our DJ test fails to reject the null of weak instrument for all cases except for column (7), whilst the SS and SY tests all result in a rejection of the weak instrument hypothesis.\footnote{Not all results are reported due to space limitation. SS and SY tests are based on Kleibergen-Paap $F$-statistic}. The DJ test is implemented using the same $a_i$ and $b_i$ as those used in Table \ref{tab:tab10}. The perturbation is again chosen as in Section \ref{section3.3} with $m=20$.} The Robust test also rejects the null in some cases.\footnote{Based on the critical value 23.11 ($\tau=10\%$), the Robust test rejects weak IV of the analysis in columns (4) and (8), but fails to reject in columns (5), (6) and (7). Results are obtained by using the Stata command "weakivtest" \citealt{pflueger2014robust} and clustered s.e.} In part (b) of Table \ref{tab:tab9} and Table \ref{tab:tab10}, we report the estimation results for the probability of \emph{offset} of civil war after a period of war for the specification in column (6) of Table 7, \citealt{nunn2014us}, as well as the test results for weak instrument.

One important result to note is that \citet{nunn2014us} estimate a significant and negative effect for offset of war, indicating that aid prolongs civil wars with 1,000 MT extra of US wheat aid reducing the probability of civil war offset by 0.04 percentage point. The causal effects estimated by the Probit model in Panel (b) of Table \ref{tab:tab10} are also both negative, with statistical significance for the 2SCML results but not for the CU-GMM results. However, as shown in part (b) of Table \ref{tab:tab9}, if one applies the DJ test using the same methodology as above, the DJ statistic varies between 1.50 and 9.46, which is again less than the corresponding critical value of 11.98, indicating that identification may be weak in this example. If identification is indeed weak, as the DJ test suggests, conducting standard inference on the estimated treatment effect is no longer valid. Therefore, the conclusion that US food aid prolongs civil conflict should be viewed with caution.

\section{Conclusion}
Estimating the causal effects of policy relevant treatment variables is the key goal of many empirical studies in economics and other diverse fields. Instrumental variables play a crucial role in the identification and estimation of treatment effects when the treatment is endogenous, but weak instruments have been identified as a potentially serious problem, with consequences including inconsistent estimation {and, consequently, invalid statistical inference.} Consequences and detection of weak identification due to instrument weakness have been extensively studied for linear models, but similar issues have not been thoroughly studied for discrete outcome models. In search for a suitable weak identification test, empirical researchers have often resorted to the inappropriate use of linear model weak IV tests for discrete outcome models, or the use of a linear probability model with a 2SLS estimator treating the discrete outcomes as continuous. The suitability of these linear tests in this nonlinear setting is not usually questioned in many empirical studies (see \citealp{dufour2013weak} and \citealp{li2019bivariate} for additional analysis on the performance of the \citealp{stock2005testing} testing approach in binary models).

This paper proposes a much needed weak identification test in endogenous discrete choice models. The proposed test has desirable asymptotic properties including size control under the null of weak identification, and consistency under the alternative. Moreover, we demonstrate that once the null of weak identification is rejected, standard Wald-based inference can be applied as usual. Our Monte Carlo results demonstrate that, whilst the conventional \cite{stock2005testing} and \cite{staiger1997instrumental} tests are often over-sized, and thus fail to reliably detect weakness, our test always controls size and has reasonable power. We apply this testing approach to two empirical examples in the literature, and demonstrate that there are importance instances where our approach produces contradictory conclusions to the commonly applied linear testing approaches. Analyzing the causal impact of U.S. food aid on civil conflict, our approach fails to reject the null of weak identification, however, several commonly applied linear testing approaches all conclude that identification is not weak.

Another key contribution of the paper is the construction of comprehensive concept of weak identification in discrete choice models, based not only
on the convergence rate of drifting moments, but also on the respective
weight of the key parameters, including variances of error terms and the level of
simultaneity. This allows us to provide a unified GMM estimation framework for examining both linear and nonlinear models, and for comparing the asymptotic properties of GMM estimators against other conventional two-step estimators for endogenous discrete choice models. While building on the general testing strategy of \citet{antoine2017testing}, the test proposed in this paper is based on a null hypothesis of genuine identification weakness, and not the nearly-strong identification null hypothesis analyzed in, e.g., \cite{andrews2012estimation} and \cite{antoine2017testing}.


The conclusion {our research gives} to empirical researchers wishing to evaluate identification weakness in discrete choice models {is clear: the} canonical tests developed for linear models are not suitable for nonlinear models, are likely to be overly optimistic, and can fail to detect genuinely weak identification. Our recommendation is a two-step approach. Conduct our testing approach in a first-step, then, if the null is rejected, one can be very confident that identification is not weak, and conventional inference can proceed as usual. If the null of weak identification cannot be rejected, identification robust inference methods (as proposed in \citealp{stock2000gmm} or \citealp{magnusson2010inference}) would be more suitable to assert the significance of any estimated causal effects.

Furthermore, our asymptotic theory is conformable with the point of view on weak identification defended by \cite{16324}: ``weak instruments should not be thought of as merely a small-sample problem, and the difficulties associated with weak instruments can arise even if the sample size is very large.'' We do see weak identification as a population problem (i.e. independent of the sample size): either the GMM estimator is not consistent (under the null of weak identification) or it is consistent (under the alternative). In this respect, the device of using a drifting DGP, as contemplated in the weak identification literature, can be seen as a way to disentangle point identification (a maintained hypothesis in the framework of weak identification) and existence of a consistent estimator. This point of view may look at odds with the one put forward by \cite{lewbel2019identification} where it is stated that: ``a parameter that is weakly identified (meaning that standard asymptotics provide a poor finite sample approximation to the actual distribution of the estimator) when $n$ = 100 may be strongly identified when $n$ = 1000.'' However, for all practical purpose, the methodological recommendation may not be so different: in our case, it is only for a large enough sample size that our test may allow us to reject the null of weak identification. In these circumstances, the researcher can trust the consistency of the estimator and confidently use {Wald-based} inference.

\subsection*{Acknowledgements}Frazier was supported by the Australian Research Council's Discovery Early Career Researcher Award funding scheme (DE200101070), and the Australian Centre for Excellence in Mathematics and Statistics. Zhang acknowledges travel support provided by the International Association for Applied Econometrics (IAAE). The authors thank the seminar participants at University of Cambridge, CEMFI, University of Oxford, University College of London, University of Surrey, University of Warwick, Boston University, University of Sydney, the 10th French Econometrics Conference PSE, the 2019 IAAE conference, and the 2019 North  American Summer Meeting of the Econometric society for helpful comments. 

\bibliography{reference}

\appendix
\numberwithin{equation}{section}
\section{Appendix}The appendix contains proofs for the main results in the paper.
\subsection{Lemmas} We first give several lemmas that are used to prove the main results.
\begin{lemma}\label{lem_uniform}
	Under \textbf{Assumption 1}, for $\nu_n(\theta):=\frac{1}{\sqrt{n}}\sum_{i=1}^{n}\left(g_i(\theta)-\mathbb{E}[g_i(\theta)]\right)$, $$\nu_n(\theta)\Rightarrow \nu(\theta),$$ for $\nu(\theta)$ a mean-zero Gaussian process with (uniformly) bounded covariance kernel $S(\theta,\tilde{\theta})$.
\end{lemma}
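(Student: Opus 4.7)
The plan is to establish the result by verifying that the class $\mathcal{G}=\{g_i(\theta):\theta\in\Theta\}$ is a Donsker class under \textbf{Assumption 1}, so that $\nu_n(\cdot)$ converges weakly to a tight mean-zero Gaussian process in $\ell^\infty(\Theta)$. The argument splits, as usual, into two pieces: finite-dimensional convergence and asymptotic tightness (stochastic equicontinuity).

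First, I would handle finite-dimensional convergence. Fix any finite collection $\theta^{(1)},\dots,\theta^{(k)}\in\Theta$. Since $\Phi$ takes values in $[0,1]$, $r_{1i}(\theta)$ is bounded by $1$ in absolute value, so $\mathbb{E}[\|g_i(\theta)\|^{2+\kappa}]\le C(\mathbb{E}[\|\tilde a(y_{2i},x_i,z_i)\|^{2+\kappa}]+\mathbb{E}[\|\tilde b(x_i,z_i)\|^{2+\kappa}](1+\mathbb{E}[\|s_i\|^{2+\kappa}]))<\infty$ uniformly over $\theta\in\Theta$ (using compactness of $\Theta$ and linearity of $r_{2i}$ in $\theta_2$). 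The classical Lindeberg--L\'evy CLT applied to the i.i.d.\ stacked vector $(g_i(\theta^{(1)})',\dots,g_i(\theta^{(k)})')'$ then yields joint asymptotic normality with covariance block $S(\theta^{(j)},\theta^{(\ell)})=\mathrm{Cov}[g_i(\theta^{(j)}),g_i(\theta^{(\ell)})]$.

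Next I would establish stochastic equicontinuity by showing $\mathcal{G}$ has an envelope with $2+\kappa$ moments and is pointwise Lipschitz in $\theta$ on the compact set $\Theta$. Since $\Phi$ is continuously differentiable with $|\phi|\le 1/\sqrt{2\pi}$, for any $\theta,\tilde\theta\in\Theta$,
\begin{equation*}
|r_{1i}(\theta)-r_{1i}(\tilde\theta)|\le \tfrac{1}{\sqrt{2\pi}}\,C_1(1+\|s_i\|)\,\|\theta-\tilde\theta\|,\qquad |r_{2i}(\theta_2)-r_{2i}(\tilde\theta_2)|\le \|(x_i',z_i')\|\,\|\theta_2-\tilde\theta_2\|,
\end{equation*}
where $C_1$ depends only on the diameter of $\Theta$. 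Consequently,
\begin{equation*}
\|g_i(\theta)-g_i(\tilde\theta)\|\le L_i\,\|\theta-\tilde\theta\|,\qquad L_i:=C_2\bigl(\|\tilde a(y_{2i},x_i,z_i)\|+\|\tilde b(x_i,z_i)\|\bigr)\bigl(1+\|s_i\|\bigr),
\end{equation*}
with $\mathbb{E}[L_i^{2+\kappa}]<\infty$ by \textbf{Assumption 1} and H\"older's inequality. Combined with an envelope $G_i=\sup_{\theta\in\Theta}\|g_i(\theta)\|$ that also inherits $2+\kappa$ moments (since $|r_{1i}|\le 1$ and $r_{2i}$ is linear on a compact set), Theorem 2.7.11 and Example 19.7 of van der Vaart and Wellner yield bracketing numbers satisfying $N_{[\,]}(\varepsilon,\mathcal{G},L_2(P))\lesssim (\mathrm{diam}(\Theta)/\varepsilon)^{\dim(\Theta)}$, hence a finite bracketing integral $\int_0^1\sqrt{\log N_{[\,]}(\varepsilon,\mathcal{G},L_2(P))}\,d\varepsilon<\infty$. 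Thus $\mathcal{G}$ is $P$-Donsker.

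Combining the two steps with the usual portmanteau argument gives $\nu_n\Rightarrow\nu$ in $\ell^\infty(\Theta)$ with $\nu$ a tight mean-zero Gaussian process whose covariance kernel is $S(\theta,\tilde\theta)$ as above; boundedness of $S$ follows from the uniform $L^{2+\kappa}$ bound on $g_i(\theta)$. The only nontrivial step is verifying the Lipschitz/bracketing condition, and the key simplification that makes it routine here is that $\Phi$ is globally $1/\sqrt{2\pi}$-Lipschitz and $r_{2i}(\theta_2)$ is linear, so the envelope $L_i$ inherits moments directly from \textbf{Assumption 1} without any additional smoothness requirements on the instrumental functions beyond the stated $2+\kappa$ moment bound.
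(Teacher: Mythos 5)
Your proof takes essentially the same route as the paper's: both arguments establish that the relevant class of moment functions is Donsker by combining a pointwise Lipschitz-in-parameter bound on the compact parameter set with the bracketing result in Theorem 2.7.11 of van der Vaart and Wellner, and then read off the Gaussian limit and the boundedness of the covariance kernel. The only structural difference is where the Lipschitz argument is applied: the paper first reparametrizes to $\vartheta$ so that the probit index becomes linear, $w_{1i}'\vartheta_1$, which makes the Lipschitz constant simply $\|w_{1i}\|$, proves that $\{r_i(\vartheta)\}$ is $P$-Donsker, and only then multiplies by the fixed instrument matrix $[a_i,b_i]$; you keep the original parametrization (so the index is bilinear in $(\tilde\rho,\pi,\xi)$, whence your factor $C_1(1+\|s_i\|)$) and fold the instruments directly into the envelope $L_i$. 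Both routes work. One caveat on your moment bookkeeping: the claim that $\mathbb{E}[L_i^{2+\kappa}]<\infty$ follows from \textbf{Assumption 1} ``by H\"older'' is overstated, since the assumptions only provide \emph{separate} $(2+\kappa)$-moments for $\|\tilde a_i\|$, $\|\tilde b_i\|$ and $\|s_i\|$, and the square-integrability of a product such as $\|\tilde a_i\|\bigl(1+\|s_i\|\bigr)$ requires a joint moment condition (e.g.\ via Cauchy--Schwarz, fourth moments of each factor) that is not implied by these for small $\kappa$. The paper's own proof is equally loose at the corresponding step (it multiplies the Donsker class $\{r_i\}$ by $[a_i,b_i]$ invoking only $\mathbb{E}\|[a_i,b_i]\|^2<\infty$), so this is a shared minor technical gap rather than a defect specific to your argument, but it would be cleaner to state the required product-moment condition explicitly.
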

\begin{proof}[Proof of Lemma \ref{lem_uniform}]
	First, recall that  for $g_i(\theta)=[a_i,b_i]r_i(\theta)$, with $r_i(\theta):=[r_{1i}(\theta),r_{2i}(\theta_2)]'$ so that $$\|g_i(\theta)\|=\left\|[a_i,b_i] r_i(\theta)\right\|\leq\|[a_i,b_i]\|\|r_i(\theta)\|.$$ Under Assumption (A1), $[a_i,b_i]$ is i.i.d. and $\mathbb{E}[\|[a_i,b_i]\|^2]<\infty$. The result then follows if we can demonstrate that $r_i(\theta)$ is Donsker.
		
Consider the re-parameterization $\vartheta=(\vartheta_1',\vartheta_2')'$, where $\vartheta_1:=(\alpha+\tilde{\rho},\beta'-\tilde{\rho}\pi',\tilde{\rho}\xi')'$, and $\vartheta_2:=(\pi',\xi')'$. By compactness of $\Theta$, the new parameter space $V:=\{\vartheta=(\vartheta_1',\vartheta_2')':~\theta\in\Theta\}$ is also compact. Denote $w_{1i}=(y_{2i},x_i',-z_i')'$ and $w_{2i}=(x_i',z_i')'$. Rewrite $\Phi[y_{2i}(\alpha+\tilde{\rho})+x_i'(\beta'-\tilde{\rho}\pi')- z_{i}'\xi\tilde{\rho}]=\Phi(w_{1i}'\vartheta_{1})$. By abuse of notation, define $r_{1i}(\vartheta_1)=y_{1i}-\Phi(w_{1i}'\vartheta_1)$, $r_{2i}(\vartheta_2):=y_{2i}-w_{2i}'\vartheta_2$, and define the class of functions $$\mathcal{F}:=\big{\{}r_{i}(\vartheta)=(r'_{1i}(\vartheta_1),r'_{2i}(\vartheta_2))':~\vartheta\in V\big{\}},$$ from the compactness of $V$, $(\mathcal{F},\|\cdot\|)$ is totally bounded with $\|\cdot\|$ the Euclidean norm.
	
First, focus on $r_{1i}(\vartheta_1)$. For every $w_{1i}$ and for $\vartheta_1,\bar{\vartheta}_1\in V_1$, with $V_1$ a subspace of $V$ associated with $\vartheta_1$, without loss of generality, suppose $w_{1i}'\vartheta\geq w_{1i}'\bar{\vartheta}$. Then,
	\begin{flalign*}
	\|r_{1i}(\vartheta_1)-r_{1i}(\bar{\vartheta}_1)\| &=|\Phi(w_{1i}'\vartheta_1)-\Phi(w_{1i}'\bar{\vartheta}_1)|\\
	&=\left|\int_{w_{1i}'\bar{\vartheta}_1}^{w_{1i}'\vartheta_1}\phi(t)dt\right|=\phi(c)|w_{1i}'(\vartheta_1-\bar{\vartheta}_1)|\leq C\|w_{1i}\|\|\vartheta_1-\bar{\vartheta}_1\|,
	\end{flalign*}
	for $c\in(w_{1i}'\bar{\vartheta}_1, w_{1i}'\vartheta_1)$ and some constant $C>0$. For $P$ the law of $(w_{1i}',w_{2i}')$, by \textbf{Assumption (A.1)}, we know that
	$$\mathbb{E}_{P}[\|w_{1i}\|^2]<\infty.$$ Now, consider $r_{2i}(\vartheta_2)$ and note that, for $\vartheta_2,\bar{\vartheta}_2\in V_2$, with $V_2$ a subspace of $V$ associated with $\vartheta_2$,
	\begin{flalign*}
	\|r_{2i}(\vartheta_2)-r_{2i}(\bar{\vartheta}_2)\|\leq \|w_{2i}\|\|\vartheta_2-\bar{\vartheta}_2\|.
	\end{flalign*}It then follows from \textbf{Assumption (A.1)} that $$\mathbb{E}_{P}[\|w_{2i}\|^2]<\infty.$$
	
	Defining $L=\max\{\|w_{1i}\|,\|w_{2i}\|\}$, $\vartheta=(\vartheta'_1,\vartheta'_2)'$ and $\bar{\vartheta}=(\bar{\vartheta}'_1,\bar{\vartheta}'_2)'$, we have that $\mathbb{E}[L]<\infty$ and  $$\|r_i(\vartheta)-r_i(\bar{\vartheta})\|\leq L\|\vartheta-\bar{\vartheta}\|.$$ This Lipschitz property, together with the compactness of $V$ implies that, by Theorem 2.7.11 of \citet{van1996weak}, $\mathcal{F}$ is $P$-Donsker. For $g_i(\theta)=[a_i,b_i]r_i(\theta)$, we then have that  $$\nu_n(\theta):=\sqrt{n}\left(\bar{g}_n(\theta)-\mathbb{E}[g_{i}(\theta)]\right)\Rightarrow \nu(\theta),$$
	for $\theta\in\Theta$ where $\nu(\theta)$ denotes a Gaussian process with zero mean and variance kernel $$S(\theta,\tilde\theta):=\mathbb{E}\left\{(g_i(\theta)-\mathbb{E}[g_i(\theta)])(g_i(\tilde\theta)-\mathbb{E}[g_i(\tilde\theta)])'\right\}.$$ By the continuity of $S(\theta,\theta)$ in $\theta$, \textbf{Assumption (A.1)}, and the compactness of $\Theta$, we have $$0<\sup_{\theta,\tilde\theta\in\Theta}\|S(\theta,\tilde\theta)\|<\infty.$$
\end{proof}

The following results demonstrates that \textbf{Assumption 4} in the main text is satisfied under \textbf{Assumption 1}
\begin{lemma}\label{lem_uniform_deriv}
	Under \textbf{Assumption 1}, if $\tilde{a}_i:=\tilde{a}(y_{2i},z_i,x_i)$ satisfies $\E_n\left[\|\tilde{a}_iz_i'\xi^0(y_{2i},z_i',x_i')'\|^2\right]<\infty$, for $\Psi_n(\eta,\theta^0_2):=\frac{1}{\sqrt{n}}\sum_{i=1}^{n}\left\{\tilde{a}_i\phi_i(\eta,\theta^0_2)z_i'\xi^0-\mathbb{E}_n[\tilde{a}_i\phi_i(\eta,\theta^0_2)z_i'\xi^0]\right\}$, $$\Psi_n(\eta,\theta^0_2)\Rightarrow \Psi(\eta,\theta^0_2),$$ for $\Psi(\eta,\theta^0_2)$ a mean-zero Gaussian process over $\Upsilon(\theta^0_2)$.
\end{lemma}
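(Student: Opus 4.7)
The plan is to mirror the strategy used in the proof of Lemma \ref{lem_uniform}: show that the class of vector-valued functions
$$\mathcal{G} := \{\eta \mapsto \tilde{a}_i\,\phi_i(\eta,\theta_2^0)\,z_i'\xi^0 : \eta \in \Upsilon(\theta_2^0)\}$$
is $P$-Donsker over the compact index set $\Upsilon(\theta_2^0)$, whereupon the conclusion $\Psi_n(\eta,\theta_2^0)\Rightarrow \Psi(\eta,\theta_2^0)$ for a centered Gaussian process $\Psi$ follows at once from the functional CLT, with covariance kernel
$$K(\eta,\tilde\eta) = \E\!\left[\tilde{a}_i\tilde{a}_i'(z_i'\xi^0)^2\phi_i(\eta,\theta_2^0)\phi_i(\tilde\eta,\theta_2^0)\right] - \E\!\left[\tilde{a}_i z_i'\xi^0\phi_i(\eta,\theta_2^0)\right]\E\!\left[\tilde{a}_i z_i'\xi^0\phi_i(\tilde\eta,\theta_2^0)\right]'$$
bounded uniformly in $(\eta,\tilde\eta)$ by boundedness of $\phi$ and the moment assumption in the statement.

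The key technical step is to establish a Lipschitz bound on $\eta \mapsto \phi_i(\eta,\theta_2^0)$ with a square-integrable envelope. Using the reparameterization \eqref{basis}, $\phi_i(\eta,\theta_2^0) = \phi(-\eta_1 z_i'\xi^0 + \eta_2 y_{2i} + x_i'\eta_3)$, and since $\phi$ is the standard Gaussian density its derivative $\phi'$ is uniformly bounded by some constant $C>0$. The mean value theorem then yields, for all $\eta,\tilde\eta \in \Upsilon(\theta_2^0)$,
$$\left|\phi_i(\eta,\theta_2^0) - \phi_i(\tilde\eta,\theta_2^0)\right| \le C\,\bigl\|(-z_i'\xi^0,\,y_{2i},\,x_i')'\bigr\|\cdot\|\eta-\tilde\eta\|,$$
so that multiplying by $\|\tilde{a}_iz_i'\xi^0\|$ gives the Lipschitz property of $\mathcal{G}$ with envelope
$$L_i := C\,\|\tilde{a}_i z_i'\xi^0\|\cdot\bigl\|(-z_i'\xi^0,\,y_{2i},\,x_i')'\bigr\|.$$

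It remains to check $\E[L_i^2]<\infty$, and this is precisely where the moment hypothesis of the lemma enters: the Frobenius norm of the rank-one outer product $\tilde{a}_i z_i'\xi^0\cdot(y_{2i},z_i',x_i')$ equals $\|\tilde{a}_i z_i'\xi^0\|\cdot\|(y_{2i},z_i',x_i')'\|$, so the assumed $\E_n[\|\tilde{a}_i z_i'\xi^0(y_{2i},z_i',x_i')'\|^2]<\infty$ is, up to the constant $C^2$ and the harmless sign on $z_i'\xi^0$, exactly $\E[L_i^2]<\infty$. Compactness of $\Upsilon(\theta_2^0)$ combined with the Lipschitz property and square-integrable envelope then puts us in the setting of Theorem 2.7.11 of \citet{van1996weak}, which delivers the $P$-Donsker property of $\mathcal{G}$ and thereby the claimed weak convergence; the vector-valued nature of $\tilde{a}_i$ is handled componentwise and presents no substantive obstacle. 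The main conceptual point, and the only place the specific form of the problem matters, is matching the Lipschitz envelope to the stated moment hypothesis --- which is tailored for exactly this purpose, so no genuine difficulty arises.
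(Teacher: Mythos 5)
Your proposal is correct and follows essentially the same route as the paper's own proof: establish a Lipschitz-in-$\eta$ bound via the mean value theorem applied to $\phi$, verify the square-integrable envelope using the stated moment hypothesis (with compactness of the parameter set absorbing the replacement of $z_i$ by $-z_i'\xi^0$), and invoke Theorem 2.7.11 of \citet{van1996weak} to conclude the class is $P$-Donsker. Your explicit identification of the covariance kernel and the Frobenius-norm remark are slightly more detailed than the paper's treatment, but the argument is the same.
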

\begin{proof}[Proof of Lemma \ref{lem_uniform_deriv}]
Similar to the proof of Lemma \ref{lem_uniform}, it suffices to show that the class of functions
$$\mathcal{F}:=\big{\{}r_{3i}(\eta)=\tilde{a}_{i}\phi(\eta,\theta^0_2)z_i'\xi^0:~\eta\in \Upsilon(\theta_2^0)\big{\}},$$ is Donsker, where $\eta:=(\tilde\rho,\tilde\rho+\alpha,\beta'-\tilde\rho\pi^{0'})'$. Hence, we only sketch the details. 	

Let $w_i:=(-z_i'\xi^0,y_{2i},x_i')'$. For every $w_{i}$ and for $\eta,\bar{\eta}\in\Upsilon(\theta^0_2)$, without loss of generality, suppose $w_{i}'\eta\geq w_{i}'\bar{\eta}$. Let $\phi'(x)$ denote the derivative of the density function $\phi(x)$. Then, for $c\in(w_i'\bar{\eta}, w_i'\eta)$,
\begin{flalign*}
\|\tilde{a}_i\phi(\eta,\theta^0_2)z_i'\xi^0-\tilde{a}_i\phi(\bar{\eta},\theta^0_2)z_i'\xi^0\| &=\phi'(c)\|\tilde{a}_iz_i'\xi^0w_{i}'(\eta-\bar{\eta})\|\leq C\|\tilde{a}_iz_i'\xi^0w_{i}\|\|\eta-\bar{\eta}\|,
\end{flalign*}for some constant $C>0$, and where the equality follows by the intermediate value theorem, and the inequality from Cauchy-Schwartz. For $P$ the joint law of $w_{i}$, by \textbf{Assumption (A.1)}, the compactness of $\Theta_2$ (\textbf{Assumption (A.4)}), and the moment hypothesis for $\tilde{a}_i$.,
$$\mathbb{E}_{P}[\|\tilde{a}_iz_i'\xi^0w_{i}'\|^2]<\infty.$$ The remainder of the proof follows that of Lemma \ref{lem_uniform} and is omitted for brevity.
\end{proof}

For $A_n=R\Lambda_n$, the following result demonstrates that, regardless of the interpretation for instrument weakness, for any consistent estimator the sample estimator $\partial\bar{g}_n(\theta_n)/\partial\theta'A_n$ is a consistent estimator of $M$ in \textbf{Assumption 5}.

\begin{lemma}\label{lems_old}
	If $\{\theta_n\}$ is such that $\|\theta_n-\theta^0\|=o_p(1)$, then under \textbf{Assumptions 1-6}: $$M=\underset{{n\rightarrow\infty}}{\text{plim}}\frac{\partial \bar{g}_{n}(\theta_n)}{\partial\theta'}A_n,\text{ where }A_n=R\Lambda_n.$$
\end{lemma}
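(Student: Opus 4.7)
The plan is to translate the statement into the rotated parameter space via the change of basis $\theta = R\zeta$, and to then leverage the decomposition in \textbf{Assumption 3} together with uniform convergence of the Jacobian process. Setting $\zeta_n := R^{-1}\theta_n$, the chain rule gives
$$
\frac{\partial \bar{g}_n(\theta)}{\partial \theta'}\bigg|_{\theta = \theta_n} R\Lambda_n \;=\; \frac{\partial \bar{g}_n(R\zeta)}{\partial \zeta'}\bigg|_{\zeta = \zeta_n}\Lambda_n.
$$
Since $R$ is non-random (depending only on $\pi^0$) and $\|\theta_n-\theta^0\|=o_p(1)$, the continuous mapping theorem yields $\|\zeta_n-\zeta^0\|=o_p(1)$. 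It then suffices to show that the scaled Jacobian on the right-hand side converges in probability to $M$ as defined in Lemma \ref{local-lem}.

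Next I would exploit the block-triangular structure of the Jacobian. Because $g_{2i}$ depends only on $\theta_2$, the block $\partial \bar{g}_{2n}/\partial \eta$ vanishes, and post-multiplication by $\Lambda_n$ merely scales the first column (the $\eta_1=\tilde\rho$ direction) by $\varsigma_n$ while leaving the remaining $p-1$ columns untouched. The convergence therefore decouples into three pieces: (i) $\varsigma_n\,\partial \bar{g}_{1n}(\zeta_n)/\partial \eta_1 \to_p \partial q_{11}(\eta^0)/\partial \eta_1$; (ii) $\partial \bar{g}_{1n}(\zeta_n)/\partial(\eta_2,\eta_3',\theta_2')$ converges to its population counterpart at $\zeta^0$; and (iii) $\partial \bar{g}_{2n}(\zeta_n)/\partial \theta_2' = -n^{-1}\sum_i \tilde b(x_i,z_i)(x_i',z_i')$, which by the law of large numbers converges to $-\mathbb{E}[\tilde b(x_i,z_i)(x_i',z_i')]$ of full column rank under \textbf{Assumption 6}; note that this last piece is actually constant in $\theta$ because the reduced form is linear, so consistency of $\theta_n$ is not even needed there.

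For piece (i), I would split $\varsigma_n\,\partial\bar g_{1n}/\partial\eta_1$ into the population term $\varsigma_n\,\mathbb{E}_n[\partial g_{1i}(\eta,\theta_2^0)/\partial\eta_1] = \partial q_{11,n}(\eta)/\partial\eta_1$, which by \textbf{Assumption 3(i)} converges uniformly in $\eta$ to the continuous function $\partial q_{11}(\eta)/\partial \eta_1$, and a centered empirical process. Uniform convergence plus $\eta_n\to_p\eta^0$ handles the population term via the continuous mapping theorem. For the centered term, \textbf{Assumption 4} (equivalently, Lemma \ref{lem_uniform_deriv}) yields uniform tightness at the $\sqrt n$-rate, so multiplication by $\varsigma_n$ yields a term of order $O_p(\varsigma_n/\sqrt n)$ uniformly, which is $o_p(1)$ under \textbf{Assumption 5} since $\varsigma_n=o(\sqrt n)$. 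Piece (ii) is easier: no rescaling is involved, the Donsker property of the structural moment class (Lemma \ref{lem_uniform}) together with the smoothness of $\Phi$ and the compactness of $\Theta$ yields uniform convergence of the sample Jacobian to its population limit, and consistency of $\zeta_n$ lets us evaluate that limit at $\zeta^0$; the resulting derivatives reduce to $\partial q_{12}(\eta_2^0,\eta_3^0)/\partial(\eta_2,\eta_3')$ and the reduced-form Jacobian contribution to $\theta_2$.

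The main obstacle is precisely the balancing act in piece (i): the curvature in the $\eta_1$ direction vanishes at rate $\varsigma_n^{-1}$, while the sample-to-population deviation in the Jacobian is $O_p(n^{-1/2})$; it is \textbf{Assumption 5}'s separation $\varsigma_n=o(\sqrt n)$ that makes the product $o_p(1)$ and guarantees that the limit in Lemma \ref{local-lem} is reached uniformly over a neighbourhood of $\zeta^0$. All remaining pieces are routine averages of smooth, square-integrable functions, so they succumb to standard uniform law of large numbers arguments once the Donsker class has been established.
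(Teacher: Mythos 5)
Your proof is correct in substance but takes a genuinely different route from the paper's. The paper proves Lemma \ref{lems_old} by a mean-value expansion of the Jacobian around $\theta^0$, writing $\partial\bar g_{l,n}(\theta_n)/\partial\theta' = \partial\bar g_{l,n}(\theta^0)/\partial\theta' + (\theta_n-\theta^0)'\,\partial^2\bar g_{l,n}(\tilde\theta_n)/\partial\theta'\partial\theta$, arguing that the Hessian post-multiplied by $A_n$ is $O_p(1)$ (so the remainder is $o_p(1)$ by consistency of $\theta_n$), and then invoking Lemma \ref{local-lem} for the limit of the leading term at $\theta^0$. You instead bypass second derivatives entirely: you rotate to $\zeta$, exploit the block-triangular structure, and establish uniform (over a neighbourhood) convergence of the scaled first derivative itself, splitting the $\varsigma_n$-scaled $\eta_1$-column into a population term plus a centered empirical process that is $O_p(\varsigma_n/\sqrt n)=o_p(1)$ via \textbf{Assumption 4} and $\varsigma_n=o(\sqrt n)$. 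This is essentially an extension of the paper's own proof of Lemma \ref{local-lem} from the point $\zeta^0$ to a shrinking neighbourhood, followed by evaluation at $\zeta_n$. What your route buys is that you never need to bound the Hessian (which the paper dismisses with ``it is not hard to prove''); what it costs is that you need uniformity of the Jacobian convergence near $\zeta^0$ rather than only at $\zeta^0$.

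Two minor imprecisions to tidy up. First, in piece (i) you attribute the convergence of $\varsigma_n\,\partial\bar g_{1n}/\partial\eta_1$'s population part to \textbf{Assumption 3(i)}; uniform convergence of $q_{11,n}$ does not by itself give convergence of its derivative. The correct citation is \textbf{Assumption 5}, which directly asserts uniform convergence of $\varsigma_n\,\E_n[n^{-1}\sum_i\tilde a_i\phi_i(\eta,\theta_2^0)z_i'\xi^0]$ to $V^0(\eta)$ over $\Upsilon(\theta_2^0)$ (you do invoke it correctly later for the rate separation). Second, your decomposition of the $\eta_1$-column fixes $\theta_2=\theta_2^0$ inside both the population term and the centered process, whereas the Jacobian is evaluated at $\zeta_n=(\eta_n',\theta_{2n}')'$ with $\theta_{2n}$ merely consistent; you need one more line (a Lipschitz bound in $\theta_2$ with square-integrable constant, or a ULLN over a compact neighbourhood of $\theta_2^0$) to pass from $\theta_{2n}$ to $\theta_2^0$ — this is exactly the role the Hessian bound plays in the paper's version. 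Neither point is a substantive gap.
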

\begin{proof}[Proof of Lemma \ref{lems_old}] Let $\bar{g}_{n}(\theta)=(\bar{g}_{1n}(\theta),\bar{g}_{2n}(\theta),...,\bar{g}_{H,n}(\theta))'$. The mean value expansion of $\frac{\partial \bar{g}_{l,n}(\theta_n)}{\partial\theta'}$ at $\theta^0$ yields
	$$\frac{\partial \bar{g}_{l,n}(\theta_n)}{\partial\theta'}=\frac{\partial \bar{g}_{l,n}(\theta^0)}{\partial\theta'}+(\theta_n-\theta^0)'\frac{\partial^2 \bar{g}_{l,n}(\widetilde{\theta}_n)}{\partial\theta'\partial\theta},~~~~l=1,2,...,H
	$$
	where $\widetilde{\theta}_n$ is component-by-component between $\theta^0$ and $\theta_n$. By the structure of the moment $\bar{g}_n(\theta)$, the smoothness conditions on $\Phi(\cdot)$ and its derivatives, $a_i$ and $b_i$ are all measurable, it is not hard to prove that $\|\theta_n-\theta^0\|=o_p(1)$ implies the Hessian multiplied by $A_n$, $\frac{\partial^2 \bar{g}_{l,n}(\widetilde{\theta}_n)}{\partial\theta'\partial\theta}A_n=O_p(1)$ for $l=1,2,...,H$. Therefore, $\|\theta_n-\theta^0\|=o_p(1)$ and Lemma \ref{local-lem} implies the result is satisfied.
\end{proof}

\begin{lemma}\label{lems:order}
	Under \textbf{Assumptions 1-6}, and for $\Lambda_n$ as in Lemma \ref{local-lem}, $\sqrt{n}\Lambda_{n}^{-1}(\hat\zeta_n-\zeta^0)=O_p(1)$.
\end{lemma}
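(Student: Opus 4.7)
The idea is to Taylor-expand the first-order condition of the CUE around $\zeta^0$, insert $\Lambda_n\Lambda_n^{-1}$ into the Jacobian term so that the well-behaved limit of Lemma \ref{local-lem} can be used, and then invert. By Proposition \ref{lem:cons}, $\hat\zeta_n=\zeta^0+o_p(1)$, so with probability tending to one $\hat\zeta_n$ lies in the interior of $\mathcal{Z}$ and solves the first-order condition of the CUGMM criterion $Q_n(\zeta):=\bar g_n(R\zeta)'S_n^{-1}(R\zeta)\bar g_n(R\zeta)$. Writing $G_n(\zeta):=\partial\bar g_n(R\zeta)/\partial\zeta'=[\partial\bar g_n(R\zeta)/\partial\theta']R$, this first-order condition reads
$$G_n(\hat\zeta_n)'S_n^{-1}(R\hat\zeta_n)\bar g_n(R\hat\zeta_n)+\mathrm{R}_n=0,$$
where $\mathrm{R}_n$ collects the terms arising from differentiating $S_n^{-1}(R\zeta)$ and is bilinear in $\bar g_n(R\hat\zeta_n)$.

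The next step is a component-wise mean-value expansion
$$\bar g_n(R\hat\zeta_n)=\bar g_n(R\zeta^0)+G_n(\tilde\zeta_n)\Lambda_n\cdot\Lambda_n^{-1}(\hat\zeta_n-\zeta^0),$$
for $\tilde\zeta_n$ between $\hat\zeta_n$ and $\zeta^0$. Since $G_n(\zeta)\Lambda_n=[\partial\bar g_n(R\zeta)/\partial\theta']A_n$ with $A_n=R\Lambda_n$, Lemma \ref{lems_old} delivers $G_n(\hat\zeta_n)\Lambda_n\overset{p}{\to}M$ and $G_n(\tilde\zeta_n)\Lambda_n\overset{p}{\to}M$. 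By Lemma \ref{lem_uniform}, $\sqrt{n}\bar g_n(R\zeta^0)=O_p(1)$ and $S_n(R\hat\zeta_n)\overset{p}{\to}S$, where $S$ is positive definite thanks to the standing hypothesis $\sup_{\zeta\in\mathcal{Z}}\|S_n^{-1}(\zeta)\|=O_p(1)$. Substituting the expansion into the first-order condition and pre-multiplying by $\sqrt{n}\Lambda_n$ yields
$$M'S^{-1}\sqrt{n}\bar g_n(R\zeta^0)+M'S^{-1}M\cdot\sqrt{n}\Lambda_n^{-1}(\hat\zeta_n-\zeta^0)+o_p(1)=0.$$
Since $M$ has full column rank by Lemma \ref{local-lem} and $S^{-1}$ is positive definite, $M'S^{-1}M$ is invertible, and solving gives
$$\sqrt{n}\Lambda_n^{-1}(\hat\zeta_n-\zeta^0)=-(M'S^{-1}M)^{-1}M'S^{-1}\sqrt{n}\bar g_n(R\zeta^0)+o_p(1)=O_p(1).$$

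The main obstacle will be showing that the scaled residual $\sqrt{n}\Lambda_n\mathrm{R}_n$ is $o_p(1)$, because $\Lambda_n$ inflates the first coordinate by $\varsigma_n$, which may diverge. Since $\mathrm{R}_n$ is bilinear in $\bar g_n(R\hat\zeta_n)$, the critical CUGMM bound $J_n(\hat\theta_n,\hat\theta_n)\le J_n(\theta^0,\theta^0)=O_p(1)$ (see equation (\ref{upper})), combined with $\sup_{\zeta}\|S_n^{-1}(\zeta)\|=O_p(1)$, forces $\bar g_n(R\hat\zeta_n)=O_p(n^{-1/2})$; then $\sqrt{n}\Lambda_n\mathrm{R}_n=O_p(\varsigma_n/\sqrt{n})=o_p(1)$ under $\varsigma_n=o(\sqrt{n})$. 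A secondary but routine item is the uniform consistency of $G_n(\cdot)\Lambda_n$ needed when applying Lemma \ref{lems_old} at $\tilde\zeta_n$, which follows from Lemma \ref{lem_uniform_deriv} for the $\eta_1$-direction together with standard uniform bounds for the remaining coordinates, where the Jacobian is non-degenerate.
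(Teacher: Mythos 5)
Your argument is correct, but it takes a genuinely different route from the paper's. The paper proves this lemma without ever touching the first-order conditions: it starts from the minimizing property $J_n(\zeta^0,\zeta^0)\ge J_n(\hat\zeta_n,\hat\zeta_n)=J_n(\hat\zeta_n,\zeta^0)\{1+o_p(1)\}$, mean-value expands $\sqrt{n}\bar g_n(\hat\zeta_n)$ around $\zeta^0$, and applies the reverse triangle inequality in the $S_n^{-1}(\zeta^0)$-norm to get $O_p(1)\ge -\|\sqrt{n}\bar g_n(\zeta^0)\|_{\Omega_n}+C\|\sqrt{n}\Lambda_n^{-1}(\hat\zeta_n-\zeta^0)\|\{1+o_p(1)\}$, with $C>0$ coming from the full column rank of $M$ (Lemma \ref{local-lem}) via Lemma \ref{lems_old}; the rate bound then falls out immediately. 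That route never needs interiority of $\hat\zeta_n$, differentiability of the CU weighting matrix, or control of the term you call $\mathrm{R}_n$. Your FOC-based route is essentially the machinery the paper deploys later, in the proof of Theorem \ref{theoremN} (which there takes Lemma \ref{lems:order} as an input); you avoid circularity by obtaining $\bar g_n(R\hat\zeta_n)=O_p(n^{-1/2})$ directly from the CUGMM upper bound rather than from the rate result, which is the right move, and as a byproduct you get the full asymptotic linear representation rather than just $O_p(1)$. One small imprecision: to pass from $J_n(\hat\zeta_n,\hat\zeta_n)=O_p(1)$ to $\|\bar g_n(R\hat\zeta_n)\|=O_p(n^{-1/2})$ you need the eigenvalues of $S_n(\hat\zeta_n)$ bounded \emph{above} (equivalently, those of $S_n^{-1}$ bounded away from zero), which follows from the uniform boundedness of the covariance kernel in Lemma \ref{lem_uniform}, not from the hypothesis $\sup_{\zeta}\|S_n^{-1}(\zeta)\|=O_p(1)$ that you cite; the needed fact is available, but it is a different one.
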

\begin{proof}[Proof of Lemma \ref{lems:order}] The result is a consequence of Proposition \ref{lem:cons} and Lemma \ref{lems_old}, and the following inequality:
\begin{flalign*}
J_n(\zeta^0,\zeta^0)\geq J_n(\hat\zeta_n,\hat\zeta_n)= J_n(\hat\zeta_n,\zeta^0)\{1+o_p(1)\},
\end{flalign*}	which follows from the definition of $\hat\zeta_n$ and the consistency of $\hat\zeta_n$ in Proposition \ref{lem:cons}. For some component-by-component intermediate value $\zeta^*_n$,
$$
\sqrt{n}\bar{g}_n(\hat\zeta_n)=\sqrt{n}\bar{g}_n(\zeta^0)-\sqrt{n}\frac{\partial \bar{g}_n(\zeta^*_n)}{\partial\zeta'}(\zeta^0-\hat\zeta_n),
$$ and we can apply the inequality $\|a-b\|\ge -\|a\|+\|b\|$ to obtain
\begin{flalign*}
J^{1/2}_n(\hat\zeta,\zeta^0)&\ge -\|\sqrt{n}\bar{g}_n(\zeta^0)\|_{\Omega_n}+\|\sqrt{n}{\partial \bar{g}_n(\zeta^*_n)}/{\partial\zeta'}(\zeta^0-\hat\zeta_n)\|_{\Omega_n},
\end{flalign*}
where $\Omega_n=S_n^{-1}(\zeta^0)$, $\|x\|_{\Omega_n}:=(x'\Omega_nx)^{1/2}$ and where we have used the fact that (with probability converging to unity) $\lambda_{\text{min}}(\Omega_n)>0$. By the consistency of $\hat\zeta_n$ proved in Proposition \ref{lem:cons} and Lemma \ref{lems_old}, and for $M$ as defined in Lemma \ref{local-lem}, we have
\begin{flalign*}
\|\sqrt{n}{\partial \bar{g}_n(\zeta^*_n)}/{\partial\zeta'}(\zeta^0-\hat\zeta_n)\|_{\Omega_n}&=\|{\partial \bar{g}_n(\zeta^*_n)}/{\partial\zeta'}\Lambda_n\sqrt{n}\Lambda_n^{-1}(\zeta^0-\hat\zeta_n)\|_{\Omega_n}\\&=\|M\sqrt{n}\Lambda^{-1}_n(\hat\zeta_n-\zeta^0)+o_p\left(\sqrt{n}\Lambda^{-1}_n(\hat\zeta_n-\zeta^0)\right)\|_{\Omega_n}\\&\geq C\|\sqrt{n}\Lambda^{-1}_n(\hat\zeta_n-\zeta^0)\{1+o_p(1)\}\|
\end{flalign*}
for some constant $C>0$, where the last inequality follows from the fact that $M$ is full column rank and the fact that $\lambda_{\text{min}}(\Omega_n)>0$ (with probability converging to unity). Applying the above inequality into the first inequality, and using the fact that $J_n(\zeta^0,\zeta^0)=O_p(1)$, we obtain
$$
O_p(1)\geq C\|\sqrt{n}\Lambda^{-1}_n(\hat\zeta_n-\zeta^0)\{1+o_p(1)\}\|.
$$
\end{proof}

\subsection{Proofs of Main Results}\label{A2}

\begin{proof}[Proof of Proposition \ref{prop1}] First, note that
\[
\sqrt{n}\left[ \bar{g}_{n}\left( \hat{\theta}_{n}^{\delta }\right) -\bar{g}%
_{n}\left( \hat{\theta}_{n}\right) \right] =\sqrt{n}\frac{\partial \bar{g}%
	_{n}}{\partial \eta _{1}}\left( \eta _{1n}^{\ast },\hat{\eta}_{2n},\hat{%
	\eta}_{3n},\hat{\theta}_{2n}\right) \delta _{n},\]
where $\eta _{1n}^{\ast }$ denotes a component-by-component intermediate
value between the first coefficients of $\hat{\theta}_{n}$ and $\hat{\theta}%
_{n}^{\delta }$. Recall $\delta_n\rightarrow0$ as $n\rightarrow\infty$. Thus, we only have to prove that
\[
\sqrt{n}\frac{\partial \bar{g}_{n}}{\partial \eta _{1}}\left( \eta
_{1n}^{\ast },\hat{\eta}_{2n},\hat{\eta}_{3n},\hat{\theta}_{2n}\right)
=O_{p}\left(1\right) .
\]
 For this purpose, we write the Taylor expansion
\begin{equation}
\sqrt{n}\frac{\partial \bar{g}_{n}}{\partial \eta _{1}}\left( \eta
_{1n}^{\ast },\hat{\eta}_{2n},\hat{\eta}_{3n},\hat{\theta}_{2n}\right) =%
\sqrt{n}\frac{\partial \bar{g}_{n}}{\partial \eta _{1}}\left( \eta
_{1n}^{\ast },\hat{\eta}_{2n},\hat{\eta}_{3n},\theta _{2}^{0}\right) +%
\frac{\partial ^{2}\bar{g}_{n}}{\partial \eta _{1}\partial\theta_2'}\left( \eta
_{1n}^{\ast },\hat{\eta}_{2n},\hat{\eta}_{3n},\theta _{2n}^{\ast
}\right) \sqrt{n}\left( \hat{\theta}_{2n}-\theta _{2}^{0}\right),
\label{Taylweak}
\end{equation}
for some intermediate value $\theta _{2n}^{\ast
}$. By construction, the separation of estimators\ of $\theta _{1}$\
(or $\eta _{1}$) and $\theta _{2}$\ (see Remark 3 in Section 2.2) implies
that $\sqrt{n}( \hat{\theta}_{2n}-\theta _{2}^{0}) =O_{p}(1).$
It is also worth noting that application of Lemma A1 of \citet{stock2000gmm}
would allow us to prove this result in an even more general context.

To see that the second part of the RHS of (\ref{Taylweak}) is $O_{p}(1)$, note the following: (i), ${\partial ^{2}\bar{g}_{n}}/{\partial \eta _{1}\partial\theta_2'}$ is continuous in $\eta$ and $\theta_2$; (ii), $\Upsilon(\theta_2^0)\times\Theta_2$ is compact; (iii), verify that $\|{\partial ^{2}\bar{g}_{n}}/{\partial \eta _{1}\partial\theta_2'}\|\le 2\|\tilde{a}(y_{2i},z_i,x_i)z_i'\|$, where $\E[\|\tilde{a}(y_{2i},z_i,x_i)z_i'\|]<\infty$ by hypothesis. From the i.i.d. nature of the data, the uniform law of large number (ULLN) then implies that the second derivative in question converges uniformly, and together with the fact that $\sqrt{n}( \hat{\theta}_{2n}-\theta _{2}^{0}) =O_{p}(1)$ implies that the second term on the RHS of (\ref{Taylweak}) is $O_p(1)$.

Finally, it is straightforward to deduce that
\begin{flalign*}
\sup_{\eta \in \Upsilon \left( \theta _{2}^{0}\right) }\left\Vert \sqrt{n}%
\frac{\partial \bar{g}_{n}}{\partial \eta _{1}}\left( \eta,\theta
_{2}^{0}\right) \right\Vert &\leq \sup_{\eta \in \Upsilon \left( \theta
	_{2}^{0}\right) }\left\Vert \E_{n}\left\{ \sqrt{n}\frac{\partial \bar{g}_{n}}{%
	\partial \eta _{1}}\left( \eta,\theta _{2}^{0}\right) \right\} \right\Vert
\\&+\sup_{\eta \in \Upsilon \left( \theta _{2}^{0}\right) }\left\Vert \sqrt{n}%
\frac{\partial \bar{g}_{n}}{\partial \eta _{1}}\left( \eta,\theta
_{2}^{0}\right) -\E_{n}\left\{ \sqrt{n}\frac{\partial \bar{g}_{n}}{\partial
	\eta _{1}}\left( \eta,\theta _{2}^{0}\right) \right\} \right\Vert.
\end{flalign*}The first term is $O(1)$ under the null, while the second term is $O_p(1)$ under \textbf{Assumption 4} (or \textbf{Assumption 1} and Lemma \ref{lem_uniform_deriv}).
\end{proof}

\begin{proof}[Proof of Theorem \ref{thm0}]The result follows direction from \textbf{Proposition \ref{prop1}}. To see this, note that, by definition,
	\begin{equation}
	J_{n}\left( \hat{\theta}_{n},\hat\theta_n\right) \leq J_{n}\left[\left( \eta _{1}^{0},\tilde{%
		\eta}_{2n},\tilde{\eta}_{3n},\tilde{\theta}_{2n}\right),\theta^0\right]  \label{size},
	\end{equation}
	where $( \tilde{\eta}_{2n},\tilde{\eta}_{3n},\tilde{\theta}%
	_{2n}) $ denotes the infeasible CUGMM\ estimator of $\left(
	\eta _{2},\eta _{3},\theta _{2}\right) $\ that would result if we knew $%
	\eta _{1}^{0}$; i.e.,
	\[
	\left( \tilde{\eta}_{2n},\tilde{\eta}_{3n},\tilde{\theta}_{2n}\right)
	=\argmin_{\left( \eta _{2},\eta _{3},\theta _{2}\right) }J_{n}\left[\left( \eta
	_{1}^{0},\eta _{2},\eta _{3},\theta _{2}\right),\left( \eta
	_{1}^{0},\eta _{2},\eta _{3},\theta _{2}\right) \right].
	\]
	However, under \textbf{Assumptions 1-3}, the standard theory of the J-test for over-identification test for estimation of $(\eta_2,\eta_3,\theta_2)$ yields
	\[
	J_{n}\left[\left( \eta _{1}^{0},\tilde{\eta}_{2n},\tilde{\eta}_{3n},\tilde{\theta%
	}_{2n}\right),\theta^0\right] \overset{d}{\rightarrow}\chi ^{2}\left( H+1-p\right),
	\]
	where $\overset{d}{\rightarrow}$ denotes convergence in
	distribution. Hence, the result in Proposition \ref{prop1} implies that $J_n^\delta$ is asymptotically bounded above by a $\chi^2(H+1-p)$ random variable, which yields the necessary size control for the test $W_n^\delta$.
\end{proof}

\begin{proof}[Proof of Proposition \ref{lem:cons}]We work in the rotated parameter space, collected as $\zeta:=(\eta',\theta_2')'$, and note here that the result can be moved to the original parameters through the change of basis  $\theta=R^{}\zeta$ in \eqref{eq:basis}.
	
	Firstly, we demonstrate that there exist a deterministic diagonal matrix $\widetilde{\Lambda}_n$, a vector function $\gamma(\zeta)$, continuous in $\zeta$, and a vector function $q_2(\eta_2,\eta_3)$, continuous in $(\eta_2,\eta_3)$, such that under our drifting DGP,\footnote{Technically, the functions $\gamma(\cdot)$ and $q_2(\cdot,\cdot)$ will be $n$-dependent, since we are in the context of a drifting DGP. However, to lessen the notational burden we suppress the dependence of these functions on $n$.}	 $$\mathbb{E}_n\left[\bar{g}_n(\zeta)\right]=\frac{\widetilde{\Lambda}_n}{\sqrt{n}}\gamma(\zeta)+q_{2}(\eta_2,\eta_3),$$
and
$$\gamma(\zeta)=0\text{ and }q_{2}(\eta_2,\eta_3)=0\quad\iff\quad\zeta=0,$$
where $\widetilde{\Lambda}_{n}$  has minimal and maximal eigenvalues, denoted by ${\lambda}_{\text{min}}[\widetilde{\Lambda}_n]$ and ${\lambda}_{\text{max}}[\widetilde{\Lambda}_n]$, respectively, that satisfy:
	$$\lim_{n\rightarrow\infty}{\lambda}_{\text{min}}[\widetilde{\Lambda}_n]=\infty\text{ and }\lim_{n\rightarrow\infty}{\lambda}_{\text{max}}[\widetilde{\Lambda}_n]/{\sqrt{n}}<\infty. $$
	After this, we can apply a similar strategy to Theorem 2.1 of \citet{antoine2012efficient} to establish estimation consistency for the parameters $\zeta^0:=(\eta^{0'},\theta_2^{0'})'$.
	To simplify the calculations, we establish this result in the case where $x_i=1$, for all $i$, and scalar $z_i$, which yields the moment functions: $g_i(\theta)=(g_{1i}(\eta,\theta_2)',g_{2i}(\theta)')'$, where $$g_{1i}(\eta,\theta_2)=a _ { i } \left( y _ { 1i } - \Phi \left[ -\eta_1z_i\xi+\eta_2 y _ { 2i }+\eta_3 \right]\right) ,\;g_{2i}(\theta)=\begin{pmatrix}y _ { 2i } - \pi - \xi z _ { i }\\ z _ { i } \left( y _ { 2i } - \pi - \xi z _ { i}\right)
	\end{pmatrix}.$$
 From the identification condition in \textbf{Assumption 2},  $\theta_2^0=(\pi^0,\xi^0)'$ can be directly identified from $\mathbb{E}_n[g_{2i}(\theta)]=0$, which would yield least square estimators $$\hat{\theta}_2:=\begin{pmatrix}
	\hat{\pi}_n\\\hat{\xi}_n
	\end{pmatrix}=\begin{pmatrix}\bar{y}_{2n}-\hat{\xi}_n\bar{z}_n\\\sum_{i=1}^{n}(z_i-\bar{z}_n)(y_{2i}-\bar{y}_{2n})/\sum_{i=1}^{n}(z_i-\bar{z}_n)^2
	\end{pmatrix},
	$$
for $\bar{z}_n=\sum_{i=1}^nz_i/n$ and $\bar{y}_{2n}=\sum_{i=1}^ny_{2i}/n$,	which are clearly $\sqrt{n}$-consistent and asymptotically normal under \textbf{Assumptions 1 and 2}.
	
	Now, define the stochastic process $\nu_n(\eta,\theta_2)=(\nu_{1n}(\eta,\theta_2)',\nu_{2n}(\theta_2))'$ to be conformable to $g_i(\eta,\theta_2)=(g_{1i}(\eta,\theta_2)',g_{2i}(\theta_2)')'$, where by abuse of notation, we write $g_{2i}(\theta)$ as $g_{2i}(\theta_2)$. From the $\sqrt{n}$-consistency of $(\hat{\pi}_n,\hat{\xi}_n)'$ and stochastic equicontinuity of $\nu_{1n}(\eta,\theta_2)$, we can restrict our analysis on the uniform behavior of $\nu_{1n}(\eta,\theta_2)$ to the set $\Upsilon_n:=\{(\eta,\theta_2):\eta\in\Upsilon(\theta_2),\theta_2\in\Theta_{2,n^{}}\}$, for $\Upsilon(\theta_2)$ as defined above equation \eqref{nullweak}, and where for some $\delta>0$ and $\delta=o(1)$, $$\Theta_{2,n}:=\left\{\theta_{2n}:\|\theta_{2n}-\theta_2^0\|\leq \delta/\sqrt{n}\right\}.$$ In the remainder, we take $\theta_{2n}$ to be an arbitrary sequence in $\Theta_{2,n^{}}$.
	
	For $\theta_{2n}$ as above, recall that, using the decomposition in equation \eqref{decomp}, for some $\bar\eta_1$ such that $\eta_1^0\le \bar\eta_1\le \eta_1$,
	\begin{flalign}
	m_{1n}(\eta,\theta_{2n})
&=m_{1n}(\eta,\theta^{0}_2)+m_{1n}(\eta,\theta_{2n})-m_{1n}(\eta,\theta^{0}_2)\nonumber\\
&=q_{11,n}(\eta)/\varsigma_{n}+q_{12,n}(\eta_2,\eta_3)+o_p(n^{-1/2})\nonumber\\
&=(\eta_1-\eta^0_1)\E_{n}\left[\frac{1}{n}\sum_{i=1}^{n}\tilde{a}_i\phi_{i}(\bar\eta_1,\eta_2,\eta_3;\theta^0_{2})z_i\xi^{0}\right]+q_{12,n}(\eta_2,\eta_3)+o_p(n^{-1/2}).\label{eq:d0}
	\end{flalign} Moreover, by \textbf{Assumption 5}, uniformly over $\bar{\eta}=(\bar{\eta}_1,\eta_2',\eta_3')'\in\Upsilon(\theta^0_2)$,
	$$
	\left\|\E_{n}\left[\frac{1}{n}\sum_{i=1}^{n}\tilde{a}_i\phi_{i}(\bar\eta,\theta^0_{2})z_i\xi^{0}\right]\varsigma_{n}-V^0(\bar\eta)\right\|=o(1)
	$$so that
	\begin{flalign}
	m_{1n}(\eta,\theta_{2n})&=\varsigma^{-1}_{n}(\eta_1-\eta^0_1)V^0(\bar\eta)+q_{12,n}(\eta_2,\eta_3)+o_p(n^{-1/2})\label{eq:d1}.
	\end{flalign}

	Now, decompose $\sqrt{n}\bar{g}_{1n}(\eta,\theta_{2n})$ as
	\begin{flalign*}
	\sqrt{n}\bar{g}_{1n}(\eta,\theta_{2n})&=\sqrt{n}\left\{\bar{g}_{1n}(\eta,\theta_{2n})-m_{1n}(\eta,\theta_{2n}) \right\}+\sqrt{n}m_{1n}(\eta,\theta_{2n}),
	\end{flalign*}and apply equation \eqref{eq:d1} to obtain
	\begin{flalign*}
	\sqrt{n}\bar{g}_{1n}(\eta,\theta_{2n})&=\nu_{1n}(\eta,\theta^0_2)+\sqrt{n}m_{1n}(\eta,\theta^0_2)+o_p(1)\\&=\nu_{1n}(\eta,\theta^0_2)+\frac{\sqrt{n}}{\varsigma_{n}}V^0(\bar\eta)(\eta_1-\eta^0_1)\{1+o_p(1)\}+\sqrt{n}q_{12,n}(\eta_2,\eta_3).
	\end{flalign*}Recall that by Lemma \ref{lem_uniform}, $\nu_n(\eta,\theta^0_2)\Rightarrow\nu(\eta,\theta^0_2)$, and hence is $O_p(1)$ uniformly for $\eta\in\Upsilon(\theta^0_2)$.

	Define $\bar{\lambda}_n:=\sqrt{n}/\varsigma_{n}$, which satisfies $\bar{\lambda}_n\rightarrow\infty$, as $n\rightarrow\infty$, where $\bar{\lambda}_n=o(\sqrt{n})$ by the definition of $\varsigma_n$ in \textbf{Assumption 5}. Now, define the matrix  $$\widetilde{\Lambda}_n:=\begin{bmatrix}\bar{\lambda}_n\mathbf{I}_{\text{dim}(g_1)}&\mathbf{O}\\\mathbf{O}&n^{1/2}\mathbf{I}_{\text{dim}(g_2)}&
	\end{bmatrix}$$
	and the vectors $$\gamma(\zeta)=\begin{pmatrix}
	V^0(\eta)(\eta_1-\eta_1^0)\\\E_n\left[\bar g_{2n}(\theta_2)\right]
	\end{pmatrix}, \quad q_{2}(\eta_2,\eta_3)=\begin{pmatrix}
	q_{12,n}(\eta_2,\eta_3)\\\mathbf{0}
	\end{pmatrix}. $$  Then, up to $o_p(1)$ terms,
	\begin{flalign*}
	\sqrt{n}\bar{g}_n(\eta,\theta_{2})&=\sqrt{n}\left\{\bar{g}_n(\eta,\theta_{2})-\E_n[\bar g_n(\eta,\theta_2)]\right\}+\sqrt{n}\E_n[\bar g_n(\eta,\theta_2)]\\&=\nu_n(\eta,\theta_2)+{\widetilde{\Lambda}_n}\gamma(\zeta)+\sqrt{n}q_{2}(\eta_2,\eta_3).
	\end{flalign*}

	The remainder of the result follows a similar strategy to Theorem 2.1 in \citet{antoine2012efficient}. Let $W$ be a positive-definite $H\times H$ matrix, and define $\|x\|^2_{W}:=x'Wx$. For $\nu_n(\zeta)$, $\widetilde{\Lambda}_n$ and $\gamma(\zeta)$ as above, we can rewrite the CUGMM objective function in the rotated parameter space as $$J_{n}[\zeta,\zeta]/n=\left\|\frac{\nu_{n}(\zeta)}{\sqrt{n}}+\frac{\widetilde{\Lambda}_n}{\sqrt{n}}\gamma(\zeta)+q_{2}(\eta_2,\eta_3)\right\|_{\Omega_{n}(\zeta)}^2,\text{ for }\Omega_n(\zeta):=S_n^{-1}(\zeta).$$ By definition of $\hat{\zeta}_n$, $J_{n}[\zeta^0,\zeta^0]\geq J_{n}[\hat{\zeta}_n,\hat{\zeta}_n]$ which implies
	\begin{flalign}\label{eq_ineq}
	\left\|{\nu_{n}(\zeta^0)}/\sqrt{n}\right\|_{\Omega_{n}(\zeta^0)}^2\geq\left\|{\nu_{n}(\hat{\zeta}_{n})}/\sqrt{n}+{\widetilde{\Lambda}_n}\gamma(\hat{\zeta}_n)/\sqrt{n}+q_{2}(\hat\eta_{2n},\hat\eta_{3n})\right\|_{\Omega_{n}(\hat{\zeta}_n)}^2.
	\end{flalign}
	Define $\Omega_{n}^0:=\Omega_{n}(\zeta^0)$, $\hat{\Omega}_{n}:=\Omega_{n}(\hat{\zeta}_n)$, $x_n:=\nu_n(\hat{\zeta}_n)$, $y_n:=\widetilde{\Lambda}_n\gamma(\hat{\zeta}_n)+\sqrt{n}q_{2}(\hat\eta_{2n},\hat\eta_{3n})$ and $d_n:= \nu_{n}(\hat{\zeta}_{n})'\hat{\Omega}_n \nu_{n}(\hat{\zeta}_{n})-\nu_{n}({\zeta}^{0})'\Omega^0_n\nu_{n}({\zeta}^{0})$. Denote $\lambda_{\text{min}}[A]$ and $\lambda_{\text{max}}[A]$ as the smallest and the largest eigenvalue of a matrix $A$, respectively. Then, from \eqref{eq_ineq}, we obtain
	\begin{flalign}
	0&\geq J_n[\hat{\zeta}_n,\hat{\zeta}_n]-J_n[\zeta^0,\zeta^0]= d_n+\|y_n\|^2_{\hat{\Omega}_n}+2(\hat{\Omega}_n x_n)' y_n\nonumber\\
	&\geq d_n +\|y_n\|^2\lambda_{\text{min}}\left[\hat{\Omega}_n\right]-2\|y_n\|\|\hat{\Omega}_nx_n\|.\label{cons1}
	\end{flalign}
	Defining $z_n:=\|y_n\|$, and for $\lambda_{\text{min}}\left[\hat{\Omega}_n\right]>0$, we can re-arrange equation \eqref{cons1} as
	\begin{equation*}
	z_n^2-2z_n\frac{\|\hat{\Omega}_nx_n\|}{\lambda_{\text{min}}\left[\hat{\Omega}_n\right]}+\frac{d_n}{\lambda_{\text{min}}\left[\hat{\Omega}_n\right]}\leq0
	\end{equation*}
	Solving the above equation for $z_n$ yields:
	\begin{equation}
	B_n-\left[B_n^2-C_n\right]^{1/2}\leq z_n\leq B_n+\left[B_n^2-C_n\right]^{1/2},\;\;B_n:= \frac{\|\hat{\Omega}_nx_n\|}{\lambda_{\text{min}}\left[\hat{\Omega}_n\right]},\;\;C_n:=\frac{d_n}{\lambda_{\text{min}}\left[\hat{\Omega}_n\right]}\label{cons2},
	\end{equation}
	where by definition of $C_n$ and $B_n$ we know that $B_n^2-C_n\geq0$. From \eqref{cons2}, the result follows if  $$B_n=O_{p}(1),\text{ and }C_n=O_{p}(1).$$Consider first, $B_n$ and note that
	\begin{flalign*}
	B_n&\leq \|x_n\|\frac{\lambda_{\text{max}}\left[\hat{\Omega}_n\right]}{\lambda_{\text{min}}\left[\hat{\Omega}_n\right]}\leq \sup_{\zeta\in \mathcal{Z}}\|\nu_n(\zeta)\|\frac{\sup_{\zeta\in \mathcal{Z}}\lambda_{\text{max}}\left[{\Omega}_n(\zeta)\right]}{\inf_{\zeta\in \mathcal{Z}}\lambda_{\text{min}}\left[{\Omega}_n(\zeta)\right]}.
	\end{flalign*}
	By the result of Lemma \ref{lem_uniform}, $\sup_{\zeta\in \mathcal{Z}}\|\nu_n({\zeta})\|=O_{p}(1)$. It then follows that $B_n=O_{p}(1)$ so long as, for all $n$ large enough, with probability approaching one,  $$0<\inf_{\zeta\in \mathcal{Z}}\lambda_{\text{min}}\left[\Omega^{}_n(\zeta)\right]\leq \sup_{\zeta\in \mathcal{Z}}\lambda_{\text{max}}\left[\Omega^{}_n(\zeta)\right]< \infty,$$ which is guaranteed to be satisfied for $n$ large enough under the assumptions of the result. For $C_n$, recalling that $d_n=\|\nu_n(\hat{\zeta}_n)\|^2_{\hat{\Omega}_n}-  \|\nu_n({\zeta}^0)\|^2_{{\Omega}_n^0}$, we obtain
	\begin{flalign*}
	|C_n|&\leq 2\sup_{\zeta\in \mathcal{Z}}\left\|\nu_n(\zeta)\right\|^2\frac{\sup_{\zeta\in \mathcal{Z}}\lambda_{\text{max}}\left[{\Omega}_n(\zeta)\right]}{\inf_{\zeta\in \mathcal{Z}}\lambda_{\text{min}}\left[{\Omega}_n(\zeta)\right]}.
	\end{flalign*}Repeating the same argument for $C_n$ as for $B_n$ yields $C_n=O_{p}(1)$. Applying $B_n=O_{p}(1),\;C_n=O_{p}(1)$ to equation \eqref{cons2}, we have $$z_n=\|y_n\|=\|\widetilde{\Lambda}_n\gamma(\hat\zeta_{n})+\sqrt{n}q_{2}(\hat\eta_{2n},\hat\eta_{3n})\|=O_{p}(1)$$ It then follows that,  $$\|\gamma(\hat{\zeta}_n)+q_{2}(\hat\eta_{2n},\hat\eta_{3n})\|=O_{p}\left(1/\bar{\lambda}_n\right).$$
	
	Consistency of $\hat{\zeta}_n$ now follows by modifying the standard argument (see, e.g., \citet{newey1994large}, page 2132). By continuity of $\gamma(\zeta)+q_{2}(\eta_2,\eta_3)$, for any $\epsilon>0$, there exists some $\delta_\epsilon$ such that $$\text{Pr}\left[\|\hat{\zeta}_n-\zeta^0\|>\epsilon\right]\leq \text{Pr}\left[\left\|\left\{\gamma(\hat{\zeta}_n)+q_{2}(\hat\eta_{2n},\hat\eta_{3n})\right\}-\gamma(\zeta^0)-q_{2}(\eta_{2}^0,\eta_{3}^0)\right\|>\delta_\epsilon\right].$$However,
	by \textbf{Assumption 5}, $V^0(\eta)$ is non-zero uniformly for $\eta\in\Upsilon(\theta^0_2)$, so that under the identification condition in \textbf{Assumption 2} and the identification of $q_{12,n}(\eta_2,\eta_3)$ in \textbf{Assumption 3}, we can conclude:
	$$
	\|\gamma(\zeta)+q_{2}(\eta_{2},\eta_{3})\|\leq\sup_{\eta\in\Upsilon(\theta^0_2)}\|V^0(\eta)\|\|\eta_1-\eta^0_1\|+\|\E_n[\bar g_{2n}(\theta_2)]\|+\|q_{12,n}(\eta_{2},\eta_{3})\|=0\quad\iff\quad\zeta=\zeta^0.
	$$
	Therefore,
	\begin{flalign*}
	\text{Pr}\left[\|\hat{\zeta}_n-\zeta^0\|>\epsilon\right]\leq\text{Pr}\left[\delta_\epsilon<\left\|\gamma(\hat{\zeta}_n)+q_{2}(\hat\eta_{2n},\hat\eta_{3n})\right\|\right]=o(1),
	\end{flalign*}where the last equality follows from the fact that $\|\gamma(\hat{\zeta}_n)+q_{2}(\hat\eta_{2n},\hat\eta_{3n})\|=O_{p}\left(1/\bar{\lambda}_n\right)$, and $\bar{\lambda}_n\rightarrow\infty$ as $n\rightarrow\infty$.
\end{proof}

\begin{proof}[Proof of Lemma \ref{local-lem}]
	In the rotated parameter space, the rotated moment function is given by
	$$g_i(\zeta)=a_i r_{1i}(\zeta)+b_i r_{2i}(\theta_2)=\begin{pmatrix}\tilde{a}_i(y_{2i},x_i,z_i)r_{1i}(\zeta)\\\tilde{b}_i(x_i,z_i)r_{2i}(\theta_2)
	\end{pmatrix}=\begin{pmatrix}
	g_{1i}(\zeta)\\g_{2i}(\theta_2)
	\end{pmatrix}.
	$$The $(H\times p)$-dimensional Jacobian matrix $\partial g_i(\zeta)/\partial\zeta'$ is given by
	$$
	\partial g_i(\zeta)/\partial\zeta'=\begin{pmatrix}\partial g_{1i}(\zeta)/\partial\eta'&\partial g_{1i}(\zeta)/\partial\theta_2'\\
	\textbf{O}&\partial g_{2i}(\theta_2)/\partial\theta_2'
	\end{pmatrix}	.
	$$
	For $\Lambda_n$ as in the statement of the result,
	\begin{flalign}
	\frac{\partial \bar{g}_n(\zeta^0)}{\partial\zeta'}\Lambda^{}_n&=\left\{\frac{\partial \bar{g}_n(\zeta^0)}{\partial\zeta'}-\E_n\left[\frac{\partial \bar{g}_n(\zeta^0)}{\partial\zeta'}\right]\right\}\Lambda^{}_n+\left\{\E_n\left[\frac{\partial \bar{g}_n(\zeta^0)}{\partial\zeta'}\right]\right\}\Lambda^{}_n\nonumber\\&=O_p(\varsigma_{n}/\sqrt{n})+o_p(1)+\left\{\E_n\left[\frac{\partial \bar{g}_n(\zeta^0)}{\partial\zeta'}\right]\right\}\Lambda^{}_n\nonumber\\&=o_p(1)+\left\{\E_n\left[\frac{\partial \bar{g}_n(\zeta^0)}{\partial\zeta'}\right]\right\}\Lambda^{}_n.\label{deriv:decom}
	\end{flalign}The second equality follows from \textbf{Assumption 5}, and the uniform convergence of the remaining derivatives, which follows from \textbf{Assumptions 1, 2} and a ULLN for iid data. The third equation follows from the fact that $\varsigma_{n}/\sqrt{n}=o(1)$. For $\Lambda_{1n}$ denoting the diagonal matrix $$\Lambda_{1n}:=\begin{pmatrix}
	\varsigma_n&\textbf{O}\\\textbf{O}&\mathbf{I}_{k_x+1}
	\end{pmatrix}$$
	we decompose the $(p\times p)$-dimensional matrix $\Lambda_n$ as $$ \Lambda_n=\begin{pmatrix}
	\Lambda_{1n}&\textbf{O}\\\textbf{O}&\mathbf{I}_{k_x+k_z}
	\end{pmatrix} .$$
	From this definition, the	last term in equation \eqref{deriv:decom} can be stated as
	\begin{flalign}
	\E_n\left[\frac{\partial \bar{g}_n(\zeta^0)}{\partial\zeta'}\right]\Lambda^{}_n&=\E_n\left[\frac{\partial \bar{g}_n(\zeta^0)}{\partial\eta'}\vdots \frac{\partial \bar{g}_n(\zeta^0)}{\partial\theta_2'}\right]\Lambda^{}_n=\E_n\left[\frac{\partial \bar{g}_n(\zeta^0)}{\partial\eta'}\Lambda^{}_{1n}\;\vdots \;\frac{\partial \bar{g}_n(\zeta^0)}{\partial\theta_2'}\right]
	.\label{eq:derivs}
	\end{flalign}Recalling the functions $q_{11,n}(\eta)$ and $q_{12,n}(\eta_2,\eta_3)$ underlying \textbf{Assumption 3}, the first component in equation \eqref{eq:derivs} can be seen to be given by
	\begin{flalign*}
	\E_n\left[\frac{\partial \bar{g}_n(\zeta^0)}{\partial\eta'}\right]\begin{pmatrix}
	\varsigma_n&\textbf{O}_{}\\\textbf{O}_{}&\textbf{I}_{k_x+1}
	\end{pmatrix}&=\begin{pmatrix}\frac{\partial q_{11,n}(\eta^0)}{\partial\eta_1}\varsigma_n&\textbf{O}_{}\\\textbf{O}_{}&\frac{\partial q_{12,n}(\eta_2^0,\eta_3^0)}{\partial(\eta_2,\eta_3')'}	
	\end{pmatrix}=\begin{pmatrix}
	V^0(\eta^0)&\textbf{O}_{}\\\textbf{O}_{}& \frac{\partial q_{12,n}(\eta_2^0,\eta_3^0)}{\partial(\eta_2,\eta_3')'}		
	\end{pmatrix}\\&=M_1(\eta^0).
	\end{flalign*}	By \textbf{Assumption 3(ii)} the south-east block of $M_1(\eta^0)$ has column rank $1+k_x$, while by \textbf{Assumption 5} the north-east block of $M_1(\eta^0)$ is of column rank $1$. Therefore, since $M_1(\eta^0)$  is block diagonal, conclude that $$\lim_{n\rightarrow\infty}\text{col-rank}\left[M_1(\eta^0)\right]=2+k_x.$$
	For the second term in \eqref{eq:derivs}, recalling the Jacobian of $\partial g_i(\zeta)/\partial\zeta'$, we have that
	\begin{flalign*}
	\E_n\left[\frac{\partial \bar{g}_n(\zeta^0)}{\partial\theta_2'}\right]&=\E_n\left[\begin{pmatrix}
	\partial \bar{g}_{1n}(\eta^0,\theta^0_2)/\partial\theta_2'\\\partial \bar{g}_{2n}(\theta^0_2)/\partial\theta_2'
	\end{pmatrix}\right]=	\begin{pmatrix}\E_n\left[\left(\textbf{O}:\tilde{a}(y_{2i},x_i,z_i)\phi_i(\eta^0,\theta^0_2)\eta^0_1z_i'\right)\right]\\\E_n\left[\tilde{b}(x_i,z_i)\left(x_i'\;:\;z_i'\right)\right]
	\end{pmatrix}
	\end{flalign*}By \textbf{Assumption 5}, the matrix $\E_n\left[\tilde{b}(x_i,z_i)\left(x_i'\;:\;z_i'\right)\right]$ has column rank $(k_x+k_z)$.
	
	Combing the two Jacobian terms, the $H\times p$ dimensional Jacobian matrix in equation \eqref{eq:derivs} can be seen as
	$$
	\E_n\left[\frac{\partial \bar{g}_n(\zeta^0)}{\partial\zeta'}\right]\Lambda^{}_n=\begin{pmatrix}
	M_1(\eta^0)&\E_n\left[\left(\textbf{O}:\tilde{a}(y_{2i},x_i,z_i)\phi_i(\eta^0,\theta^0_2)\eta^0_1z_i'\right)\right]\\\textbf{O}&\E_n\left[\tilde{b}(x_i,z_i)\left(x_i'\;:\;z_i'\right)\right]
	\end{pmatrix}.
	$$
The matrix $$M=\plim_{n\rightarrow\infty}\left\{\frac{\partial \bar{g}_n(\zeta^0)}{\partial\zeta'}\Lambda^{}_n\right\},$$ then exists and satisfies
	\begin{flalign*}
	\text{col-rank}[M]&=\lim_{n\rightarrow\infty} \text{col-rank}\left[M_1(\eta^0)\right]+\lim_{n\rightarrow\infty}\text{col-rank}\left\{\E_n\left[\tilde{b}(x_i,z_i)\left(x_i'\;:\;z_i'\right)\right]\right\}\\
&=(2+k_x)+(k_x+k_z)=p.
	\end{flalign*}
\end{proof}

\begin{proof}[Proof of Theorem \ref{theoremN}]From the first order condition of the CUGMM objective function, $\hat{\zeta}_n$ satisfies
	\begin{flalign}\label{p0}
	 n\frac{\partial\bar{g}_n(\hat{\zeta}_n)'}{\partial\zeta}S_n(\hat{\zeta}_n)^{-1}\bar{g}_n(\hat{\zeta}_n)-W\cdot n\frac{\partial\bar{g}_n(\hat{\zeta}_n)'}{\partial\zeta}S_n(\hat{\zeta}_n)^{-1}\bar{g}_n(\hat{\zeta}_n)=0
	\end{flalign}
	for $W$ defined as
	\begin{align}\label{p1}
	 W\cdot\sqrt{n}\frac{\partial\bar{g}_n(\hat{\zeta}_n)'}{\partial\zeta}=\text{Cov}\left(\frac{\partial\bar{g}_n(\hat{\zeta}_n)'}{\partial\zeta},\bar{g}_n(\hat{\zeta}_n)\right)\left(\textbf{I}_H\otimes\left[S_n(\hat{\zeta}_n)^{-1}\sqrt{n}\bar{g}_n(\hat{\zeta}_n)\right]\right),
	\end{align}
	and where Cov$(\cdot)$
	\begin{align}\label{p2}
	 \text{Cov}\left(\frac{\partial\bar{g}_n(\hat{\zeta}_n)'}{\partial\zeta},\bar{g}_n(\hat{\zeta}_n)\right):=\Bigg[\text{Cov}\left(\frac{\partial\bar{g}_{1n}(\hat{\zeta}_{n})}{\partial\zeta},\bar{g}_n(\hat{\zeta}_n)\right),\cdot\cdot\cdot,\text{Cov}\left(\frac{\partial\bar{g}_{H,n}(\hat{\zeta}_n)}{\partial\zeta},\bar{g}_n(\hat{\zeta}_n)\right)\Bigg].
	\end{align}
	Substituting \eqref{p1} into \eqref{p0}, and multiplying both sides of the equation \eqref{p0} by $n^{-1/2}$, we obtain
	\begin{align}\label{p3}
	 &\sqrt{n}\frac{\partial\bar{g}_n(\hat{\zeta}_n)'}{\partial\zeta}S_n(\hat{\zeta}_n)^{-1}\bar{g}_n(\hat{\zeta}_n)-\text{Cov}\left(\frac{\partial\bar{g}_n(\hat{\zeta}_n)'}{\partial\zeta},\bar{g}_n(\hat{\zeta}_n)\right)\nonumber\\
	 &~~~~~~~~\times\left(\textbf{I}_H\otimes\left[S_n(\hat{\zeta}_n)^{-1}\sqrt{n}\bar{g}_n(\hat{\zeta}_n)\right]\right)S_n(\hat{\zeta}_n)^{-1}\bar{g}_n(\hat{\zeta}_n)=0.
	\end{align}
	Apply the mean value theorem to $\bar{g}_n(\hat{\zeta}_n)$,
	\begin{align*}
	 \bar{g}_n(\hat{\zeta}_n)&=\bar{g}_n(\zeta^0)+\frac{\partial\bar{g}_n(\zeta^*_n)}{\partial\zeta'}(\hat{\zeta}_n-\zeta^0)\nonumber\\
	 &=\bar{g}_n(\zeta^0)+n^{-1/2}\frac{\partial\bar{g}_n(\zeta^*_n)}{\partial\zeta'}\Lambda_nn^{1/2}\Lambda_n^{-1}(\hat{\zeta}_n-\zeta^0).
	\end{align*}
	By Proposition \ref{lem:cons}, $\hat{\zeta}_n$ is consistent and by Lemma \ref{lems:order}, $\sqrt{n}\Lambda_n^{-1}({\hat\zeta_n}-\zeta^0)=O_p(1)$. Then Lemma \ref{lems_old} and \textbf{Assumption 5} yield
	$$
	n^{-1/2}\frac{\partial\bar{g}_n(\zeta^*_n)}{\partial\zeta'}\Lambda_nn^{1/2}\Lambda_n^{-1}(\hat{\zeta}_n-\zeta^0)=n^{-1/2}MO_p(1) +o_p(n^{-1/2})=O_p(n^{-1/2}),
	$$ so that we can conclude
	\begin{align}\label{p4}
	\bar{g}_n(\hat{\zeta}_n)&=\bar{g}_n(\zeta^0)+O_p(n^{-1/2}).
	\end{align}
	
	From \eqref{p4}, the convergence rate of $\bar{g}_n(\hat{\zeta}_n)$ is determined by $\bar{g}_n(\zeta^0)$, and by  Lemma \ref{lem_uniform}, and the fact $\mathbb{E}_n[g_i(\zeta^0)]=0$ (under \textbf{Assumption 2}),
	$$\sqrt{n}\bar{g}_n(\zeta^0)\Rightarrow\nu(\zeta^0),$$
	where $\nu(\zeta^0)$ is a Gaussian process with mean-zero and variance matrix $S(\zeta^0)$. Therefore, $\bar{g}_n(\zeta^0)=O_p(n^{-1/2})$ and together with \eqref{p4}, we have that $\bar{g}_n(\hat{\zeta}_n)=O_p(n^{-1/2})$. Given Lemmas \ref{lem_uniform} and \ref{lem_uniform_deriv}, and the fact that $\sup_{\zeta\in \mathcal{Z}}\|S^{-1}_n(\zeta)\|<\infty$, the above result then yields:
	\begin{align}\label{p5}
	 \text{Cov}\left(\frac{\partial\bar{g}_n(\hat{\zeta}_n)'}{\partial\zeta},\bar{g}_n(\hat{\zeta}_n)\right)=O_p(1),~~\text{and}~~\textbf{I}_H\otimes\left[S_n(\hat{\zeta}_n)^{-1}\sqrt{n}\bar{g}_n(\hat{\zeta}_n)\right]=O_p(1).
	\end{align}
	From $\bar{g}_n(\zeta^0)=O_p(n^{-1/2})$ and the results in \eqref{p5}, the second term on the left hand side of \eqref{p3} is $O_p(n^{-1/2})$. Then, \eqref{p3} becomes
	\begin{align}\label{p6}
	 \sqrt{n}\frac{\partial\bar{g}_n(\hat{\zeta}_n)'}{\partial\zeta}S_n(\hat{\zeta}_n)^{-1}\bar{g}_n(\hat{\zeta}_n)=O_p(n^{-1/2}).
	\end{align}
	Plugging \eqref{p4} into \eqref{p6} and multiplying both sides by $\Lambda_n'$, we obtain
	\begin{align}\label{p7}
	 O_p(n^{-1/2})\Lambda_n'&=\sqrt{n}\Lambda_n'\frac{\partial\bar{g}_n(\hat{\zeta}_n)'}{\partial\zeta}S_n(\hat{\zeta}_n)^{-1}\bar{g}_n(\zeta^0)+\sqrt{n}\Lambda_n'\frac{\partial\bar{g}_n(\hat{\zeta}_n)'}{\partial\zeta}S_n(\hat{\zeta}_n)^{-1}\frac{\partial\bar{g}_n(\zeta^*_n)}{\partial\zeta'}\Lambda_n\Lambda_n^{-1}(\hat{\zeta}_n-\zeta^0).
	\end{align}
	In addition, from to the uniform convergence of $S_n(\zeta)$ to $S(\zeta)$ over $\zeta\in \mathcal{Z}$, which follows from compactness of $\mathcal{Z}$, continuity of $g_i(\zeta)$, \textbf{Assumption 1}, and the consistency of $\hat{\zeta}_n$,
	\begin{align}\label{p9}
	 \|S_n(\hat{\zeta}_n)-S(\zeta^0)\|&=\|S_n(\hat{\zeta}_n)-S(\hat{\zeta}_n)+S(\hat{\zeta}_n)-S(\zeta^0)\|\nonumber\\
	 &\leq\|S_n(\hat{\zeta}_n)-S(\hat{\zeta}_n)\|+\|S(\hat{\zeta}_n)-S(\zeta^0)\|\nonumber\\
	 &\leq\sup_{\zeta\in \mathcal{Z}}\|S_n(\zeta)-S(\zeta)\|+\|S(\hat{\zeta}_n)-S(\zeta^0)\|\nonumber\\
	&=o_p(1).
	\end{align}
	Moreover, by the consistency of $\hat{\zeta}_n$, Lemma \ref{lems_old} and equation \eqref{p9} imply that
	 $$\frac{\partial\bar{g}_n(\hat{\zeta}_n)}{\partial\zeta'}\Lambda_n\overset{p}{\rightarrow}M,~~\Lambda_n'\frac{\partial\bar{g}_n(\hat{\zeta}_n)'}{\partial\zeta}S_n(\hat{\zeta}_n)^{-1}\frac{\partial\bar{g}_n(\zeta^*_n)}{\partial\zeta'}\Lambda_n\overset{p}{\rightarrow}M'S^{-1}M.$$
	Because the $H\times p$ matrix $M$ is full column rank under \textbf{Assumption 5(i)}, then the non-singularity of $S$ and the rank condition of $M$ imply that $\Lambda_n'\frac{\partial\bar{g}_n(\hat{\zeta}_n)'}{\partial\zeta}S_n(\hat{\zeta}_n)^{-1}\frac{\partial\bar{g}_n(\zeta^*_n)}{\partial\zeta'}\Lambda_n$ is invertible for large enough $n$. Hence, from \eqref{p7} and $\Lambda_n'O_p(n^{-1/2})=O_p(\|\Lambda_n/\sqrt{n}\|)=o_p(1)$, we obtain
	\begin{align}\label{p8}
	&\sqrt{n}\Lambda_n^{-1}(\hat{\zeta}_n-\zeta^0)\nonumber\\
	 =&-\left[\Lambda_n'\frac{\partial\bar{g}_n(\hat{\zeta}_n)'}{\partial\zeta}S_n(\hat{\zeta}_n)^{-1}\frac{\partial\bar{g}_n(\zeta^*_n)}{\partial\zeta'}\Lambda_n\right]^{-1}\Lambda_n'\frac{\partial\bar{g}_n(\hat{\zeta}_n)'}{\partial\zeta}S_n(\hat{\zeta}_n)^{-1}\sqrt{n}\bar{g}_n(\zeta^0)+o_p(1).
	\end{align}
	Therefore, based on \eqref{p9}, \eqref{p8} and the asymptotic normality of $\sqrt{n}\bar{g}_n(\zeta^0)$ from Lemma \ref{lem_uniform}, the desired results follow.
\end{proof}

\begin{proof}[Proof of Theorem \ref{theorem3}]

Recalling the definition of $\hat\theta_n^\delta$ in equation \eqref{eq:distort}, a mean value expansion of $\bar{g}_n(\hat{\theta}^\delta_n)$ yields, for $A_n:=R\Lambda_n$ with $R$ defined in \eqref{eq:basis},
	\begin{flalign}
	 \sqrt{n}\bar{g}_{n}(\hat{\theta}^{\delta}_{n})&=\sqrt{n}\bar{g}_{n}(\theta^0)+\frac{1}{\sqrt{n}}\sum_{i=1}^{n}\frac{\partial {g}_{i}(\theta^*_n)}{\partial\theta'}(\hat{\theta}_n^\delta-\theta^0)\nonumber\\&=\sqrt{n}\bar{g}_{n}(\theta^0)+\frac{\partial \bar{g}_{n}(\theta^*_n)}{\partial\theta'}A_{n}\sqrt{n}A_{n}^{-1}(\hat{\theta}_{n}-\theta^0)+\sqrt{n}\frac{\partial \bar{g}_{n}(\theta^*_n)}{\partial\theta'}\begin{pmatrix}
	 \Delta_{1n}\\\mathbf{0}_{p-2-k_x}
	 \end{pmatrix}\label{a5},
	\end{flalign}
	where $\theta^*_n$ is component-by-component between $\hat{\theta}_{n}$ and $\theta^0$ and where $\Delta_{1n}=(\delta_n,-\delta_n,\delta_n\hat\pi_n')'$. We now analyze each of the terms in \eqref{a5}.
	
	For the first term in \eqref{a5}, by Lemma \ref{lem_uniform}, $\sqrt{n}\bar{g}_{n}(\theta^0)=O_p(1)$. For the second term, 	recall the rotated parameter $\zeta=(\eta',\theta_2')'$, where $\zeta:=R^{-1}\theta$, as defined in \eqref{eq:basis}. Under the alternative hypothesis, $\|\hat{\theta}_n-\theta^0\|=o_p(1)$ (by \textbf{Proposition \ref{lem:cons}}), which by  \eqref{eq:basis} also implies $\|\zeta^*_n-\zeta^0\|=o_p(1)$. Then, it follows that
	\begin{flalign*}
	\frac{\partial \bar{g}_{n}(\theta^*_n)}{\partial\theta'}A_{n}\sqrt{n}A_{n}^{-1}(\hat{\theta}_{n}-\theta^0)&=\frac{\partial \bar{g}_n(R\zeta^*_n)}{\partial\zeta}\Lambda_n\sqrt{n}\Lambda_{n}^{-1}\left(\hat\zeta_n-\zeta^0\right)
	=M\cdot O_{p}(1)+o_p(1)=O_{p}(1),
	\end{flalign*}where the second equality follows from Lemma \ref{lems_old} and \ref{lems:order}, in particular $\sqrt{n}\Lambda_n^{-1}(\hat\zeta_n-\zeta^0)=O_p(1)$, and the third
from the fact that $M$ is full column rank. Therefore, the second term in \eqref{a5} is $O_{p}(1)$.

Now, focus on the last term in \eqref{a5}. From $\Delta_{1n}=(\delta_n,-\delta_n,\delta_n\hat\pi_n')'$, and for $\hat{R}_n$ denoting the matrix $R$ with $\pi^0$ replaced by $\hat\pi_n$, we have that $$\begin{pmatrix}\Delta_{1n}\\\mathbf{0}_{p-2-k_x}\end{pmatrix}=\hat{R}_n\begin{pmatrix}\delta_n\\\mathbf{0}_{p-1}\end{pmatrix},$$ so that we can write
	$$
	\sqrt{n}\frac{\partial \bar{g}_{n}(\theta^*_n)}{\partial\theta'}\begin{pmatrix}\Delta_{1n}\\\mathbf{0}_{p-2-k_x}	
	\end{pmatrix}=\sqrt{n}\frac{\partial\bar{g}_{n}({\theta}^\star_n)}{\partial\theta'}\hat{R}\begin{pmatrix}\delta_n\\\textbf{0}_{p-1}\end{pmatrix}.
	$$
	Then, 
	\begin{flalign}
	\sqrt{n}\frac{\partial\bar{g}_{n}({\theta}^\star_n)}{\partial\theta'}\hat{R}\begin{pmatrix}\delta_n\\\textbf{0}_{p-1}\end{pmatrix}&=	\sqrt{n}\frac{\partial\bar{g}_{n}({\theta}^\star_n)}{\partial\theta'}{R}\begin{pmatrix}\delta_n\\\textbf{0}_{p-1}\end{pmatrix}+	\frac{\partial\bar{g}_{n}({\theta}^\star_n)}{\partial\theta'}\sqrt{n}\left(\hat{R}-R\right)\begin{pmatrix}\delta_n\\\textbf{0}_{p-1}\end{pmatrix}\nonumber\\&=\sqrt{n}\frac{\partial\bar{g}_{n}({\theta}^\star_n)}{\partial\theta'}{R}\begin{pmatrix}\delta_n\\\textbf{0}_{p-1}\end{pmatrix}+O_p(\delta_n)\nonumber\\&=\frac{\partial\bar{g}_{n}(R{\zeta}^\star_n)}{\partial\zeta'}\Lambda_n\sqrt{n}\Lambda_n^{-1}\begin{pmatrix}\delta_n\\\textbf{0}_{p-1}\end{pmatrix}+O_p(\delta_n)\nonumber\\&=\begin{pmatrix}V^0(\eta^0)\delta_n\left\{\sqrt{n}/\varsigma_{n}\right\}\\\mathbf{0}_{p-1}	
	\end{pmatrix}+o_{p}(1),\label{last}
	\end{flalign}
	where the second line follows by Lemma \ref{lems:order}, which implies $\sqrt{n}(\hat{R}-R)=O_p(1)$, and the convergence of the sample Jacobian in Lemma \ref{lems_old}, the third line from rewriting terms; the fourth from Lemma \ref{lems_old}; and the last line follows from Lemma \ref{lems_old} and the fact that $M$ is full rank (Lemma \ref{local-lem}). Applying these order results for the three terms in \eqref{a5}, we obtain  $$\sqrt{n}\bar{g}_{n}(\hat{\theta}^{\delta}_{n})=O_p(1)+\begin{pmatrix}V^0(\eta^0)\delta_n\left\{\sqrt{n}/\varsigma_{n}\right\}\\\mathbf{0}_{p-1}	
	\end{pmatrix}+o_{p}(1).$$ Since $\|V^0(\eta^0)\|>0$ by \textbf{Assumption 5}, conclude that $\sqrt{n}\bar{g}_{n}(\hat{\theta}^{\delta}_{n})$ diverges if $\{\sqrt{n}/\varsigma_{n}\}\delta_n\rightarrow\infty$.

	Using the above result, we can now show that  $J_n^\delta$ diverges under the alternative. From the proof of Lemma \ref{lem_uniform},
	\begin{align}\label{FCLT}
	 n^{1/2}\{\bar{g}_n(\theta)-\mathbb{E}_n[\bar{g}_n(\theta)]\}\Rightarrow\nu(\theta),
	\end{align}
	where $\nu(\theta)$ is a Gaussian stochastic process on $\Theta$ with mean-zero and bounded covariance kernel $S(\theta,\theta)$. Since $\hat{\theta}_n^\delta\overset{p}{\rightarrow}\theta^0$ under \textbf{Assumption 5}, the uniform convergence \eqref{FCLT} indicates that the sample covariance matrix satisfies $S_n(\hat{\theta}_n^\delta)\overset{p}{\rightarrow}S(\theta^0)$. Thus, for $n$ large enough, $S_n(\hat{\theta}_n^\delta)$ is positive-definite with bounded maximal eigenvalue. Therefore,
	\begin{align}\label{a6}
	J^\delta_n\geq \lambda_{\text{min}}\left[S^{-1}_n(\hat{\theta}_n^\delta)\right]\left\|\sqrt{n}\bar{g}_{n}(\hat{\theta}^{\delta}_{n})\right\|^2,
	\end{align}
	where $\lambda_{\text{min}}\left[S^{-1}_n(\hat{\theta}_n^\delta)\right]>0$ for large enough $n$. Thus, $\{\sqrt{n}/\varsigma_{n}\}\delta_n\rightarrow\infty$ implies $\underset{{n\rightarrow\infty}}{\text{plim}}\;J_n^\delta\rightarrow\infty$.
\end{proof}

\subsection{Table and Figures}
\begin{table}[H]
	\begin{center}
		\caption{Estimation and Rejection Rates under $\lambda=0.5$ (Significant Level $5\%$, $\rho=0.50$)}\label{tab:tab1}
		\begin{tabular}{clccccc}
			\hline
			\hline
			&& $\sigma_z=1$ & $\sigma_z=1$ & $\sigma_z=1$& $\sigma_z=0.2$ & $\sigma_z=10$ \\
			&& $\sigma_v=0.2$ & $\sigma_v=10$ & $\sigma_v=1$&$\sigma_v=1$ & $\sigma_v=1$ \\
			\hline
			\multirow{11}{*}{n=500}& bias& 0.690& -0.045  &-0.050& -0.058& -0.048 \\
			& s.d. & 4.982& 0.627&1.307& 1.455& 1.501 \\
			& rrmse  & 5.027 & 0.628&1.308& 1.456 & 1.501\\
			& Wald size distortion (2SCML)& -0.003& -0.004&-0.003& -0.004 & 0.000\\
			& Wald size distortion (CUGMM)& -0.026&-0.036&-0.037& -0.031 & -0.031\\
			& SS& 0.061 & 0.056&0.061& 0.060 & 0.063\\
			& SY ($5\%$) & 0.007& 0.005&0.009& 0.008  & 0.004\\
			& SY ($10\%$)& 0.091& 0.085&0.090& 0.076  & 0.080\\
			& Robust ($5\%$)& 0.000  & 0.000&0.000& 0.000& 0.000 \\
			& Robust ($10\%$)& 0.000& 0.000&0.000& 0.001 & 0.000\\
			& DJ & 0.018& 0.022&0.016 & 0.010& 0.017\\
			\hline
			\multirow{11}{*}{n=5000}& bias& 0.550& -0.128&0.002& -0.076 & 0.124\\
			& s.d.  & 4.526& 0.301&1.078& 1.091 & 1.260\\
			& rrmse & 4.557& 0.327&1.078 & 1.093& 1.266\\
			& Wald size distortion (2SCML)  & -0.005& -0.023&-0.009& -0.016& 0.017\\
			& Wald size distortion (CUGMM) & -0.030 & -0.047&-0.033 & -0.040& -0.023\\
			& SS & 0.099& 0.069&0.057& 0.070& 0.085\\
			& SY ($5\%$) & 0.015  & 0.008&0.009 & 0.010  & 0.003\\
			& SY ($10\%$) & 0.132 & 0.088&0.095& 0.091 & 0.119\\
			& Robust ($5\%$)& 0.000& 0.000&0.000 & 0.000 & 0.000\\
			& Robust ($10\%$)& 0.001& 0.002&0.001& 0.001 & 0.000\\
			& DJ  & 0.013& 0.025&0.012 & 0.013 & 0.022\\
			\hline
			\multirow{11}{*}{n=10000}& bias& 0.581& -0.103&0.046& -0.002 & 0.130\\
			& s.d. & 4.354& 0.266&1.050& 0.993& 1.191\\
			& rrmse& 4.391& 0.285&1.051& 0.992& 1.197\\
			& Wald size distortion (2SCML)& 0.007& -0.016&0.001& -0.004 & 0.022\\
			& Wald size distortion (CUGMM)&-0.026& -0.047&-0.030& -0.032 & -0.026\\
			& SS& 0.130& 0.072&0.091& 0.088& 0.103\\
			& SY ($5\%$) & 0.023& 0.012&0.016 & 0.006  & 0.008\\
			& SY ($10\%$)& 0.174& 0.098&0.129& 0.116& 0.151\\
			& Robust ($5\%$)& 0.000& 0.000&0.000& 0.000& 0.000\\
			& Robust ($10\%$)& 0.001& 0.000&0.001& 0.000 & 0.001\\
			& DJ& 0.013& 0.019&0.019 & 0.010 & 0.018 \cr
			\hline
			\hline
		\end{tabular}
	\end{center}
	\footnotesize
	Note: (a) SS rejects the null if $F_n>10$. SY ($5\%$) and SY ($10\%$) reject the null if the Cragg-Donald statistic is larger than the critical value of a maximal $5\%$ and $10\%$ size distortion of a 5$\%$ Wald test, respectively.\\
	(b) For the Robust ($5\%$) and Robust ($10\%$) tests, reject rates are computed based on critical values in Table 1 of \citet{olea2013robust}, corresponding to the effective degree of freedom one and tolerance thresholds $5\%$ and $10\%$, respectively, where the tolerance is the fraction that the Nagar bias relative to the benchmark.\\
	(c) The reject rates of DJ test are computed based on perturbation $\hat{\tilde{\rho}}/\log\{\log(n)\}$ and critical value $\chi^2_{0.95}(2)=5.99$.
\end{table}

\begin{table}[H]
	\begin{center}
		\caption{Estimation and Rejection Rates under $\lambda=0.5$ (Significant Level $5\%$, $\rho=0.95$)}\label{tab:tab2}
		\begin{tabular}{clccccc}
			\hline
			\hline
			&&$\sigma_z=1$ & $\sigma_z=1$&$\sigma_z=1$ & $\sigma_z=0.2$ & $\sigma_z=10$ \\
			&&$\sigma_v=0.2$ & $\sigma_v=10$ & $\sigma_v=1$&$\sigma_v=1$ & $\sigma_v=1$ \\
			\hline
			\multirow{11}{*}{n=500}& bias& 2.422 & -0.023&-0.117  & -0.045 & 0.008 \\
			& s.d.   & 10.316 & 0.758&2.866 & 3.145 & 2.883                    \\
			& rrmse  & 10.591 & 0.758&2.867& 3.144  & 2.881                    \\
			& Wald size distortion (2SCML)& 0.168  & 0.003 &0.110& 0.128 & 0.126 \\
			& Wald size distortion (CUGMM) & 0.110& -0.022&0.073& 0.074 & 0.090 \\
			& SS & 0.072& 0.049&0.053 & 0.062 & 0.061\\
			& SY ($5\%$)& 0.006& 0.004&0.004& 0.010  & 0.007   \\
			& SY ($10\%$) & 0.105 & 0.066&0.076& 0.088 & 0.088                    \\
			& Robust ($5\%$)& 0.000 & 0.000&0.000 & 0.000& 0.000                    \\
			& Robust ($10\%$)& 0.000 & 0.000&0.000 & 0.000  & 0.000 \\
			& DJ& 0.040 & 0.044 & 0.039&0.042 & 0.047                    \\
			\hline
			\multirow{11}{*}{n=5000}& bias& 3.480 & -0.072 &0.304& 0.170  & 0.405\\
			& s.d. & 8.506& 0.444 &2.232& 2.119 & 2.259                    \\
			& rrmse& 9.187 & 0.449&2.251& 2.124 & 2.294                    \\
			& Wald size distortion (2SCML)& 0.236 & 0.014&0.151 & 0.121 & 0.156 \\
			& Wald size distortion (CUGMM)& 0.158 & -0.012&0.091 & 0.076 & 0.102 \\
			& SS& 0.113 & 0.050&0.076 & 0.063 & 0.087 \\
			& SY ($5\%$)& 0.013& 0.005&0.012 & 0.009 & 0.007 \\
			& SY ($10\%$)& 0.158& 0.068&0.099 & 0.085 & 0.120    \\
			& Robust ($5\%$)& 0.000& 0.000&0.000& 0.000& 0.000                    \\
			& Robust ($10\%$) & 0.001 & 0.000&0.000 & 0.000 & 0.000 \\
			& DJ  & 0.034 & 0.017 &0.026 & 0.034  & 0.031  \\
			\hline
			\multirow{11}{*}{n=10000}& bias & 3.329& -0.073&0.533& 0.472 & 0.677 \\
			& s.d.& 8.826& 0.459 &2.167& 1.915  & 1.988                    \\
			& rrmse & 9.429& 0.465&2.230 & 1.971  & 2.099                    \\
			& Wald size distortion (2SCML)  & 0.271& 0.019&0.164& 0.140& 0.177 \\
			& Wald size distortion (CUGMM) & 0.171 & -0.008&0.112 & 0.094 & 0.122 \\
			& SS & 0.138 & 0.047&0.079 & 0.077& 0.090 \\
			& SY ($5\%$)& 0.016 & 0.004&0.006 & 0.008 & 0.008                    \\
			& SY ($10\%$)& 0.185 & 0.074&0.106& 0.102  & 0.116                    \\
			& Robust ($5\%$)& 0.000 & 0.000 & 0.000& 0.000  & 0.000       \\
			& Robust ($10\%$) & 0.000 & 0.000& 0.000& 0.000  & 0.001        \\
			& DJ & 0.027& 0.013&0.031 & 0.021  & 0.031\cr
			\hline
			\hline
		\end{tabular}
	\end{center}
	\footnotesize
	Note: (a) SS rejects the null if $F_n>10$. SY ($5\%$) and SY ($10\%$) reject the null if the Cragg-Donald statistic is larger than the critical value of a maximal $5\%$ and $10\%$ size distortion of a 5$\%$ Wald test, respectively.\\
	(b) For the Robust ($5\%$) and Robust ($10\%$) tests, reject rates are computed based on critical values in Table 1 of \citet{olea2013robust}, corresponding to the effective degree of freedom one and tolerance thresholds $5\%$ and $10\%$, respectively, where the tolerance is the fraction that the Nagar bias relative to the benchmark.\\
	(c) The reject rates of DJ test are computed based on perturbation $\hat{\tilde{\rho}}/\log\{\log(n)\}$ and critical value $\chi^2_{0.95}(2)=5.99$.
\end{table}

\begin{table}[H]
	\centering
	\begin{threeparttable}
		\caption{Data Summary of Married Women LFP (Obs. 753)}\label{tab:tab5}
		\begin{tabular}{lcccc}
			\hline
			\hline
			&Mean&Std. Dev.&Min&Max\cr
			\hline
			LFP&0.57& 0.50&	0	&1\cr
			Education&12.29&    2.28&5	&	17\cr
			Father educ.&8.81&3.57&	0&17\cr
			Mother educ.&9.25&3.37&0&17\cr		
			Experience& 10.63&8.07&0&45\cr
			Exper. square&178.04& 249.63&	0&2025\cr
			Nonwife income ($\$1000$)&20.13&11.64&-0.029&96\cr
			Age&42.54&8.07&	30&60\cr
			$\#$ Kids $<$ 6 years old&0.24&0.52&	0&3\cr
			$\#$ Kids $>$ 6 years old&1.35&1.32&	0&8\cr
			\hline
			\hline
		\end{tabular}
		\begin{tablenotes}
			\small
			\item Note: Education, father/mother education and experience are measured in years.
		\end{tablenotes}
	\end{threeparttable}
\end{table}

\begin{table}[H]
	\centering
	\begin{threeparttable}
		\caption{Tests of Weak Instruments (Significance level $5\%$)}\label{tab:tab6}
		\begin{tabular}{lcccccc}
			\hline
			\hline
			&SS&SY ($5\%$)&SY ($10\%$)&Robust ($5\%$)&Robust($10\%$)&DJ (min \& max)\cr
			\hline
			Statistic&81.89&81.89&81.89&91.44&91.44&0.14 \& 17.44\cr
			Critical value&10&19.93&11.59&8.58&6.17&11.98\cr
			Reject $H_0$& Reject&Reject&Reject&Reject&Reject&Reject\cr
			\hline
			\hline
		\end{tabular}
		\footnotesize
		Note: (a) SS and SY test statistics 81.89 are Kleibergen-Paap $F$-stat, which is heteroskedastic-robust. When assuming homoskedastic standard error, the reduced form $F$-statistic and the Cragg-Donald $F$-stat is 95.70. SS critical value 10 is the rule-of-thumb. SY (5$\%$) and SY (10$\%$) critical values 19.93 and 11.59 are for i,i.d. errors, the maximal desired size distortions 5$\%$ and 10$\%$ of a 5\% Wald test, respectively.\\
		(b) Robust test statistics and critical values are computed using Stata command "weakivtest" (\citet{pflueger2014robust}) based on heteroskedastic-robust s.e. Robust (5\%) and Robust (10\%) critical values 8.58 and 6.17 are for 2SLS with 5$\%$ and 10$\%$ tolerance of the Nagar bias over benchmark, respectively. The estimated effective degrees of freedom with the tolerance \{5\%,10\%\} are 1.82 and 1.84.\\
		(c) The perturbation of DJ test is chosen using the approach in Section \ref{sec:3.3}. The critical value is $\chi^2_{1-0.05/20}(H-p+1)=11.98$.
	\end{threeparttable}
\end{table}

\begin{table}[H]
	\centering
	\small
	\begin{threeparttable}
		\caption{Regression Results of Labor Force Participation (LFP)}\label{tab:tab7}
		\begin{tabular}{lcccccc}
			\hline
			\hline
			\multirow{3}{*}{}
			&\multicolumn{3}{c}{2SCML Probit}&\multicolumn{3}{c}{CUGMM}\cr
			\cmidrule(lr){2-4}\cmidrule(lr){5-7}
			&1st step&2nd step&margin&reduced form&structural eq.&margin\cr
			&(1)&(2)&(3)&(4)&(5)&(6)\cr
			\hline
			Dependent Var.&Education&LFP&&Education&LFP&\cr	
			\cr
			Education&& 0.1503***  & 0.0587***  && 0.1500***  & 0.0587***  \\
			&           & (0.0539)   & (0.0211)   &         & (0.0538)  & (0.0211)  \\
			Experience& 0.0930*** & 0.1213***  & 0.0474***& 0.0929***  & 0.1208***  & 0.0472***  \\
			& (0.0251)  & (0.0194)   & (0.0076)   & (0.0249)  & (0.0195)  & (0.0076)  \\
			Exper. square & -0.0016*& -0.0018*** & -0.0007*** & -0.0016* & -0.0018***&-0.0007*** \\
			& (0.0009)  & (0.0006)   & (0.0002)   & (0.0009)  & (0.0006)  & (0.0002)  \\
			Nonwife income (\$1000)& 0.0452*** & -0.0132**  & -0.0052**  & 0.0453***& -0.0139** & -0.0054** \\
			& (0.0071)  & (0.0061)   & (0.0024)   & (0.0070)  & (0.0061)  & (0.0024)  \\
			Age       & -0.0217** & -0.0518*** & -0.0202*** & -0.0218** & -0.0514*** &-0.0201*** \\
			& (0.0109)  & (0.0087)   & (0.0034)   & (0.0109)  & (0.0088)  & (0.0034)  \\
			\# Kids \textless 6 years old    & 0.2268    & -0.8733*** & -0.3411*** & 0.2268  & -0.8727*** & -0.3412*** \\
			& (0.1570)  & (0.1176)   & (0.0462)   & (0.1561)  & (0.1210)  & (0.0476)  \\
			\# Kids \textgreater 6 years old & -0.0934*  & 0.0395     & 0.0154     & -0.0933* & 0.0396  & 0.0155  \\
			& (0.0554)  & (0.0459)   & (0.0179)   & (0.0551)  & (0.0475)  & (0.0186)  \\
			Father educ. & 0.1552*** &            &            & 0.1551***  &         &         \\
			& (0.0237)  &            &            & (0.0236)  &         &         \\
			Mother educ. & 0.1721*** &            &            & 0.1724***  &         &         \\
			& (0.0252)  &            &            & (0.0250)  &         &   \cr     	
			\hline
			Correlation $\rho$&&-0.0453&&&-0.0453\cr	
			&&(0.1105)&&&(0.1102)\cr	
			$J$-statistic&--&--&--&--&0.122\cr
			Obs.&753&753&753&753&753&753\cr
			\hline
			\hline
		\end{tabular}
		\begin{tablenotes}
			\footnotesize
			\item[Note: (a)] Standard errors (s.e.) in parentheses. Significance *** p$<$0.01, ** p$<$0.05, * p$<$0.1. The s.e. in columns (1)-(3) are heteroskedastic-robust. The s.e. in columns (4)-(6) are computed based on Theorem \ref{theoremN}. According to \citet{antoine2017testing}, when DJ rejects the null, standard inference procedures still work for all practical purpose.
			\item[(b)] For CUGMM estimation, overidentification degree is one. Hansen's $J$-statistic 0.122 is less than $\chi^2_{0.95}(1)=3.84$. Overidentification test fails to reject the null hypothesis that moments are all valid.
			\item[(c)] Correlation $\rho$ is the correlation of errors $(u_i,v_i)$ in structural equation and reduced form.
			\item[(d)] Margins in columns (3) and (6) are computed using the sample average of explanatory variables and IVs.
		\end{tablenotes}
	\end{threeparttable}
\end{table}

\begin{table} [H]
	\centering
	\caption{Data Summary of US Food Aid and Civil Conflict}\label{tab:tab8}
	\begin{subtable}[c]{1.0\textwidth}
		\centering
		\caption{Civil Conflict Onset (obs. 1454) \label{tab:tab8a}}
		\begin{tabular}{lcccc}
			\hline
			\hline
			&Mean&Std. Dev.&Min&Max\cr
			\hline
			Onset of intra-state conflict&0.063&0.244&0&1\cr
			US wheat aid (1000 metric tons)&21.08&59.42&0&791.60\cr
			Lagged US wheat production (1000 metric tons)&59187&8754&36787&75813\cr
			Average US food aid probability 1971-2006&0.387&0.328&0&1\cr
			Peace duration (years)&11.59&9.48&1&46\cr
			Instrument&22936&19924&0&75813\cr
			\hline
			\hline
		\end{tabular}
		\vspace{3mm}
	\end{subtable}
	\quad%
	\begin{subtable}[c]{1.0\textwidth}
		\centering
		\caption{Civil Conflict Offset (obs. 709) \label{tab:tab8b}}
		\begin{tabular}{lcccc}
			\hline
			\hline
			&Mean&Std. Dev.&Min&Max\cr
			\hline
			Offset of intra-state conflict&0.185&0.388&0 &1\cr
			US wheat aid (1000 metric tons)&56.07&123.58& 0&854.7\cr
			Lagged US wheat production (1000 metric tons)&60374& 8626&36787&75813\cr
			Average US food aid probability 1971-2006& 0.503& 0.313& 0 &1\cr
			Conflict duration (years)&8.70&8.45& 1 & 42\cr
			Instrument& 30413&19676& 0&75813\cr
			\hline
			\hline
		\end{tabular}
	\end{subtable}
	\centering
	\begin{tablenotes}
		\footnotesize
		\item Note: An observation is a country and year. Instrument is lag of US wheat production times average probability of receiving any US food aid during 1971 to 2006.
	\end{tablenotes}
\end{table}

\begin{table}[H]
	\begin{center}
		\caption{Tests of Weak Instrument (Significance level $5\%$)}\vspace{-0.1cm}\label{tab:tab9}
		\begin{subtable}[c]{1.0\textwidth}
			\centering
			\caption{Civil Conflict Onset\label{tab:tab9a}}
			\begin{tabular}{lcccccc}
				\hline
				\hline
				&SS&SY ($5\%$)&SY ($10\%$)&Robust ($5\%$)&Robust ($10\%$)&DJ (min \& max)\cr
				\hline
				Statistic&26.07&26.07&26.07&26.39&26.39&0.57 \& 7.50\cr
				Critical value&10&16.38&8.96&37.42&23.11&11.98\cr
				Reject $H_0$& Reject&Reject&Reject&Not Reject&Reject&Not Reject\cr
				\hline
				\hline
			\end{tabular}
			\vspace{1mm}
		\end{subtable}
		\begin{subtable}[c]{1.0\textwidth}
			\centering
			\caption{Civil Conflict Offset\label{tab:tab9b}}
			\begin{tabular}{lcccccc}
				\hline
				\hline
				&SS&SY ($5\%$)&SY ($10\%$)&Robust ($5\%$)&Robust ($10\%$)&DJ (min \& max)\cr
				\hline
				Statistic&17.29&17.29&17.29&17.49&17.49&1.50 \& 9.46\cr
				Critical value&10&16.38&8.96&37.42&23.11&11.98\cr
				Reject $H_0$& Reject&Reject&Reject&Not Reject&Not Reject&Not Reject\cr
				\hline
				\hline
			\end{tabular}
			\vspace{-3mm}
		\end{subtable}
	\end{center}
	\footnotesize
	Note: (a) For both onset and offset data, SS and SY test statistics are Kleibergen-Paap $F$-stat (\citet{kleibergen2006generalized}) based on clustered s.e. by countries, to be consistent with \citet{nunn2014us}. SS critical value 10 is the rule-of-thumb. SY (5$\%$) and SY (10$\%$) critical values 16.38 and 8.96 are for i.i.d. errors, one endogenous regressor and one IV, desired maximal size distortion 5$\%$ and 10\% of a 5$\%$ Wald test.\\
	(b) Robust test statistics and critical values are computed using Stata command "weakivtest" (\citet{pflueger2014robust}) based on clustered s.e. by countries. For both onset and offset data, Robust (5$\%$) and Robust (10$\%$) critical values 37.42 and 23.11 are for 2SLS with 5\% and 10\% tolerance of the Nagar bias over benchmark, respectively. The estimated effective degrees of freedom with the tolerance $\{5\%,10\%\}$ are both 1.\\
	(c) For the offset data, the Robust test rejects weak IV when tolerance is larger than $20\%$.\\
	(d) The perturbation of DJ test is chosen based on the process in Section \ref{sec:3.3}. The critical value is $\chi^2_{1-0.05/20}(H-p+1)=11.98$.
\end{table}

\begin{table}[H]
	\footnotesize
	\centering
	\caption{Regression Results of US Food Aid and Civil Conflict}\label{tab:tab10}
	\begin{threeparttable}
		\subcaption{Civil Conflict Onset}
		\begin{tabular}{lccccccc}
			\hline
			\hline
			&Nunn $\&$ Qian&
			\multicolumn{3}{c}{2SCML Probit}&\multicolumn{3}{c}{CU-GMM}\cr
			&(2014)&&&&&\cr
			\cmidrule(lr){2-2}\cmidrule(lr){3-5}\cmidrule(lr){6-8}
			\multirow{2}{*}{}&margin&1st step&2nd step&margin&reduced form&structural eq.&margin\cr
			&(1)&(2)&(3)&(4)&(5)&(6)&(7)\cr
			\hline
			Dependent Var.&Onset&Wheat aid&Onset&Onset&Wheat aid&Onset&Onset\cr
			\cr
			Wheat aid&0.000064&&0.0011&0.000114&&-0.0013&-0.000123\cr
			&(0.00026)&&(0.0025)&(0.00027)&&(0.0028)&(0.00028)\cr
			Peace dur.&-0.018***&-1.66&-0.18***&-0.020***&-1.66&-0.1815***&-0.018***\cr
			&(0.0043)&(1.18)&(0.041)&(0.0046)&(1.21)&(0.045)&(0.0055)\cr
			Peace dur.$^{\wedge}2$&	 0.00087***&0.053&0.0087***&0.00093***&0.053&0.0085***&0.00082**\cr
			&(0.00028)&(0.066)&(0.0026)&(0.00029)&(0.072)&(0.0031)&(0.00032)\cr
			Peace dur.$^{\wedge}3$&-0.00001**&-0.00042&-0.00012***&-0.00001**&-0.00042&-0.00011&-0.00001*\cr
			&(0.00000)&(0.0011)&(0.00005)&(0.00001)&(0.0012)&(0.00014)&(0.00001)\cr
			Instrument&&0.0012***&&&0.0012***	\cr	
			&&(0.0002)&&&(0.0002)\cr	
			\hline
			Correlation $\rho$&&&-0.0837&&&0.3109**\cr
			&&&(0.1318)&&&(0.1408)\cr
			$J$-statistic&--&--&--&--&--&0.553&--\cr
			Obs.&1454&1454&1454&1454&1454&1454&1454\cr
			\hline
			\hline
		\end{tabular}
		\subcaption{Civil Conflict Offset}
		\centering
		\begin{tabular}{lccccccc}
			\hline
			\hline
			&Nunn $\&$ Qian&
			\multicolumn{3}{c}{2SCML Probit}&\multicolumn{3}{c}{CU-GMM}\cr
			&(2014)&&&&&\cr
			\cmidrule(lr){2-2}\cmidrule(lr){3-5}\cmidrule(lr){6-8}
			\multirow{2}{*}{}&margin&1st step&2nd step&margin&reduced form&structural eq.&margin\cr
			&(1)&(2)&(3)&(4)&(5)&(6)&(7)\cr
			\hline
			Dependent Var.&Offset&Wheat aid&Offset&Offset&Wheat aid&Offset&Offset\cr
			\cr
			Wheat aid&-0.000428*&&-0.0019*&-0.000446*&&-0.0013&-0.000302\cr
			&(0.00025)&&(0.0011)&(0.00026)&&(0.0021)&(0.00029)\cr
			Conflict dur.&-0.0619***&4.97&-0.2794***&-0.0653***&4.97&-0.2998***&-0.0687***\cr
			&(0.0117)&(4.65)&(0.0525)&(0.0125)&(4.34)&(0.0690)&(0.0132)\cr
			Conflict dur.$^{\wedge}2$&0.0037***&-0.406&0.0164***&0.0038***&-0.406&0.0184**&0.0042***\cr
			&(0.0010)&(0.288)&(0.0046)&(0.0011)&(0.371)&(0.0084)&(0.0011)\cr
			Conflict dur.$^{\wedge}3$&-0.0001***&0.007&-0.0003***&-0.0001***&0.007&-0.0003&-0.0001***\cr
			&(0.0000)&(0.005)&(0.0001)&(0.0000)&(0.009)&(0.0003)&(0.0000)\cr
			Instrument&&0.003***&&&0.003***	\cr	
			&&(0.0007)&&&(0.0006)	\cr	
			\hline
			Correlation $\rho$&&&0.1277&&&0.1768\cr
			&&&(0.1238)&&&(0.1585)\cr
			$J$-statistic&--&--&--&--&--&1.500&--\cr
			Obs.&709&709&709&709&709&709&709\cr
			\hline
			\hline
		\end{tabular}
		\begin{tablenotes}
			\footnotesize
			\item[Note: (a)] Standard errors (s.e.) in parentheses. Significance *** p$<$0.01, ** p$<$0.05, * p$<$0.1. For both panels (a) (b), the s.e. in column (1) is from \citet{nunn2014us}. The s.e. in columns (2)-(4) are clustered s.e. by countries, based on the 2SCML probit estimation. The s.e. in columns (5)-(7) are calculated by bootstrap with 1000 replications. Since DJ test fails to reject its null, implying standard inference procedures may no longer hold, we should be cautious of drawing any inference conclusions based on those s.e reported in the above tables.
			\item[(b)] For CU-GMM estimation, overidentification degree is one. Hansen's $J$-statistics are less than $\chi^2_{0.95}(1)=3.84$. Overidentification test fails to reject the null hypothesis that moments are all valid in both onset and offset cases.
			\item[(c)] Correlation $\rho$ is the correlation of errors $(u_i,v_i)$ in structural equation and reduced form.
			\item[(d)] Margins in columns (4) and (7) are computed based on sample average of explanatory variables and IVs.
		\end{tablenotes}
	\end{threeparttable}
\end{table}

\begin{sidewaysfigure}[p]
	\begin{center}
		\caption{Rejection Rates under $\lambda<0.5$ ($\rho=0.50$)}\label{fig:power0.5}
		\includegraphics[width=1.02\textwidth]{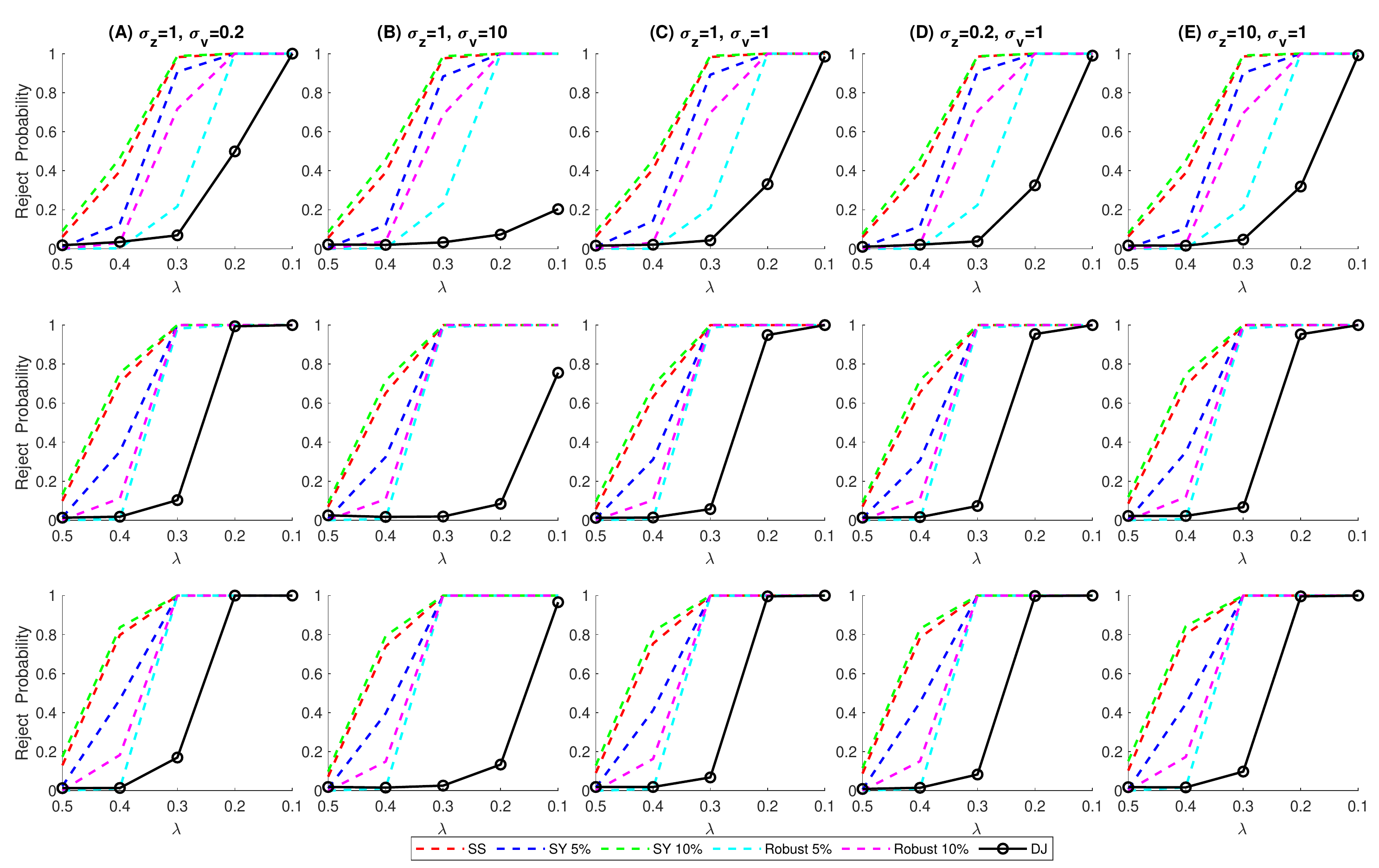}
	\end{center}
	\footnotesize
	Note: x-axis is IV strength $\lambda$. First row $n=500$, second row $n=5000$, third row $n=10000$. The reject rates are computed using critical value $\chi^2_{0.95}(2)=5.99$.
\end{sidewaysfigure}

\begin{sidewaysfigure}[p]
	\begin{center}
		\caption{Rejection Rates under $\lambda<0.5$ ($\rho=0.95$)}\label{fig:power0.95}
		\includegraphics[width=1.02\textwidth]{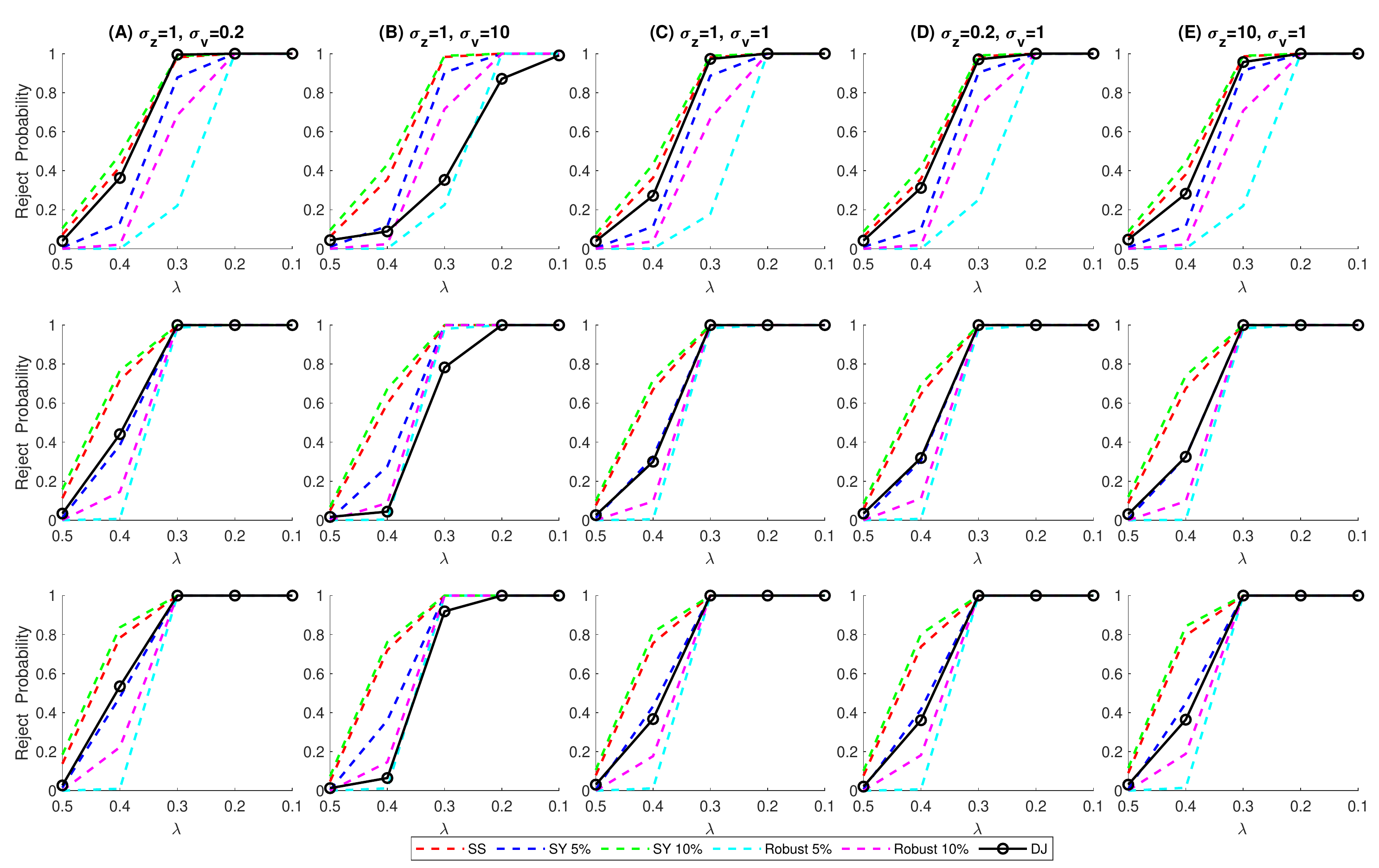}
	\end{center}
	\footnotesize
	Note: x-axis is IV strength $\lambda$. First row $n=500$, second row $n=5000$, third row $n=10000$. The reject rates are computed using critical value $\chi^2_{0.95}(2)=5.99$.
\end{sidewaysfigure}

\begin{sidewaysfigure}[p]
	\begin{center}
		\caption{Size Adjusted Rejection Rates under $\lambda<0.5$ ($\rho=0.50$)}\label{fig:adj_power0.5}
		\vspace{-1cm}
		\includegraphics[width=1.02\textwidth]{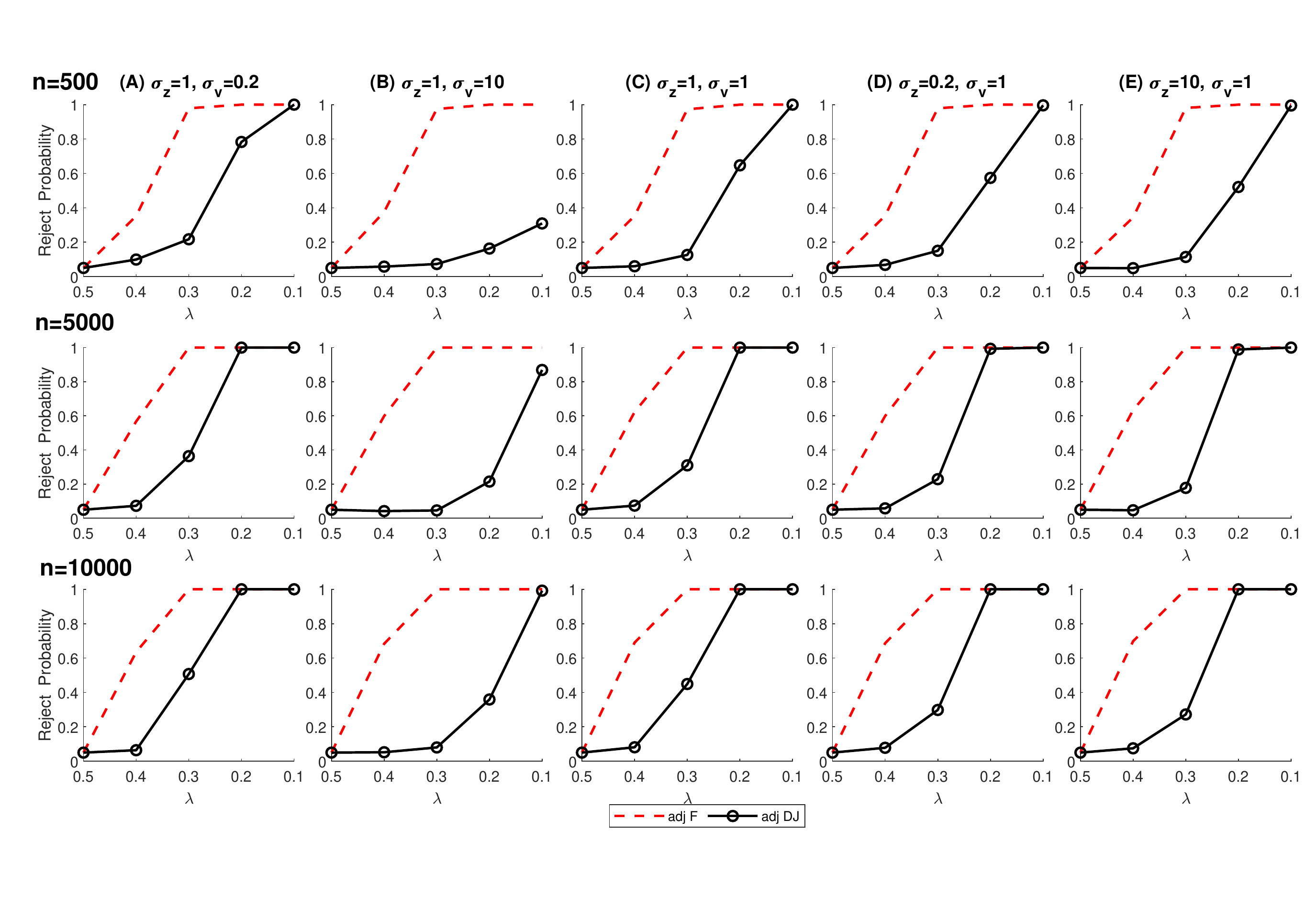}
	\end{center}
	\vspace{-1.5cm}
	\footnotesize
	Note: x-axis is IV strength $\lambda$. First row $n=500$, second row $n=5000$, third row $n=10000$. The test statistic of SS, SY and Robust under one endogenous regressor, one instrument and homoskedastic errors, are the same, i.e. the reduced form regression $F$-stat. The size adjusted power curve is therefore the same for SS, SY and Robust.
\end{sidewaysfigure}

\begin{sidewaysfigure}[p]
	\begin{center}
		\caption{Size Adjusted Rejection Rates under $\lambda<0.5$ ($\rho=0.95$)}\label{fig:adj_power0.95}
		\vspace{-1cm}
		\includegraphics[width=1.02\textwidth]{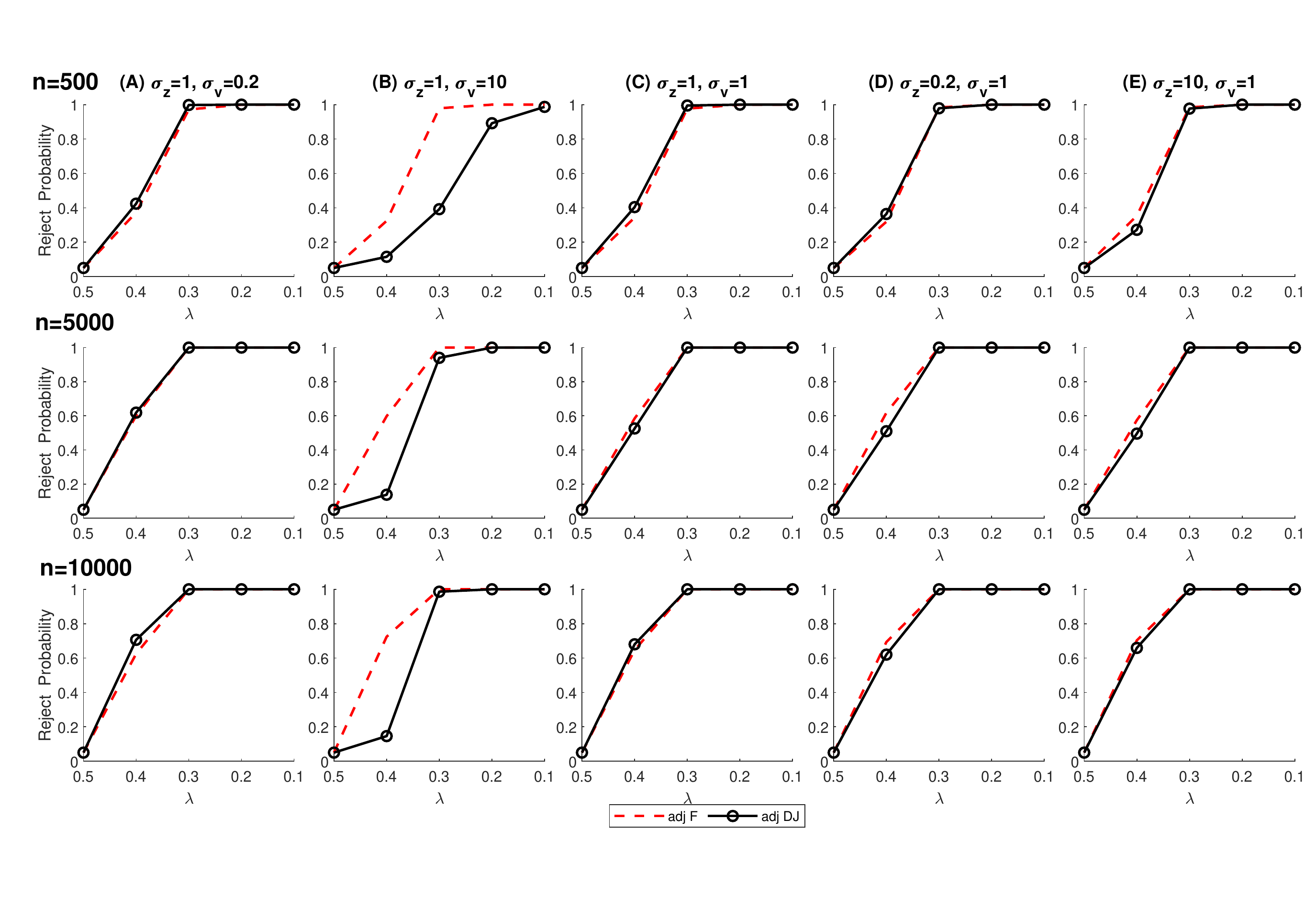}
	\end{center}
	\vspace{-1.5cm}
	\footnotesize
	Note: x-axis is IV strength $\lambda$. First row $n=500$, second row $n=5000$, third row $n=10000$. The test statistic of SS, SY and Robust under one endogenous regressor, one instrument and homoskedastic errors, are the same, i.e. the reduced form regression $F$-stat. The size adjusted power curve is therefore the same for SS, SY and Robust.
\end{sidewaysfigure}

\begin{sidewaysfigure}[p]
	\begin{center}
		\caption{Kernel Density of Standardized CUE for $\alpha$ ($n=10000,\rho=0.50$)}\label{fig:emp0.5}
		\vspace{-1cm}
		\includegraphics[width=1.02\textwidth]{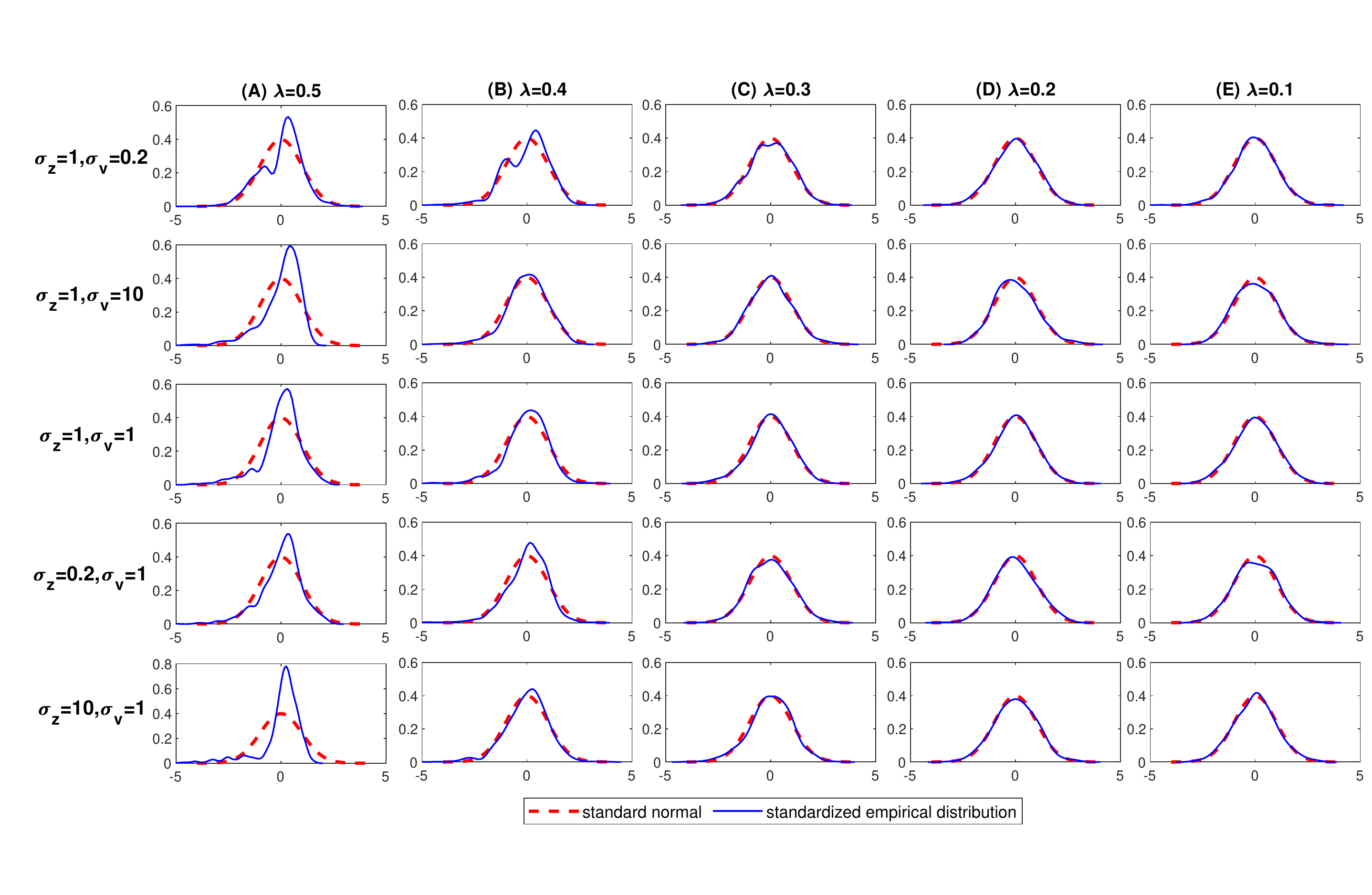}
	\end{center}
	\vspace{-1.5cm}
	\footnotesize
	Note: The standardized CUE for $\alpha$ is $(\hat{\alpha}-\bar{\hat{\alpha}})/s.d(\hat{\alpha})$, where $\bar{\widehat{\alpha}}=1/N\sum_{l=1}^N\widehat{\alpha}_{l}$, $\widehat{\alpha}_{l}$ stands for the $l$-th Monte Carlo CUGMM estimates, and $s.d(\hat{\alpha})$ is the standard deviation defined in \eqref{criterion1}.
\end{sidewaysfigure}

\begin{sidewaysfigure}[p]
	\begin{center}
		\caption{Kernel Density of Standardized CUE for $\alpha$ ($n=10000,\rho=0.95$)}\label{fig:emp0.95}
		\vspace{-1cm}
		\includegraphics[width=1.02\textwidth]{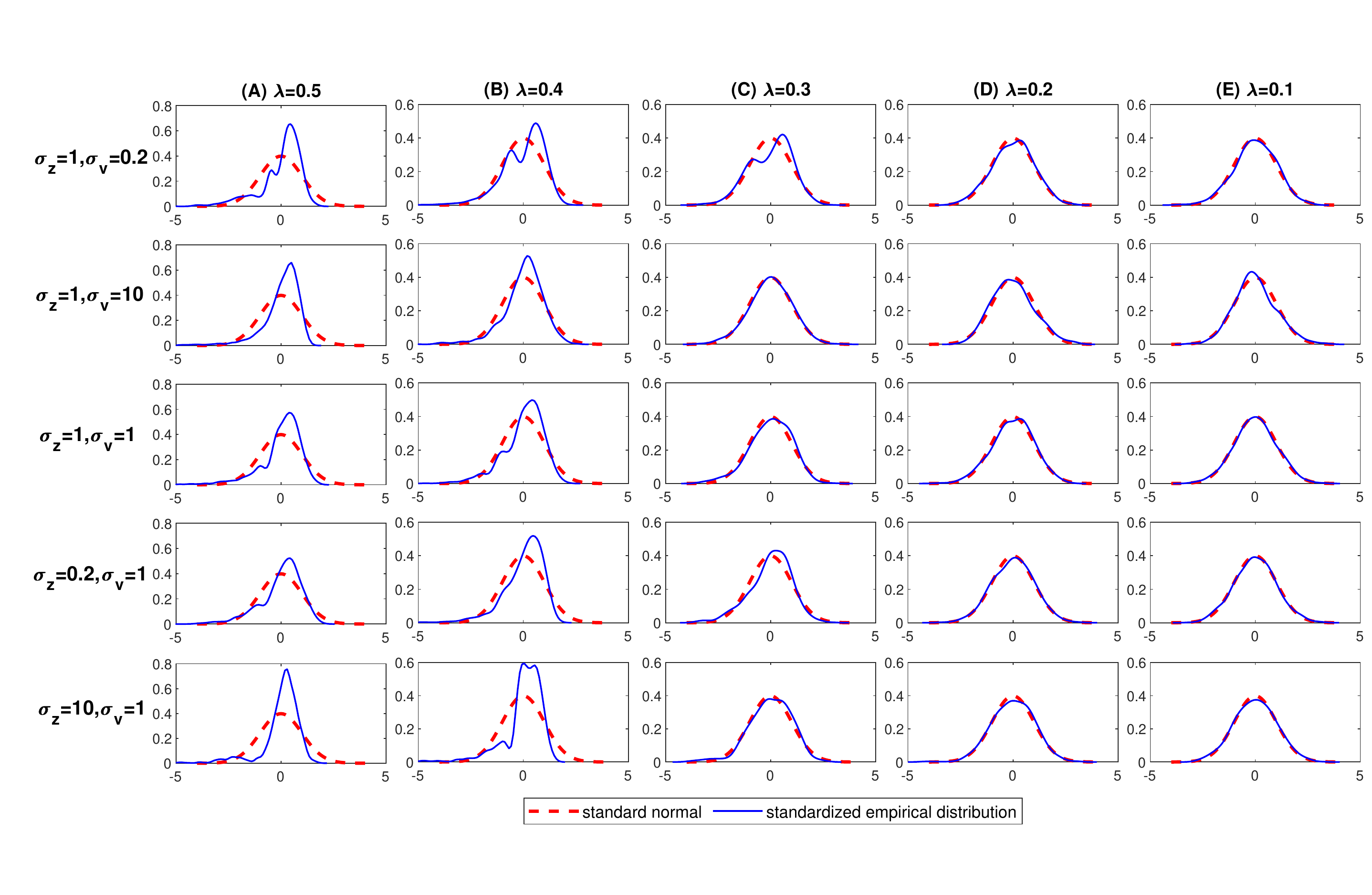}
	\end{center}
	\vspace{-1.5cm}
	\footnotesize
	Note: The standardized CUE for $\alpha$ is $(\hat{\alpha}-\bar{\hat{\alpha}})/s.d(\hat{\alpha})$, where $\bar{\widehat{\alpha}}=1/N\sum_{l=1}^N\widehat{\alpha}_{l}$, $\widehat{\alpha}_{l}$ stands for the $l$-th Monte Carlo CUGMM estimates, and $s.d(\hat{\alpha})$ is the standard deviation defined in \eqref{criterion1}.
\end{sidewaysfigure}

\end{document}